\theoremstyle{definition}
\newtheorem{proposition}{Proposition}
\newtheorem{algorithm}{Algorithm}
\newtheorem{assumption}{Assumption}
\title{\textbf{Multi-resolution filters via linear projection \\for large spatio-temporal datasets}}
\author[1\footnote{E-mail: \href{mailto:1hirano2@kanto-gakuin.ac.jp}{1hirano2@kanto-gakuin.ac.jp}}]{Toshihiro Hirano}
\author[2]{Tsunehiro Ishihara}
\affil[1]{
Kanto Gakuin University}
\affil[2]{
  Takasaki City University of Economics}
\date{}
\begin{document}
\setlength{\baselineskip}{18.5pt}

\maketitle

\begin{abstract}
  Advances in compact sensing devices mounted on satellites have facilitated the collection of large spatio-temporal datasets with coordinates. Since such datasets are often incomplete and noisy, it is useful to create the prediction surface of a spatial field. 
  To this end, we consider an online filtering inference by using the Kalman filter based on linear Gaussian state-space models. However, the Kalman filter is impractically time-consuming when the number of locations in spatio-temporal datasets is large. To address this problem, we propose a multi-resolution filter via linear projection (MRF-lp), a fast computation method for online filtering inference. In the MRF-lp, by carrying out a multi-resolution approximation via linear projection (MRA-lp), 
  the forecast covariance matrix can be approximated while capturing both the large- and small-scale spatial variations. As a result of this approximation, our proposed MRF-lp preserves a block-sparse structure of some matrices appearing in the MRF-lp through time, which leads to the scalability of this algorithm. Additionally, we discuss extensions of the MRF-lp to a nonlinear and non-Gaussian case. Simulation studies and real data analysis for total precipitable water vapor demonstrate that our proposed approach performs well compared with the related methods.

\par
\bigskip
\noindent
\textbf{Keywords:} Kalman filter, Large spatio-temporal datasets, Linear projection, 
Multi-resolution approximation, Spatio-temporal statistics, State-space models

\end{abstract}

\section{Introduction\label{sec_introduction}}

Recently, large-scale spatio-temporal datasets have been measured by compact sensing devices mounted on satellites. This type of remotely sensed data is often incomplete and noisy, and its statistical analysis, which allows for spatial and temporal dependence, would be beneficial in environmental science, traffic, and urban engineering. For example, creating a valid prediction surface for spatio-temporal datasets can be applied to forecasting an extreme weather phenomenon.

This study focuses primarily on a linear Gaussian state-space model given by observation and system model. 
Dynamical state-space models can flexibly model spatio-temporal datasets with missing values and/or noise and their spatio-temporal dynamics. For real-time or online filtering inference, we adopt the Kalman filter \citep{Kalman_1960} based on state-space models. Especially, the goal of this study is to obtain the filtering mean through the Kalman filter. However, the Kalman filter is computationally infeasible for spatio-temporal datasets with a large number of locations because it requires decomposing a large matrix at each time step.

The particle filter \citep[e.g.,][]{Gordon_1993, Kitagawa_1996} approximates the filtering and forecast distributions by some particles and is available for nonlinear and non-Gaussian state-space models. However, the particle filter often causes sample degeneracy when the dimension of the state vector is high \citep[e.g.,][]{Snyder_2008}. 
In the case of high dimension, the ensemble Kalman filter \citep[EnKF; e.g.,][]{Evensen_1994, Evensen_2009,Katzfuss_2016, Dong_2023}, which represents the filtering and forecast distributions by an ensemble, can be employed, but large ensemble sizes are often computationally infeasible in practice. 

In the case of large spatial datasets, there are numerous efficient statistical techniques. \cite{Heaton_2019} is a detailed review of these techniques. Furthermore, some techniques have recently been developed \citep[e.g.,][]{Katzfuss_2020d, Hirano_2021, Katzfuss_2021a, Peruzzi_2022}. \cite{Liu_2020} comprehensively reviews the state-of-the-art scalable Gaussian processes in machine learning literature. Some fast computation methods for large spatial datasets are extended to large spatio-temporal datasets. For instance, \cite{Datta_2016b}, \cite{Appel_2020} and \cite{Jurek_2021}, and \cite{Jurek_2022} are spatio-temporal versions of \cite{Datta_2016}, \cite{Katzfuss_2017}, and \cite{Katzfuss_2021a}, respectively. \cite{Peruzzi_2022} can also be applied to large spatio-temporal datasets. Many other computationally feasible methods for large spatio-temporal datasets have been proposed \citep[e.g.,][]{Wikle_1999, Wikle_2001, Johannesson_2007, Cressie_2010, Katzfuss_2011, Sigrist_2015, Bradley_2018, Elkhouly_2021, Chakraborty_2022}.

This paper proposes a multi-resolution filter via linear projection (MRF-lp) for large spatio-temporal datasets. This proposed method approximates the forecast covariance matrix in the Kalman filter by using a multi-resolution approximation via linear projection (MRA-lp) \citep[][]{Hirano_2021} which captures both the large- and small-scale spatial variations, unlike the covariance tapering \citep[e.g.,][]{Furrer_2006, Hirano_2013} and low rank approaches \citep[e.g.,][]{Banerjee_2008}. In addition, as a remarkable property, the MRF-lp preserves a block sparse structure of some matrices through time. 
Consequently, we can quickly calculate a good approximation of the filtering mean for large spatio-temporal datasets. 
Our proposed MRF-lp can be regarded as an extension of the MRA-lp to spatio-temporal datasets and a generalization of a multi-resolution filter (MRF) developed by \cite{Jurek_2021}. Additionally, unlike the MRF, the MRF-lp can empirically reduce ill-conditioning of certain matrices and increase numerical stability. 
The aforementioned desirable properties of the MRF-lp are inherited from the MRA-lp and MRF. 
Moreover, building upon the idea of \cite{Zilber_2021} and \cite{Jurek_2022}, we also discuss extensions of the MRF-lp to a nonlinear and non-Gaussian case where the system model is nonlinear, and the observation model is a distribution from an exponential family. The extensions are achieved by using the extended Kalman filter via the first-order Taylor expansion, the Laplace approximation, and the Newton--Raphson method. 

The rest of this paper is organized as follows. We review linear Gaussian state-space models and the Kalman filter for spatio-temporal datasets in Section \ref{sec_LGSSMs_and_Kalman_filter}. In Section \ref{sec_m_ra_lp_st_ssm}, we introduce the MRA-lp for spatio-temporal state-space models. Section \ref{sec_mrf_lp} describes our proposed MRF-lp. In Section \ref{sec_nonlinear_non-gaussian}, we discuss the extension to nonlinear and non-Gaussian state-space models. In Section \ref{sec_numerical_comparisons_illustrations}, we illustrate the results of the simulation studies and real data analysis. Finally, we conclude this study in Section \ref{sec_conclusion_future_studies}. The appendices contain the derivations of some algorithms, the technical proofs of the propositions, and the EnKF algorithm in the nonlinear and Gaussian case. In the supplementary material, we provide 
the derivations of the linear and nonlinear transformations in the system model and the detailed conditions in the simulation studies and real data analysis. 
To describe a physical evolution precisely and attain a sophisticated forecast, these transformations are derived from two kinds of partial differential equations in natural science.

\section{Linear Gaussian state-space models and the Kalman filter for spatio-temporal datasets\label{sec_LGSSMs_and_Kalman_filter}}

For $t = 1,\ldots, T$, we focus here on the following linear Gaussian state-space model
\begin{align}
 \bm{y}_t &= H_t \bm{x}_t + \bm{v}_t, \quad \bm{v}_t \sim \mathcal{N}_{n_t} \left( \bm{0}, R_t \right), \label{eq_observation_equation} \\
 \bm{x}_t &= A_t \bm{x}_{t-1} + \bm{w}_t, \quad \bm{w}_t \sim \mathcal{N}_{n_\mathcal{G}} \left( \bm{0}, Q_t \right), \label{eq_state_equation}
  \end{align}
where $\bm{x}_t= (x_{t1},\ldots,x_{t n_\mathcal{G}})^{\top}$ 
is an $n_\mathcal{G}$-dimensional state vector on a grid $\mathcal{G} = \{ \bm{g}_1, \ldots, \bm{g}_{n_\mathcal{G}} \}$ in a spatial domain $D_0$ at time $t$, and $\bm{x}_0 \sim \mathcal{N}_{n_\mathcal{G}} \left( \bm{\mu}_{0|0}, \Sigma_{0|0}  \right)$ is an initial state. $\bm{y}_t$ means an $n_t$-dimensional observation vector at time $t$ $(n_t \le n_\mathcal{G})$. 
$\bm{x}_0$, $\bm{v}_1, \ldots, \bm{v}_T, \bm{w}_1,\ldots,\bm{w}_T$ are independent, and $\Sigma_{0|0}$, $R_t$, and $Q_t$ 
are positive definite matrices. It is assumed that $R_t$, $Q_t$, $H_t$, $A_t$, $\bm{\mu}_{0|0}$, and $\Sigma_{0|0}$ are known. Moreover, based on \cite{Jurek_2021}, we assume that the transition matrix $A_t$ is sparse. This assumption is satisfied in several common cases, such as the discretization of the advection-diffusion equation (see Section \ref{suppl_advection_diffusion}) and the autoregressive (AR) process of order one on each grid point. 

In this study, we aim to obtain the mean vector of the filtering distribution $\bm{x}_t|\bm{y}_{1:t}$ $\left( \bm{y}_{1:t} = \left( \bm{y}_1^{\top},\ldots, \bm{y}_t^{\top} \right)^{\top}, \; t = 1,\ldots, T \right)$. 
By using the Kalman filter for \eqref{eq_observation_equation} and \eqref{eq_state_equation}, the filtering distributions for $t=1,\ldots, T$ are obtained as an online filtering inference. The Kalman filter starts from the forecast step with $\bm{\mu}_{0|0}$ and $\Sigma_{0|0}$ as the initial values and calculates the forecast distribution $p(\bm{x}_t|\bm{y}_{1:t-1}) = \mathcal{N}_{n_\mathcal{G}} \left( \bm{\mu}_{t|t-1}, \Sigma_{t|t-1} \right)$ and the filtering distribution 
$p(\bm{x}_t|\bm{y}_{1:t}) = \mathcal{N}_{n_\mathcal{G}} \left( \bm{\mu}_{t|t}, \Sigma_{t|t} \right)$ alternately.

\begin{algorithm}[Kalman filter]
  \label{alg_kf}
  Given $R_t$, $Q_t$, $H_t$, $A_t$, $\bm{\mu}_{0|0}$, and $\Sigma_{0|0}$, 
 find $\bm{\mu}_{t|t}$ and $\Sigma_{t|t}$ ($t=1,\ldots,T$). Set $t=1$.
  
  \bigskip
  \noindent
  \textit{Step} 1. Calculate $\bm{\mu}_{t|t-1} = A_t \bm{\mu}_{t-1|t-1}$ and $\Sigma_{t|t-1} = A_t \Sigma_{t-1|t-1} A_t^{\top} + Q_t$.
  
  \noindent
  \textit{Step} 2. Calculate $K_t = \Sigma_{t|t-1} H_t^{\top} \left( H_t \Sigma_{t|t-1} H_t^{\top} + R_t \right)^{-1}$. Next, we obtain $\bm{\mu}_{t|t} = \bm{\mu}_{t|t-1} + K_t \left( \bm{y}_t - H_t \bm{\mu}_{t|t-1} \right)$ and $\Sigma_{t|t} = \Sigma_{t|t-1} - K_t H_t \Sigma_{t|t-1}$. Set $t=t+1$. If $t \le T$, then go to Step 1. Otherwise, go to Step 3.
   
  \noindent
  \textit{Step} 3. Output $\bm{\mu}_{t|t}$ and $\Sigma_{t|t}$ ($t = 1,\ldots,T$).
  
  \end{algorithm}

In Algorithm \ref{alg_kf}, Steps 1 and 2 are the forecast and update steps, respectively. $K_t$ is called the Kalman gain. 
In the Kalman filter, $K_t$ includes the inverse matrix of the $n_t \times n_t$ matrix $H_t \Sigma_{t|t-1} H_t^{\top} + R_t$, and the operation count required for the inverse matrix calculation is proportional to the cube of its matrix size. Additionally, $K_t$ and $\Sigma_{t|t}$ require the product of matrices of sizes greater than or equal to $n_t$. Therefore, if the number of the observed locations, namely $n_t$, is large, calculating the filtering distribution in the Kalman filter at each time $t$ is computationally expensive. Since $A_t$ is the sparse matrix in this study, it is not burdensome to compute the forecast covariance matrix $\Sigma_{t|t-1}$.

\section{Multi-resolution approximation via linear projection for spatio-temporal state-space models
\label{sec_m_ra_lp_st_ssm}}

We approximate $\Sigma_{t|t-1}$ and $\Sigma_{0|0}$ by using the MRA-lp proposed by \cite{Hirano_2021}. This approximation leads to a fast computation algorithm of the Kalman filter in Section \ref{sec_mrf_lp}.

First, we define some notations in the MRA-lp in accordance with \cite{Hirano_2021}. Let $m$ ($m = 0,\ldots,M$) be a resolution. For $m = 0,\ldots,M$, $D_{j_1,\ldots,j_m}$ ($1 \le j_i \le J_i$, $2 \le J_i$, $i = 1,\ldots, M$) is a numbered subregion in $D_0$ at the $m$th resolution. Henceforth, for $m=0$, the index ($j_1 , \ldots, j_m$) and the index ($j_1, \ldots, j_m, a^{\prime}$) correspond to the index $0$ and the index $a^{\prime}$, respectively. Therefore, $D_{j_1,\ldots,j_m}$ for $m=0$ is $D_0$. In addition, $D_{j_1,\ldots,j_m}$ satisfies a domain partitioning rule
\begin{align*}
  D_{j_1,\ldots,j_m} = \bigcup_{j_{m+1} = 1, \ldots, J_{m+1}} D_{j_1,\ldots,j_m,j_{m+1}}, \quad 
  D_{j_1,\ldots,j_m,k} \cap D_{j_1,\ldots,j_m,l} = \emptyset,&\\
   \quad 1 \le k \neq l \le J_{m+1},&
\end{align*}
for $m=0,\ldots,M-1$. In the simulation studies and real data analysis, equal-area partitions are employed when the resolution increases. 
Let $I_{j_1,\ldots,j_m} = \{ i \mid \bm{g}_i \in D_{j_1,\ldots,j_m} \}$ denote a set of indices of grid points on $D_{j_1,\ldots,j_m}$. For $m=0$, $I_0 = \{ 1,\ldots,n_{\mathcal{G}} \}$. Suppose that the index $i$ of each grid point $\bm{g}_i$ is determined such that $\min \left( I_{j_1,\ldots,j_M} \right) > \max \left( I_{i_1,\ldots,i_M} \right)$ if $(j_1,\ldots,j_M) \succ_L (i_1,\ldots,i_M)$ where $\succ_L$ means lexicographic ordering. From the domain partitioning rule for $D_{j_1,\ldots,j_m}$, $I_{j_1,\ldots,j_{m+1}} \subset  I_{j_1,\ldots,j_m}$ ($m=0,\ldots,M-1$). 
Also, let $K_{j_1,\ldots,j_m}$ be a set of $r_{j_1,\ldots,j_m}$ indices of knots selected from $I_{j_1,\ldots,j_m}$ where $r_{j_1,\ldots,j_m} \le \left| I_{j_1,\ldots,j_m} \right|$ and $| \cdot |$ means the size of the set. We choose the knots randomly from the grid points on $D_{j_1,\ldots,j_m}$. Moreover, the grid points are not selected as knots more than once.

Finally, we define an $r^{\prime}_{j_1,\ldots, j_m} \times r_{j_1,\ldots, j_m}$ matrix $\Phi_{j_1,\ldots, j_m}$ ($m=0,\ldots,M$, $1 \le r^{\prime}_{j_1,\ldots, j_m} \le r_{j_1,\ldots, j_m}$) where $\operatorname{rank} \left( \Phi_{j_1,\ldots, j_m} \right) = r^{\prime}_{j_1,\ldots, j_m}$, and the row-norm of $\Phi_{j_1,\ldots, j_m}$ is equal to 1. To avoid the computational burden, it is posited that $r^{\prime}_{j_1,\ldots, j_m} \ll n_{\mathcal{G}}$. Intuitively, the role of $\Phi_{j_1,\ldots, j_m}$ is to generate $r^{\prime}_{j_1,\ldots,j_m}$ new compressed knots from $r_{j_1,\ldots,j_m}$ original knots. As shown in the simulation studies of \cite{Banerjee_2013} and \cite{Hirano_2021}, 
the numerical stability of the predictive process \citep{Banerjee_2008} and multi-resolution approximation (MRA) \citep{Katzfuss_2017} can be empirically improved by replacing the original knots with the compressed knots. Likewise, the MRF-lp with the compressed knots is numerically more stable than the MRF with the original knots (see the sixth simulation study in Section \ref{subsec_simulation_study} for details). Other merits of introducing $\Phi_{j_1,\ldots, j_m}$ are described, for example, in Section \ref{subsec_Overview_MRA-lp}. In Section \ref{subsec_select_phi}, we will discuss how to select $\Phi_{j_1,\ldots, j_m}$ in this study.

\subsection{Overview of the MRA-lp}
\label{subsec_Overview_MRA-lp}

In this section, based on \cite{Hirano_2021}, we will summarize the MRA-lp. To begin with, we introduce some notations that are needed except for those before Section \ref{subsec_Overview_MRA-lp}. However, note that the new additional notations are available only in Section \ref{subsec_Overview_MRA-lp} and are sometimes introduced with different definitions after the next subsection. Let $Y_0(\bm{s}) \sim \mbox{GP}(0, C_0)$ be a zero-mean Gaussian process with a covariance function $C_0(\bm{s},\bm{s}^{*})$ ($\bm{s}$, $\bm{s}^{*} \in D_0$
 ). Also, a set of knots on each subregion $D_{j_1,\ldots,j_m}$ is denoted by $Q_{j_1,\ldots, j_m}$ ($m=0,\ldots,M$). The knot means the location on $D_{j_1,\ldots,j_m}$. In addition, for a generic Gaussian process $Y(\bm{s}) \sim \mbox{GP}(0, C^{\prime})$ and sets of the vectors, that is, $A^{\prime} = \{ \bm{a}_1, \ldots,\bm{a}_N \}$ and $B^{\prime} = \{ \bm{b}_1, \ldots,\bm{b}_M \}$ ($\bm{a}_i, \bm{b}_j \in \mathbb{R}^{d'}$, $i= 1,\ldots,N,\; j=1,\ldots,M$), we write $\bm{Y}(A^{\prime}) = \left( Y(\bm{a}_1), \ldots, Y(\bm{a}_N) \right)^{\top}$ and $(C^{\prime}(A^{\prime},B^{\prime}))_{ij} = C^{\prime}(\bm{a}_i,\bm{b}_j)$ $(i= 1,\ldots,N,\; j=1,\ldots,M)$.

 First, we will explain the linear projection approach proposed by \cite{Banerjee_2013} for fast computation in large spatial datasets. For $\bm{s} \in D_0$, we define 
\begin{align*}
\tau_0(\bm{s}) &= E[Y_0(\bm{s}) | \Phi_0 \bm{Y}_0(Q_0)] \\
&=C_0(\bm{s},Q_0) \Phi_0^{\top} \left\{ \Phi_0  C_0(Q_0,Q_0) \Phi_0^{\top} \right\}^{-1} \Phi_0 \bm{Y}_0(Q_0).
\end{align*}
Then, it follows that
\begin{align*}
C_{\tau_0}(\bm{s}_1,\bm{s}_2) &= \mbox{Cov}(\tau_0(\bm{s}_1),\tau_0(\bm{s}_2))\\
&=C_0(\bm{s}_1,Q_0) \Phi_0^{\top} \widehat{V}_{0}^{0^{-1}} \Phi_0 C_0(\bm{s}_2,Q_0)^{\top},
\end{align*}
where $\widehat{V}_{0}^{0} = \Phi_0  V_0^0 \Phi_0^{\top}$ and $V_0^0 = C_0(Q_0,Q_0)$. The linear projection uses $C_{\tau_0}$ as an approximation of $C_0$ for fast computation in large spatial datasets and is identical with the predictive process proposed by \cite{Banerjee_2008} in the case of $\Phi_0 =\bm{\mbox{I}}_{\left| Q_0 \right|}$. 

In the MRA-lp, the calculation of $C_{\tau_0}$ is regarded as the linear projection at resolution 0. Based on \cite{Katzfuss_2017}, 
the MRA-lp implements the linear projection on $D_{j_1,\ldots,j_m}$ for the approximation error of $C_0$ at each $m$th resolution ($m=1,\ldots,M$). 
Specifically, for $\bm{s} \in D_0$, let $\delta_m(\bm{s}) \sim \mbox{GP}(0,C_m)$ ($m=1,\ldots,M$) be a zero-mean Gaussian process with the degenerate covariance function $C_m$ where
\begin{align*}
  C_m(\bm{s}_1,\bm{s}_2) =
  \begin{cases}
  C_{m-1}(\bm{s}_1,\bm{s}_2) - C_{\tau_{m-1}}(\bm{s}_1,\bm{s}_2), \quad \bm{s}_1,\bm{s}_2 \in D_{j_1,\ldots,j_m}\\
  \hspace{3.5cm} (1 \le j_i \le J_i,\; i=1,\ldots,m),\\
  0, \quad \mbox{otherwise}.
  \end{cases}
  \end{align*}
  By conducting the linear projection for $\delta_m(\bm{s})$ at the $m$th resolution, we obtain
  \begin{align*}
  \tau_m(\bm{s}) &= E[\delta_m(\bm{s}) | \Phi^{(m)} \bm{\delta}_m(Q^{(m)})]\\
  &=C_m(\bm{s},Q^{(m)}) \Phi^{{(m)}^{\top}} \left\{ \Phi^{(m)} C_m(Q^{(m)},Q^{(m)}) \Phi^{{(m)}^{\top}} \right\}^{-1} \Phi^{(m)}\bm{\delta}_m(Q^{(m)})
  \end{align*}
  and
  \begin{align*}
  C_{\tau_m}(\bm{s}_1,\bm{s}_2) &= \mbox{Cov}(\tau_m(\bm{s}_1),\tau_m(\bm{s}_2)) \notag \\
  &=
  \begin{cases}
  C_m(\bm{s}_1,Q_{j_1,\ldots,j_m}) \Phi_{j_1,\ldots,j_m}^{\top} 
  \widehat{V}_{j_1,\ldots,j_m}^{{m}^{-1}}
  \Phi_{j_1,\ldots,j_m} C_m(\bm{s}_2,Q_{j_1,\ldots,j_m})^{\top}, \\
  \hspace{3cm}  \bm{s}_1,\bm{s}_2 \in D_{j_1,\ldots,j_m} \; (1 \le j_i \le J_i,\; i=1,\ldots,m),\\
  0, \quad \mbox{otherwise},
  \end{cases}
  \end{align*}
  where $\widehat{V}_{j_1,\ldots,j_m}^m = \Phi_{j_1,\ldots,j_m} V_{j_1,\ldots,j_m}^m  \Phi_{j_1,\ldots,j_m}^{\top}$, $V_{j_1,\ldots,j_m}^m = C_m(Q_{j_1,\ldots,j_m},
  Q_{j_1,\ldots,j_m})$, \\
  $Q^{(m)}=\left\{ Q_{j_1,\ldots, j_m} \middle| 1 \le j_1 \le J_1, \ldots, 1 \le j_m \le J_m \right\}$, and 
  \begin{align*}
    \Phi^{(m)} = 
    \begin{pmatrix}
    \Phi_{1,\ldots,1} & & O \\
     & \ddots & \\
     O & & \Phi_{J_1,\ldots,J_m}
    \end{pmatrix}
    .
    \end{align*}
Finally, in the MRA-lp, the original covariance matrix $C_0(S_0,S_0)$ is approximated by $\sum_{m=0}^{M} C_{\tau_{m}}(S_0,S_0)$ where $S_0$ is a set of observed locations on $D_0$. Similar to the relationship between the linear projection and the predictive process, the MRA-lp is identical with the MRA in the case of $\Phi_{j_1,\ldots, j_m} = \bm{\mbox{I}}_{\left| Q_{j_1,\ldots, j_m} \right|}$.

In comparison with the MRA, the MRA-lp includes $\Phi_{j_1,\ldots,j_m}$ which compresses the knots by taking the linear combination. The introduction of $\Phi_{j_1,\ldots,j_m}$ has some advantages. First, we can separately select the number $r_{j_1,\ldots, j_m}$ of knots and the rank $r^{\prime}_{j_1,\ldots, j_m}$ of $\Phi_{j_1,\ldots,j_m}$. Therefore, we can employ more knots than the MRA by setting the low rank of $\Phi_{j_1,\ldots,j_m}$, which can lead to more efficient computation. Second, it may alleviate the selection problem of the knots. Finally, the remarkable property of $\Phi_{j_1,\ldots,j_m}$ is to yield the numerical stability in calculating the inverse matrix. 
The MRF-lp also has 
these merits obtained by introducing $\Phi_{j_1,\ldots,j_m}$ in contrast to the MRF.

\subsection{Algorithm for approximating the covariance matrix\label{subsec_mra_lp_alg}}

The MRA-lp is applied to $\Sigma_{t|t-1}$ and $\Sigma_{0|0}$. Here, we will describe the algorithm of the MRA-lp for approximating a generic $n_{\mathcal{G}} \times n_{\mathcal{G}}$ covariance matrix $\Sigma_0$. 
In what follows, for a generic $M^{'} \times M^{'}$ matrix $C$ and sets of indices, that is, 
$A^{\prime}, B^{\prime} \subseteq \{ 1,\ldots,M^{'} \}$, $C[A^{\prime},B^{\prime}]$ is a submatrix of $C$ obtained by restricting the rows and columns of $C$ to the indices in $A^{\prime}$ and $B^{\prime}$, respectively. 
The derivation of the following algorithm is given in Appendix \ref{subappend_derivation_alg1}.

\begin{algorithm}[Approximation of the generic covariance matrix $\Sigma_0$]
\label{alg_approx_cov_mat}

Given $M \ge 0$, $D_{j_1,\ldots,j_m}$ ($m = 0,\ldots, M$, $1 \le j_i \le J_i$, $i = 1,\ldots, M$), $I_{j_1,\ldots,j_m}$ ($m = 0,\ldots, M$, $1 \le j_i \le J_i$, $i = 1,\ldots, M$), and $K_{j_1,\ldots,j_m}$ ($m = 0,\ldots, M$, $1 \le j_i \le J_i$, $i = 1,\ldots, M$), find a matrix $B$ such that $\Sigma_0$ is approximated by $B B^{\top}$.

\bigskip
\noindent
\textit{Step} 1. Define $W_0^0 = \Sigma_0[I_0, K_0]$. By restricting the rows of $W_0^0$ to $K_0$, we obtain $V_0^0 = \Sigma_0[K_0, K_0]$. 
Next, calculate $\widehat{V}_{0}^{0}=\Phi_0 V_0^0 \Phi_0^{\top}$ and $B_0 = W_0^0 \Phi_0^{\top} \widehat{V}_0^{{0}^{-1/2}}$. If $M=0$, then go to Step 4. Set $m=0$.

\noindent
\textit{Step} 2. By restricting the rows of $W_{j_1,\ldots,j_m}^m$ to $I_{j_1,\ldots,j_{m+i}}$ ($i=1,\ldots, M-m$), we obtain $W_{j_1,\ldots,j_{m+i}}^m$ ($i=1,\ldots, M-m$). Similarly, $V_{j_1,\ldots,j_{m+i}}^m$ ($i=1,\ldots, M-m$) is given by restricting the rows of $W_{j_1,\ldots,j_{m+i}}^m$ ($i=1,\ldots, M-m$) to $K_{j_1,\ldots,j_{m+i}}$ ($i=1,\ldots, M-m$).

\noindent
\textit{Step} 3. Set $m=m+1$. For $1 \le j_i \le J_i$ ($i=1,\ldots,m$), calculate
\begin{align*}
W_{j_1,\ldots,j_m}^m = \Sigma_0[I_{j_1,\ldots,j_m}, K_{j_1,\ldots,j_m}]
- \sum_{k=0}^{m-1} W_{j_1,\ldots,j_m}^k \Phi_{j_1,\ldots,j_k}^{\top} \widehat{V}_{j_1,\ldots,j_k}^{{k}^{-1}} \Phi_{j_1,\ldots,j_k} V_{j_1,\ldots,j_m}^{{k}^{\top}},
\end{align*}
where $\widehat{V}_{j_1,\ldots,j_k}^{k} = \Phi_{j_1,\ldots,j_k} V_{j_1,\ldots,j_k}^k \Phi_{j_1,\ldots,j_k}^{\top}$ 
($k=0,\ldots,m-1$). 
We have $V_{j_1,\ldots,j_m}^m$ by restricting the rows of $W_{j_1,\ldots,j_m}^m$ to $K_{j_1,\ldots,j_m}$. Next, calculate $\widehat{V}_{j_1,\ldots,j_m}^{m} = \Phi_{j_1,\ldots,j_m} V_{j_1,\ldots,j_m}^m \Phi_{j_1,\ldots,j_m}^{\top}$ and $B_{j_1,\ldots,j_m} = W_{j_1,\ldots,j_m}^m \Phi_{j_1,\ldots,j_m}^{\top} \widehat{V}_{j_1,\ldots,j_m}^{{m}^{-1/2}}$. If $m < M$, then go to Step 2. Otherwise, go to Step 4.

\noindent
\textit{Step} 4. Output $B = \left( B^M \; B^{M-1} \; \cdots \; B^0 \right)$ where $B^m$ ($m = 0,\ldots, M$) is the block diagonal matrix whose diagonal element is $B_{j_1,\ldots,j_m}$ ($1 \le j_i \le J_i, \; i = 1, \ldots, m$), and the diagonal elements are arranged in ascending order with respect to lexicographic order of the index $(j_1,\ldots,j_m)$.

\end{algorithm}

The MRA-lp implements the linear projection \citep{Banerjee_2013} on each subregion obtained by partitioning the spatial domain recursively, resulting in the approximated covariance matrix that captures both the large- and small-scale spatial variations. \cite{Hirano_2021} demonstrates the effectiveness of the MRA-lp for the maximum likelihood estimation and the spatial prediction for large spatial datasets. To achieve the fast calculation of the Kalman filter, Algorithm \ref{alg_approx_cov_mat} omits a modification by using the covariance tapering at the highest resolution, unlike \cite{Hirano_2021}. This modification is helpful to enhance the approximation accuracy of spatial variations on small scales, but we can obtain an improvement similar to that of the covariance tapering by choosing large $M$. 

Algorithm \ref{alg_approx_cov_mat} does not require all elements of $\Sigma_0$ as inputs. The required matrices are $\Sigma_0[I_{j_1,\ldots,j_m}, K_{j_1,\ldots,j_m}]$ ($m=0,\ldots, M$, $1 \le j_i \le J_i$, $i = 1,\ldots, M$), which leads to the reduction of the computational cost and memory consumption of Algorithm \ref{alg_approx_cov_mat} in the fast calculation of the Kalman filter.

The approximated covariance matrix of $\Sigma_0$ by Algorithm \ref{alg_approx_cov_mat} is given by
\begin{align}
\label{eq_cov_mat_mra-lp}
B B^{\top} =& 
W_0^0 \Phi_0^{\top} \widehat{V}_{0}^{0^{-1}} \Phi_0 W_0^{{0}^{\top}} 
  +  
  \begin{pmatrix}
    W_1^1 \Phi_1^{\top} \widehat{V}_{1}^{1^{-1}} \Phi_1 W_1^{{1}^{\top}}  & & O \\
     & \ddots & \\
     O & & W_{J_1}^1 \Phi_{J_1}^{\top} \widehat{V}_{J_1}^{1^{-1}} \Phi_{J_1} W_{J_1}^{{1}^{\top}}
  \end{pmatrix} 
  \notag
\\
  + \cdots &+
  \begin{pmatrix}
    W_{1,\ldots,1}^M \Phi_{1,\ldots,1}^{\top} \widehat{V}_{1,\ldots,1}^{{M}^{-1}} \Phi_{1,\ldots,1} W_{1,\ldots,1}^{{M}^{\top}}  & & O \\
     & \ddots & \\
     O & & W_{J_1,\ldots,J_M}^{M} \Phi_{J_1,\ldots,J_M}^{\top} \widehat{V}_{J_1,\ldots,J_M}^{{M}^{-1}} \Phi_{J_1,\ldots,J_M} W_{J_1,\ldots,J_M}^{{M}^{\top}}
  \end{pmatrix}
.
\end{align}
The $(m+1)$st term ($m=0,\ldots,M$) in \eqref{eq_cov_mat_mra-lp} shows the linear projection at the $m$th resolution. It is observed that the linear projection at the higher resolution modifies the approximation of spatial variations on smaller and smaller scales.

Now, we provide the two propositions for Algorithm \ref{alg_approx_cov_mat}. 
The first proposition guarantees the existence of $\widehat{V}_{j_1,\ldots,j_m}^{{m}^{-1}}$ ($m=0,\ldots,M-1$) and $\widehat{V}_{j_1,\ldots,j_m}^{{m}^{-1/2}}$ ($m=0,\ldots,M$) in Algorithm \ref{alg_approx_cov_mat}.

\begin{proposition}
\label{prop_mra-lp_inv}
Suppose that we select $\Phi_{j_1,\ldots,j_m}$ ($m = 0, \ldots, M$), which satisfies $\mathrm{R}(\Phi_{j_1,\ldots,j_m}^{\top}) \cap \mathrm{R}(V_{j_1,\ldots,j_m}^m)^{\perp} = \{ \bm{0} \}$ ($m=1,\ldots,M$) if $M \ge 1$, where $\mathrm{R}(\cdot)$ and the symbol ``$\perp$'' mean the column space of $\cdot$ and the orthogonal complement, respectively. Then, $V_{j_1,\ldots,j_m}^m$ ($m=1,\ldots,M$) is the positive semidefinite matrix, and $\widehat{V}_{j_1,\ldots,j_m}^{m}$ ($m = 0, \ldots, M$) is the positive definite matrix.
\end{proposition}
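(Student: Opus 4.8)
The plan is to run an induction on the resolution $m$ that simultaneously tracks two claims: that a suitably defined \emph{remainder covariance} matrix $R_{j_1,\ldots,j_m}^m$ is positive semidefinite (positive definite when $m=0$), and that $\widehat{V}_{j_1,\ldots,j_m}^m$ is positive definite. The first step is to recast the quantities produced by Algorithm~\ref{alg_approx_cov_mat} in covariance form: using the derivation in Appendix~\ref{subappend_derivation_alg1}, one has $W_{j_1,\ldots,j_m}^m = R_{j_1,\ldots,j_m}^m[I_{j_1,\ldots,j_m},K_{j_1,\ldots,j_m}]$ and $V_{j_1,\ldots,j_m}^m = R_{j_1,\ldots,j_m}^m[K_{j_1,\ldots,j_m},K_{j_1,\ldots,j_m}]$, where $R_0^0=\Sigma_0$ and, abbreviating $\eta=(j_1,\ldots,j_m)$ and letting $\eta'=(j_1,\ldots,j_{m-1})$ be its parent region,
\[
R_\eta^m = R_{\eta'}^{m-1}[I_\eta,I_\eta] - R_{\eta'}^{m-1}[I_\eta,K_{\eta'}]\,\Phi_{\eta'}^{\top}\bigl(\Phi_{\eta'}R_{\eta'}^{m-1}[K_{\eta'},K_{\eta'}]\Phi_{\eta'}^{\top}\bigr)^{-1}\Phi_{\eta'}\,R_{\eta'}^{m-1}[K_{\eta'},I_\eta];
\]
consequently $\widehat{V}_\eta^m = \Phi_\eta V_\eta^m \Phi_\eta^{\top} = \Phi_\eta R_\eta^m[K_\eta,K_\eta]\Phi_\eta^{\top}$. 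The (routine but somewhat tedious) bookkeeping point is that this one-step recursion reproduces the summation over $k=0,\ldots,m-1$ in Step~3, which I would either cite from the appendix derivation or verify by a short secondary induction on $m$.

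For the base case $m=0$, $V_0^0=\Sigma_0[K_0,K_0]$ is a principal submatrix of the positive definite $\Sigma_0$, hence positive definite, and since $\Phi_0$ has full row rank (its rows have unit norm and $\operatorname{rank}\Phi_0 = r^{\prime}_{0}$), $\widehat{V}_0^0=\Phi_0 V_0^0\Phi_0^{\top}$ is positive definite. For the inductive step I would assume $R_{\eta'}^{m-1}$ positive semidefinite and $\widehat{V}_{\eta'}^{m-1}$ positive definite, factor $R_{\eta'}^{m-1}=LL^{\top}$, and let $L_1,L_2$ be the row blocks of $L$ indexed by $I_\eta$ and $K_{\eta'}$. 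Then $R_\eta^m = L_1\bigl(I - U(U^{\top}U)^{-1}U^{\top}\bigr)L_1^{\top}$ with $U=L_2^{\top}\Phi_{\eta'}^{\top}$ and $U^{\top}U=\widehat{V}_{\eta'}^{m-1}$ invertible by hypothesis, and since $I - U(U^{\top}U)^{-1}U^{\top}$ is the orthogonal projection onto $\mathrm{R}(U)^{\perp}$, it is positive semidefinite; hence $R_\eta^m$, and therefore its principal submatrix $V_\eta^m$, is positive semidefinite.

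It then remains to upgrade $\widehat{V}_\eta^m=\Phi_\eta V_\eta^m\Phi_\eta^{\top}$ from positive semidefinite to positive definite, and this is exactly where the hypothesis on $\Phi_\eta$ enters: if $\widehat{V}_\eta^m z=\bm{0}$ for some $z\neq\bm{0}$, then, using $V_\eta^m\succeq 0$, $V_\eta^m\Phi_\eta^{\top}z=\bm{0}$, so $\Phi_\eta^{\top}z\in\ker(V_\eta^m)=\mathrm{R}(V_\eta^m)^{\perp}$ (equality since $V_\eta^m$ is symmetric), while also $\Phi_\eta^{\top}z\in\mathrm{R}(\Phi_\eta^{\top})\setminus\{\bm{0}\}$ because $\Phi_\eta$ has full row rank; this contradicts $\mathrm{R}(\Phi_\eta^{\top})\cap\mathrm{R}(V_\eta^m)^{\perp}=\{\bm{0}\}$. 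This closes the induction and yields both conclusions, and positive definiteness of each $\widehat{V}_\eta^m$ is precisely what makes $\widehat{V}_\eta^{m^{-1}}$ ($m\le M-1$) and $\widehat{V}_\eta^{m^{-1/2}}$ ($m\le M$) well defined.

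The step I expect to be the main obstacle is the first one: pinning down the covariance interpretation so that each $\widehat{V}_\eta^m$ is visibly a $\Phi$-compression of a genuine (generalized) Schur complement of a positive semidefinite matrix, i.e.\ that the summation in Step~3 collapses to the single-step recursion above. Once that identification is in place, positive semidefiniteness of the remainders is the standard ``Schur complement of a PSD matrix is PSD'' fact, with $\Phi^{\top}(\Phi\,\cdot\,\Phi^{\top})^{-1}\Phi$ playing the role of a generalized inverse, and positive definiteness of $\widehat{V}_\eta^m$ is an immediate consequence of the transversality condition imposed on $\Phi_\eta$.
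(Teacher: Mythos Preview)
Your proposal is correct and follows essentially the same inductive architecture as the paper's proof: both track a ``remainder'' covariance (your $R_\eta^m$ is the paper's $\Sigma_m[I_\eta,I_\eta]$, defined via \eqref{eq_Sigma_l}--\eqref{eq_Sigma_l_prime}) and show it stays positive semidefinite, then upgrade $\widehat V_\eta^m$ to positive definite using the transversality hypothesis. The differences are presentational rather than substantive: the paper interprets the remainder as a Gaussian conditional variance $\operatorname{Var}(\bm Z_l(I_0)\mid \Phi^l\bm Z_l(K^l))$ to obtain PSD-ness, whereas you factor $R_{\eta'}^{m-1}=LL^\top$ and read off the orthogonal-projection form $L_1(I-U(U^\top U)^{-1}U^\top)L_1^\top$; and for the PD step the paper cites a result of \cite{Puntanen_2011} while you give the two-line kernel argument directly. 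Your version is a bit more self-contained, and the identification of your recursion with the summation in Step~3 of Algorithm~\ref{alg_approx_cov_mat} is indeed supplied by Appendix~\ref{subappend_derivation_alg1}, so that bookkeeping step need not be reproved.
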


The second proposition states a sparse structure of $B$ in Algorithm \ref{alg_approx_cov_mat}. This sparsity enables the fast computation of the Kalman filter. Now, we define $N=\sum_{m=0}^{M} r^{\prime}_{j_1,\ldots, j_m}$ for the fixed $(j_1,\ldots,j_M)$ and $N^{\prime} = \sum_{m=0}^M \sum_{1\le j_1 \le J_1,\ldots, 1 \le j_m \le J_m} r_{j_1,\ldots,j_m}^{\prime}$.

\begin{proposition}
  \label{prop_mra-lp_B}
$B$ is the $n_{\mathcal{G}} \times N^{\prime}$ matrix, and its row corresponding to the index in $I_{j_1,\ldots,j_M}$ has $N$ nonzero elements.
\end{proposition}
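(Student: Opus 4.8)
The plan is to read off the block structure of $B$ directly from Step~4 of Algorithm~\ref{alg_approx_cov_mat} and then, for a single row, track which blocks it can meet at each resolution. First I would record the sizes of the building blocks: from Steps~1 and 3, each $B_{j_1,\ldots,j_m}$ equals $W_{j_1,\ldots,j_m}^m \Phi_{j_1,\ldots,j_m}^{\top} \widehat{V}_{j_1,\ldots,j_m}^{{m}^{-1/2}}$, where $W_{j_1,\ldots,j_m}^m$ is $|I_{j_1,\ldots,j_m}| \times r_{j_1,\ldots,j_m}$, $\Phi_{j_1,\ldots,j_m}^{\top}$ is $r_{j_1,\ldots,j_m} \times r^{\prime}_{j_1,\ldots,j_m}$, and $\widehat{V}_{j_1,\ldots,j_m}^{m}$ is the $r^{\prime}_{j_1,\ldots,j_m} \times r^{\prime}_{j_1,\ldots,j_m}$ matrix that is invertible by Proposition~\ref{prop_mra-lp_inv}; hence $B_{j_1,\ldots,j_m}$ is $|I_{j_1,\ldots,j_m}| \times r^{\prime}_{j_1,\ldots,j_m}$. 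Iterating the domain partitioning rule shows that $\{ I_{j_1,\ldots,j_m} \}_{1\le j_i \le J_i}$ is a partition of $I_0 = \{1,\ldots,n_{\mathcal{G}}\}$ for each fixed $m$, and the grid-ordering convention (namely $\min(I_{j_1,\ldots,j_M}) > \max(I_{i_1,\ldots,i_M})$ whenever $(j_1,\ldots,j_M) \succ_L (i_1,\ldots,i_M)$, together with the nesting of the $I$'s) makes each $I_{j_1,\ldots,j_m}$ a block of consecutive integers. So the block diagonal matrix $B^m$ of Step~4 is genuinely block diagonal with respect to the natural row order, with $n_{\mathcal{G}} = \sum_{1\le j_1\le J_1,\ldots,1\le j_m\le J_m} |I_{j_1,\ldots,j_m}|$ rows and $\sum_{1\le j_1\le J_1,\ldots,1\le j_m\le J_m} r^{\prime}_{j_1,\ldots,j_m}$ columns. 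Concatenating $B^M, B^{M-1},\ldots, B^0$ then gives that $B$ is $n_{\mathcal{G}} \times N^{\prime}$.

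Next I would fix an index $i \in I_{j_1,\ldots,j_M}$ and locate the possible nonzeros of row $i$ of $B$ inside each $B^m$. Iterating the partitioning rule yields the nesting $I_{j_1,\ldots,j_M} \subseteq I_{j_1,\ldots,j_m}$, so $i \in I_{j_1,\ldots,j_m}$, i.e. row $i$ of $B^m$ falls in the diagonal block $B_{j_1,\ldots,j_m}$. Because the index sets at resolution $m$ are pairwise disjoint, $i$ lies in no other $I_{i_1,\ldots,i_m}$, hence within $B^m$ every column outside the $r^{\prime}_{j_1,\ldots,j_m}$ columns of the block $B_{j_1,\ldots,j_m}$ is zero in row $i$. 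Summing over $m=0,\ldots,M$, row $i$ of $B$ is supported on exactly the $\sum_{m=0}^{M} r^{\prime}_{j_1,\ldots,j_m} = N$ columns arising from the blocks $B_{j_1,\ldots,j_m}$ ($m=0,\ldots,M$) and is zero elsewhere, which is the claimed $N$ nonzero elements. (That these $N$ entries are actually nonzero in the non-degenerate case follows since row $i$ of $B_{j_1,\ldots,j_m}$ equals row $i$ of $W_{j_1,\ldots,j_m}^m \Phi_{j_1,\ldots,j_m}^{\top}$ times the invertible matrix $\widehat{V}_{j_1,\ldots,j_m}^{{m}^{-1/2}}$, so it vanishes only when that row of $W_{j_1,\ldots,j_m}^m \Phi_{j_1,\ldots,j_m}^{\top}$ is zero.)

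The only mildly delicate point is the bookkeeping that makes the block-diagonal picture of each $B^m$ consistent with the single fixed ordering of the grid points, so that ``the row corresponding to the index in $I_{j_1,\ldots,j_M}$'' is unambiguous and lies in exactly one block of $B^m$; this is precisely where the ordering convention and the nesting $I_{j_1,\ldots,j_M}\subseteq I_{j_1,\ldots,j_m}$ are used. Everything else — the dimension count and the ``one block per resolution'' observation — is immediate from the partition property, so I do not expect a substantive obstacle.
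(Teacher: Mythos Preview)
Your argument is correct and is precisely the direct block-counting argument that the paper itself does not spell out, instead deferring with ``The proof is the same as that of Proposition~1 in \cite{Jurek_2021}.'' The referenced proof in \cite{Jurek_2021} is the identical bookkeeping (read off the size of each $B_{j_1,\ldots,j_m}$, use that the $I_{j_1,\ldots,j_m}$ at level $m$ partition $I_0$, and observe that a row in $I_{j_1,\ldots,j_M}$ meets exactly one diagonal block of each $B^m$), so your proposal and the paper's intended proof coincide; your parenthetical about genuine nonzeroness versus structural nonzeroness is a fair caveat, and in this literature ``nonzero elements'' is indeed meant in the structural sense.
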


\subsection{Selection of $\Phi_{j_1,\ldots, j_m}$\label{subsec_select_phi}}

We will explain the selection of $\Phi_{j_1,\ldots, j_m}$ in Algorithm \ref{alg_approx_cov_mat}. In this paper, we select $\Phi_{j_1,\ldots,j_m}=U_{j_1,\ldots,j_m}^{{r_{j_1,\ldots,j_m}^{\prime}}^{\top}}$ where the $i$th column vector of $U_{j_1,\ldots,j_m}^{r_{j_1,\ldots,j_m}^{\prime}}$ is the eigenvector corresponding to the $i$th eigenvalue of $V_{j_1,\ldots,j_m}^m$ in descending order of magnitude for $i = 1, \ldots, r_{j_1,\ldots, j_m}^{\prime}$. If $1 \le r_{j_1,\ldots,j_m}^{\prime} < \mathrm{rank}\left( V_{j_1,\ldots,j_m}^m \right)$, then this selection of $\Phi_{j_1,\ldots,j_m}$ satisfies the assumption of Proposition \ref{prop_mra-lp_inv} from the argument of Section 3.2 of \cite{Hirano_2021}, and 
it follows that
\begin{align*}
\widehat{V}_{j_1,\ldots,j_m}^{m} &= 
\begin{pmatrix}
\lambda_{j_1,\ldots,j_m}^{(1)} & & O \\
& \ddots & \\
O & & \lambda_{j_1,\ldots,j_m}^{( r_{j_1,\ldots,j_m}^{\prime} )}
\end{pmatrix}, \; 
\widehat{V}_{j_1,\ldots,j_m}^{{m}^{-\frac{1}{2}}} &= 
\begin{pmatrix}
\dfrac{1}{\sqrt{\lambda_{j_1,\ldots,j_m}^{(1)}}} & & O \\
& \ddots & \\
O & & \dfrac{1}{\sqrt{\lambda_{j_1,\ldots,j_m}^{( r_{j_1,\ldots,j_m}^{\prime} )}}}
\end{pmatrix},
\end{align*}
where $\lambda_{j_1,\ldots,j_m}^{( i )}$ ($i = 1, \ldots, r_{j_1,\ldots, j_m}^{\prime}$) is the $i$th eigenvalue of $V_{j_1,\ldots,j_m}^m$ in descending order of magnitude. 

Although the calculation of the eigenvalues and eigenvectors of $V_{j_1,\ldots,j_m}^m$ involves $O(r_{j_1,\ldots,j_m}^3)$ computations \citep[e.g.,][]{Golub_2012}, this calculation in the simulation studies and real data analysis in Sections \ref{subsec_simulation_study} and \ref{subsec_real_data} gets computationally feasible by selecting adequate $r_{j_1,\ldots,j_m}$. 
If we need large $r_{j_1,\ldots,j_m}$ to achieve better approximation accuracy of the covariance matrix in Algorithm \ref{alg_approx_cov_mat}, we can take advantage of a stochastic matrix approximation technique on the basis of Algorithm 4.2 of \cite{Halko_2011} \citep[see Algorithm 2 of ][for details]{Hirano_2021}. By applying this technique to $V_{j_1,\ldots,j_m}^m$, the stochastic approximation of $\Phi_{j_1,\ldots,j_m}=U_{j_1,\ldots,j_m}^{{r_{j_1,\ldots,j_m}^{\prime}}^{\top}}$ is quickly obtained. However, whether the selected $\Phi_{j_1,\ldots,j_m}$ satisfies the assumption of Proposition \ref{prop_mra-lp_inv} rigorously is a future study. 
Simulation studies and real data analyses in \cite{Banerjee_2013}, \cite{Hirano_2017a}, and \cite{Hirano_2021} support the effectiveness of this technique.

\section{Multi-resolution filter via linear projection\label{sec_mrf_lp}}

To mitigate the computational burden of the Kalman filter for large spatio-temporal datasets, we will propose the MRF-lp for spatio-temporal state-space models and derive its properties and computational complexity.

\subsection{Fast calculation of the Kalman filter\label{subsec:fast_kalman_filter}}

The following algorithm allows us to efficiently calculate the filtering mean $\bm{\mu}_{t|t}$. We omit the derivation of Algorithm \ref{alg_mrf_lp} because it is the same as that in Section 3.3 of \cite{Jurek_2021}.

\begin{algorithm}[Multi-resolution filter via linear projection (MRF-lp)]
\label{alg_mrf_lp}

Given $M \ge 0$, $D_{j_1,\ldots,j_m}$ ($m = 0,\ldots, M$, $1 \le j_i \le J_i$, $i = 1,\ldots, M$), $I_{j_1,\ldots,j_m}$ ($m = 0,\ldots, M$, $1 \le j_i \le J_i$, $i = 1,\ldots, M$), and $K_{j_1,\ldots,j_m}$ ($m = 0,\ldots, M$, $1 \le j_i \le J_i$, $i = 1,\ldots, M$), find $\bm{\mu}_{t|t}$ ($t=1,\ldots,T$).

\bigskip
\noindent
\textit{Step} 1. Calculate $B_{0|0}$ by applying Algorithm \ref{alg_approx_cov_mat} to $\Sigma_{0|0}$. Set $t=1$.

\noindent
\textit{Step} 2. Calculate $\bm{\mu}_{t|t-1} = A_t \bm{\mu}_{t-1|t-1}$ and $B_{t|t-1}^F = A_t B_{t-1|t-1}$. Next, by applying Algorithm \ref{alg_approx_cov_mat} to $B_{t|t-1}^F B_{t|t-1}^{{F}^{\top}} + Q_t$, we obtain $B_{t|t-1}$.

\noindent
\textit{Step} 3. We have
\begin{align*}
\Lambda_t = \bm{\mathrm{I}}_{N^{\prime}} + B_{t|t-1}^{\top} H_t^{\top} R_t^{-1} H_t B_{t|t-1}.
\end{align*}
By computing the Cholesky decomposition of $\Lambda_t$, we obtain the lower triangular matrix $L_t$ as the Cholesky factor and calculate $B_{t|t} = B_{t|t-1} L_t^{{-1}^{\top}}$. Finally, calculate 
\begin{align*}
\bm{\mu}_{t|t} = \bm{\mu}_{t|t-1} + B_{t|t} B_{t|t}^{\top} H_t^{\top} R_t^{-1} \left( \bm{y}_t - H_t \bm{\mu}_{t|t-1} \right).
\end{align*}
Set $t=t+1$. If $t \le T$, then go to Step 2. Otherwise, go to Step 4.

\noindent
\textit{Step} 4. Output $\bm{\mu}_{t|t}$ ($t = 1,\ldots,T$).

\end{algorithm}

The MRF-lp provides the fast computation of $\bm{\mu}_{t|t}$ ($t = 1,\ldots,T$) by calculating $(\bm{\mu}_{t|t-1}, B_{t|t-1})$ and $(\bm{\mu}_{t|t}, B_{t|t})$ alternately. Steps 2 and 3 in Algorithm \ref{alg_mrf_lp} correspond to the forecast and update steps of the Kalman filter, respectively. In the MRF-lp, 
$\Sigma_{t|t-1}$ and $\Sigma_{t|t}$ are obtained by calculating $B_{t|t-1} B_{t|t-1}^{\top}$ and $B_{t|t} B_{t|t}^{\top}$, respectively. 
As mentioned in Section \ref{subsec_mra_lp_alg}, we do not have to completely calculate $B_{t|t-1}^F B_{t|t-1}^{{F}^{\top}} + Q_t$ in Step 2 because Algorithm \ref{alg_approx_cov_mat} requires just a part of it. 
Since $A_t$ is the sparse matrix, $\bm{\mu}_{t|t-1} = A_t \bm{\mu}_{t-1|t-1}$ and $B_{t|t-1}^F = A_t B_{t-1|t-1}$ in Step 2 
is computationally feasible.

The MRF-lp is the same as the MRF developed by \cite{Jurek_2021} except for Algorithm \ref{alg_approx_cov_mat} used in Algorithm \ref{alg_mrf_lp}. In addition, when $\Phi_{j_1,\ldots,j_m} = \bm{\mathrm{I}}_{r_{j_1,\ldots, j_m}}$ in Algorithm \ref{alg_approx_cov_mat}, the MRF-lp is 
identical with the MRF because Algorithm \ref{alg_approx_cov_mat} becomes the multi-resolution decomposition (MRD) 
in the MRF. This means that the MRF-lp can be regarded as a generalization of the MRF.

If the spatial data is not observed at $t=\tau, \tau+1,\ldots,\tau^{*}$, then $\bm{y}_{1:\tau-1} = \bm{y}_{1:\tau} = \cdots = \bm{y}_{1:\tau^{*}}$. In this case, at $t=\tau, \tau+1,\ldots,\tau^{*}$, the forecast step of the MRF-lp is the same as Step 2 of Algorithm \ref{alg_mrf_lp}, while the update step of the MRF-lp becomes $\bm{\mu}_{t|t}=\bm{\mu}_{t|t-1}$ and $B_{t|t}=B_{t|t-1}$. Likewise, for forecasting at $t=T+1, T+2,\ldots$, 
we calculate $\bm{\mu}_{t|T} = A_t \bm{\mu}_{t-1|T}$ and $B_{t|T}^F = A_t B_{t-1|T}$ and obtain $B_{t|T}$ by applying Algorithm \ref{alg_approx_cov_mat} to $B_{t|T}^F B_{t|T}^{{F}^{\top}} + Q_t$. If $\Sigma_{t|T}$ is unnecessary, we do not need to calculate $B_{t|T}$.

\subsection{Computational complexity\label{subsec_computational_complexity}}

In this section, we will elicit the properties of the MRF-lp and reveal that these properties lead to the fast computation of the MRF-lp for large spatio-temporal datasets. First, we impose some assumptions.

\begin{assumption}
  \label{assump_y}
  Elements of $\bm{y}_t$ are observed on some grid points in $\mathcal{G} = \{ \bm{g}_1, \ldots, \bm{g}_{n_\mathcal{G}} \}$ and arranged in ascending order by the index $i$ of $\bm{g}_i$. 
\end{assumption}

Under Assumption \ref{assump_y}, $H_t$ takes only one 1 in each row, and the rest are zeros. Assumption \ref{assump_y} removes the case where 
the elements of $\bm{y}_t$ are obtained by averaging some state vector elements.

\begin{assumption}
  \label{assump_y_one_observation}
  We obtain at least one observation in each $D_{j_1,\ldots,j_M}$.
\end{assumption}

\begin{assumption}
  \label{assump_R}
  $R_t[i,j]=0$ if the two grid points corresponding to $i$ and $j$ are not in the same subregion $D_{j_1,\ldots,j_M}$.
\end{assumption}

If $R_t$ is a diagonal matrix, Assumption \ref{assump_R} is satisfied. Additionally, we assume for simplicity that $r_{j_1,\ldots, j_m} = r_m$, $r^{\prime}_{j_1,\ldots, j_m} = r^{\prime}_{m}$, $J_m = J$, $n_\mathcal{G} = n$, and $n_t = c_t n$ ($0<c_t \le 1$). Note that this assumption is available only in Sections \ref{subsec_computational_complexity} and \ref{subsec_computational_complexity_nonlinear_nongaussian}. 
The following propositions show the block sparse structure of some matrices in the MRF-lp, which leads to the fast computation of the MRF-lp. 

\begin{proposition}
  \label{prop_property_sparse_mrf_lp}
Suppose that Assumptions \ref{assump_y}--\ref{assump_R} are satisfied. 

\par
\noindent
(a) $\Lambda_t$ is a block matrix with $M+1$ row blocks and $M+1$ column blocks. In addition, for $p,q = 0,\ldots, M$ ($p \ge q$), the size of the $(M+1-p, M+1-q)$th block is $J^p r_p^{\prime} \times J^q r_q^{\prime}$, and its block is a block diagonal matrix where each block has $r_q^{\prime}$ columns.
\par
\noindent
(b) We define the adjacency matrix as the matrix where the non-diagonal elements in the block parts of $\Lambda_t$ are 1, and the rest are 0. $G(V,E)$ denotes the undirected graph defined by this adjacency matrix where $V = \{ v_1,\ldots,v_{N^{\prime}} \}$, and $E$ is the set of edges. If $(v_i,v_j) \notin E$ for $1 \le j < i \le N^{\prime}$, 
$L_t[i,j] = L_t^{-1}[i,j] = 0$.
\par
\noindent
(c) In $B_{t|t}$ and $B_{t|t-1}$, the parts, which differ from those of the block diagonal elements $B_{j_1,\ldots,j_m}$ ($m=0,\ldots, M$) in the output matrix $B$ of Algorithm \ref{alg_approx_cov_mat}, are zero. 
\end{proposition}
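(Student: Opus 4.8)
The plan is to prove the three claims (a), (b), (c) in order, since each builds on the previous. All three are structural statements about block-sparsity, and the engine driving everything is Proposition~\ref{prop_mra-lp_B}: the output matrix $B$ of Algorithm~\ref{alg_approx_cov_mat} has column blocks $B^M, B^{M-1}, \ldots, B^0$, where $B^m$ is block-diagonal with blocks indexed by the $m$th-resolution subregions, and a row of $B$ corresponding to a grid point in $D_{j_1,\ldots,j_M}$ is nonzero only in the $M+1$ block-columns that "contain" that grid point (one block per resolution along the chain $D_0 \supset D_{j_1} \supset \cdots \supset D_{j_1,\ldots,j_M}$). I would first record this as the starting point and fix notation for indexing the columns of $B_{t|t-1}$ by pairs (resolution $m$, subregion at resolution $m$).

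For part (a), I would compute $\Lambda_t = \bm{\mathrm{I}}_{N'} + B_{t|t-1}^\top H_t^\top R_t^{-1} H_t B_{t|t-1}$ blockwise. Write $M_t := H_t^\top R_t^{-1} H_t$. By Assumption~\ref{assump_y}, $H_t$ is a $0/1$ selection matrix, so $M_t$ is just $R_t^{-1}$ padded with zero rows/columns at the unobserved grid points; by Assumption~\ref{assump_R} (plus invertibility on the observed block), $M_t[i,j] = 0$ unless grid points $i,j$ lie in the same highest-resolution region $D_{j_1,\ldots,j_M}$. The $(m,\text{region }\alpha)$–$(m',\text{region }\beta)$ block of $\Lambda_t$ is $\delta$-term plus $\big(B^m_{t|t-1}\big)^\top_{\alpha}\, M_t\, \big(B^{m'}_{t|t-1}\big)_{\beta}$, where $(B^m_{t|t-1})_\alpha$ denotes the block-column associated with region $\alpha$ at resolution $m$. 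Because $M_t$ only couples grid points sharing a leaf region, this product vanishes unless $\alpha$ and $\beta$ have a common descendant leaf region — equivalently (since $p \ge q$, say $m \le m'$ after ordering resolutions as in the statement) unless $\beta$ is the ancestor of $\alpha$'s leaf, i.e. the regions are nested. This gives precisely the claimed shape: $M+1$ row/column blocks, the $(M+1-p,\,M+1-q)$ block has size $J^p r_p' \times J^q r_q'$, and within that block the only nonzero sub-blocks are between a resolution-$p$ region and its unique resolution-$q$ ancestor, so it is block-diagonal with blocks having $r_q'$ columns. I would also note the diagonal blocks ($p=q$) are themselves block-diagonal over the $J^p$ regions. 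The identity $\bm{\mathrm{I}}_{N'}$ only contributes to the diagonal and does not disturb sparsity.

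For part (b), the statement is the standard "Cholesky factor inherits the sparsity graph of the matrix, in the sense of no fill-in outside the filled graph" fact, but here phrased directly with the graph $G(V,E)$ of $\Lambda_t$ itself. I would invoke the elimination-graph / chordal-completion argument: I claim the graph $G$ as defined is already such that eliminating vertices in the order $N', N'-1, \ldots, 1$ produces no fill, because the nesting structure from part (a) means the neighborhood structure is "perfect elimination"-friendly — concretely, the ancestor chain of any leaf region forms a clique in $G$ (every pair of resolution blocks on one root-to-leaf path is coupled), and these cliques are the only source of edges, so when we eliminate a vertex its earlier-indexed neighbors already form a clique. Hence no fill-in: $L_t[i,j] = 0$ whenever $(v_i,v_j)\notin E$ and $j<i$, and the same for $L_t^{-1}$ by the analogous back-substitution / the fact that $L_t^{-1}$ has the same lower-triangular sparsity pattern as $L_t$ when that pattern is "closed under the elimination order" (this last point needs the clique/chordality observation too). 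I should be careful to state the ordering of the indices consistent with how $B$'s columns are laid out ($B^M$ first, i.e. finest resolution in the lowest-indexed block-columns), and check the Cholesky is taken in the matching order; the cleanest route is to cite the perfect-elimination property explicitly rather than re-derive fill-in theory.

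For part (c), I would use $B_{t|t} = B_{t|t-1} L_t^{-\top}$ together with parts (a)/(b). The block-column of $B_{t|t-1}$ for a resolution-$m$ region $\alpha$ is zero in every row outside $D_\alpha := D_{j_1,\ldots,j_m}$ (by Proposition~\ref{prop_mra-lp_B}). Multiplying on the right by $L_t^{-\top}$ forms new block-columns as linear combinations of old ones: the new block-column indexed by $(m,\alpha)$ is a combination of old block-columns $(m',\beta)$ with $(m',\beta) \preceq (m,\alpha)$ in the elimination order and $(v_i,v_j)\in E$ — and by the nesting structure from (a)–(b), those $(m',\beta)$ are exactly the ancestors of $\alpha$ (coarser regions containing $D_\alpha$). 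Each such old block-column is supported on rows in $D_\beta \supseteq D_\alpha$... wait — I need the support to stay inside $D_\alpha$, not $D_\beta$. So the argument must instead be organized by leaf regions: fix a leaf $D_{j_1,\ldots,j_M}$; the rows of $B_{t|t-1}$ in this leaf are nonzero only in the $M+1$ block-columns along its ancestor chain, and $L_t^{-\top}$ mixes those block-columns only among themselves (by the clique property), so the rows of $B_{t|t}$ in that leaf are also supported only on that same ancestor chain of block-columns — which is exactly the statement that the only nonzero parts of $B_{t|t}$ are the "generalized block-diagonal" parts matching the $B_{j_1,\ldots,j_m}$ pattern of Algorithm~\ref{alg_approx_cov_mat}. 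The same for $B_{t|t-1}$ itself is immediate from Proposition~\ref{prop_mra-lp_B}. I expect the main obstacle to be part (b): getting the elimination-ordering bookkeeping exactly right (which resolution sits in which block, and that the induced cliques give a perfect elimination ordering with respect to the chosen index order), and stating the $L_t^{-1}$ sparsity claim at the right level of rigor without reproving sparse-Cholesky theory from scratch. Parts (a) and (c) are then mostly careful index-chasing on top of that.
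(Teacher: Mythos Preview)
Your argument is correct and follows essentially the same route as the paper; the paper's own proof simply cites Propositions~2 and~3 of \cite{Jurek_2021}, so you have in effect written out what that reference contains, specialized to the MRA-lp setting. Two small remarks: in part~(c), your first (column-wise) attempt actually works once the direction is fixed---since lower column indices correspond to \emph{finer} resolutions, the columns contributing to the new block-column $(m,\alpha)$ via $L_t^{-\top}$ are those of \emph{descendants} $\beta\subseteq\alpha$, each supported on rows in $I_\beta\subseteq I_\alpha$, which immediately gives the claim; your row-wise fix is also fine. In part~(b), be explicit that the $L_t^{-1}$ sparsity requires more than the perfect-elimination property alone (a path graph with a PEO already gives a counterexample): what you need, and what the ancestor-chain clique structure provides, is the monotone transitivity ``$j<k<i$, $(j,k)\in E$, $(k,i)\in E \Rightarrow (j,i)\in E$'', which kills every monotone path that would otherwise produce fill in $L_t^{-1}$.
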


Proposition \ref{prop_property_sparse_mrf_lp} (b) implies that the block sparse structures of the lower triangular parts of $L_t$ and $L_t^{-1}$ are the same as that of $\Lambda_t$. In Proposition \ref{prop_property_sparse_mrf_lp} (c), it is shown that $B_{t|t}$ has the same block sparse structure as $B_{t|t-1}$ in the MRF-lp, and this structure is preserved through time. 
From Propositions \ref{prop_property_sparse_mrf_lp} (a)--(c), we can significantly reduce the memory consumption of the MRF-lp as compared to that of the Kalman filter, where the calculation of $\Sigma_{t|t}$ and $\Sigma_{t|t-1}$ is needed. 
\begin{figure}[t]
   \subfigure[$\Lambda_t$]{
    \includegraphics[width = 5.1cm,pagebox=artbox,clip]{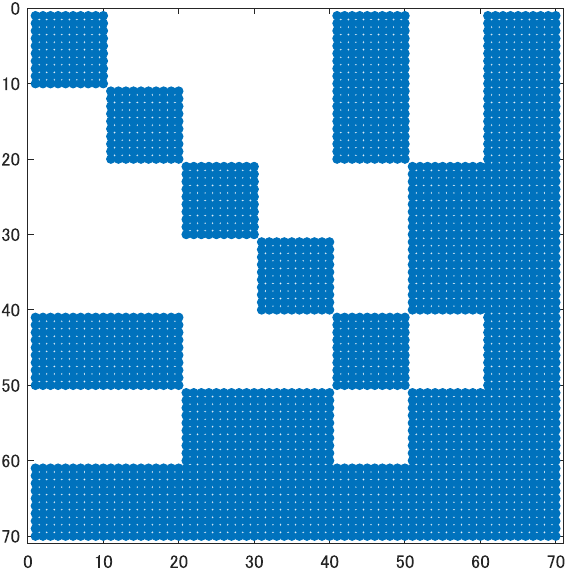}\label{fig:a}}
    \hspace{1cm}
   \subfigure[$L_t$ and $L_t^{-1}$]{ 
    \includegraphics[width = 5.1cm,pagebox=artbox,clip]{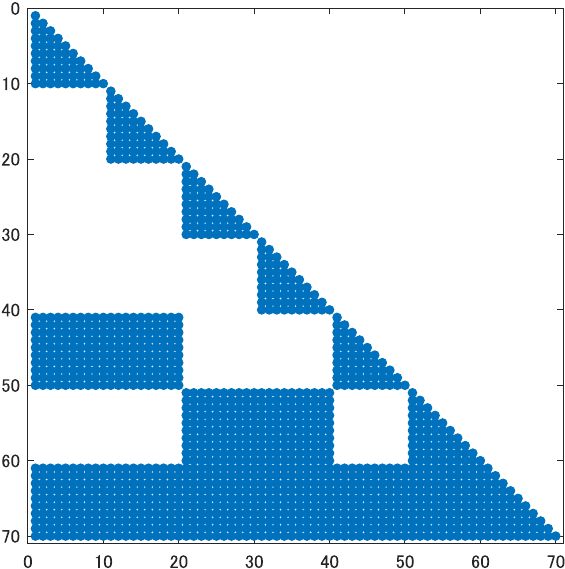}\label{fig:b}}
    \hspace{1cm}
   \subfigure[$B_{t|t}$ and $B_{t|t-1}$]{
    \includegraphics[width = 2.7cm,height=6cm,pagebox=artbox,clip]{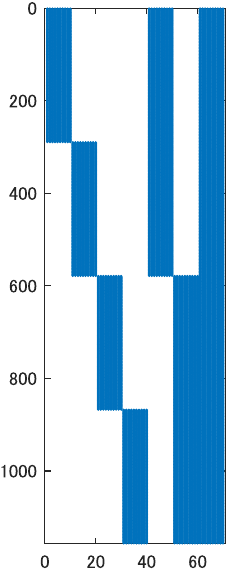}\label{fig:c}}
    \caption{
    Examples of $\Lambda_t$, $L_t$, $L_t^{-1}$, $B_{t|t}$, and $B_{t|t-1}$. Blue and white parts show nonzero elements and zero ones, respectively. We set $n=1156$,
    $M=2$, $J=2$, $r_m = 50$, and $r_m^{\prime} = 10$. Each domain is divided into equal subregions in Algorithm \ref{alg_approx_cov_mat}. 
    }
  \label{fig_example_proposition} 
  \end{figure}
Figure \ref{fig_example_proposition} depicts examples of $\Lambda_t$, $L_t$, $L_t^{-1}$, $B_{t|t}$, and $B_{t|t-1}$ in the MRF-lp. Each matrix shows the sparse structure stated in Proposition \ref{prop_property_sparse_mrf_lp}. 
We use the following proposition to assess the computational complexity of the MRF-lp.

\begin{proposition}
  \label{prop_property_L_L_inv}
  Under Assumptions \ref{assump_y}--\ref{assump_R}, the number of nonzero elements in each column of $L_t$ and $L_t^{-1}$ is $\mathcal{O}(N)$.
\end{proposition}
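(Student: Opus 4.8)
The plan is to reduce the claim to a column‑by‑column count of the nonzeros in the lower‑triangular part of $\Lambda_t$, leaning on Proposition~\ref{prop_property_sparse_mrf_lp}(a) for the block pattern and on Proposition~\ref{prop_property_sparse_mrf_lp}(b) for the fact that the Cholesky factor $L_t$ and its inverse $L_t^{-1}$ inherit that pattern with no extra fill‑in. Since $L_t$ is lower triangular and $L_t^{-1}$ is then also lower triangular, only the diagonal and the strictly sub‑diagonal entries can be nonzero, and by Proposition~\ref{prop_property_sparse_mrf_lp}(b) we have $L_t[i,j]=L_t^{-1}[i,j]=0$ whenever $j<i$ and $(v_i,v_j)\notin E$. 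Hence it suffices to show that, for every column index $j$, the number of vertices $v_i$ with $i\ge j$ adjacent to $v_j$ in $G(V,E)$, plus the diagonal entry, is at most $N$.

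The second step is to read $E$ off the multi‑resolution structure. A column index $j$ of $\Lambda_t$ corresponds to a knot of some resolution‑$q$ subregion $D_{j_1,\ldots,j_q}$, i.e.\ to a column of $B_{t|t-1}$ which, by Proposition~\ref{prop_mra-lp_B} (inherited via Proposition~\ref{prop_property_sparse_mrf_lp}(c)), is supported on grid points of $D_{j_1,\ldots,j_q}$. Because $H_t$ only subsamples (Assumption~\ref{assump_y}) and $R_t$, hence $R_t^{-1}$, is block diagonal with respect to the finest partition (Assumption~\ref{assump_R}), the entry $\Lambda_t[i,j]$ can be nonzero only when the subregions of $v_i$ and $v_j$ are nested in one another; equivalently, by Proposition~\ref{prop_property_sparse_mrf_lp}(a), the neighbours of $v_j$ in $G$ are precisely the knots of (i) $D_{j_1,\ldots,j_q}$ itself, (ii) its unique ancestor subregion at each coarser resolution $p=0,\ldots,q-1$, and (iii) all $J^{\,p-q}$ of its descendant subregions at each finer resolution $p=q+1,\ldots,M$.

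The third step uses the column ordering $B=(B^M\;B^{M-1}\;\cdots\;B^0)$ from Algorithm~\ref{alg_approx_cov_mat}: columns of a coarser resolution always carry larger global indices than those of a finer one. Therefore, among the neighbours of $v_j$, those with index $i\ge j$ lie only in resolutions $0,\ldots,q$: the descendant knots in (iii) all have index strictly below $j$ and drop out; each coarser ancestor in (ii) contributes exactly its $r_p^{\prime}$ indices, all exceeding $j$; and within resolution $q$ at most $r_q^{\prime}$ indices remain. Counting the diagonal entry as well, column $j$ of $L_t$ has at most $\sum_{p=0}^{q} r_p^{\prime}\le\sum_{m=0}^{M} r_m^{\prime}=N$ nonzeros, and the identical bound holds for $L_t^{-1}$ by the same part of Proposition~\ref{prop_property_sparse_mrf_lp}(b); this is the asserted $\mathcal{O}(N)$ estimate.

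The one genuinely delicate point — where I expect to have to be careful — is the index bookkeeping in the last step: one must combine the decreasing‑resolution block order of $B$ with the lexicographic order of subregions and the order of knots inside a subregion to be sure that it is exactly the descendant contributions (iii) that sit above the diagonal while every ancestor block (ii) sits entirely below it. Beyond that, the argument is a direct reading of Proposition~\ref{prop_property_sparse_mrf_lp}(a),(b), and no new estimate is needed; Assumption~\ref{assump_y_one_observation} enters only, through those propositions, in fixing the block pattern and plays no role in the count.
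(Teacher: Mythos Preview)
Your argument is correct. You reduce the claim to counting, for each column index $j$, the neighbours $v_i$ with $i\ge j$ in the graph $G(V,E)$ of Proposition~\ref{prop_property_sparse_mrf_lp}(b), and then read off from Proposition~\ref{prop_property_sparse_mrf_lp}(a) and the column ordering $B=(B^M\;\cdots\;B^0)$ that only the same-subregion knots and the unique chain of ancestor knots survive, giving at most $\sum_{p=0}^{q}r_p^{\prime}\le N$ nonzeros. The index bookkeeping you flag as delicate is indeed the only place to be careful, and you handle it correctly: since coarser resolutions occupy later column blocks of $B$, every ancestor index lies strictly below $j$ while every descendant index lies strictly above it, regardless of the lexicographic order inside a block.

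The paper's own proof is much terser: it simply invokes Proposition~4 of \cite{Jurek_2021} for $L_t$ and then appeals to Proposition~\ref{prop_property_sparse_mrf_lp}(b) to transfer the bound to $L_t^{-1}$. Your proof is not a different method so much as an explicit unpacking of that citation, carried out directly from the block description in Proposition~\ref{prop_property_sparse_mrf_lp}(a). The advantage of your version is that it is self-contained within the present paper and treats $L_t$ and $L_t^{-1}$ in a single stroke via Proposition~\ref{prop_property_sparse_mrf_lp}(b); the paper's version is shorter but relies on the reader consulting \cite{Jurek_2021}.
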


We will derive the theoretical order of the computational cost of the MRF-lp at each time step $t$. In the remaining part of this subsection, we further assume for simplicity that $r_m = r$, $r^{\prime}_{m} = r^{\prime}$, and $\left| I_{j_1,\ldots,j_M} \right| = \mathcal{O}(r)$. Thus, $N=(M+1)r^{\prime}$ and $N^{\prime} = \mathcal{O}(n)$. Furthermore, we require the following assumption which corresponds to Assumption 2 of \cite{Jurek_2021}.

\begin{assumption}
  \label{assump_A}
  $A_t$ has 
  $\mathcal{O}(r^{\prime})$ nonzero elements per row. Moreover, for a constant $c$, 
\begin{align}
  \left| \left\{ I_{j_1,\ldots,j_M} \mid \exists j \in I_{j_1,\ldots,j_M} \; \text{such that} \; A_t[i,j] \neq 0 \right\} \right| \le c, \quad i=1,\ldots,n, \; t=1,\ldots, T. \label{eq_assump_A}
\end{align}
\end{assumption}

Assumption \ref{assump_A} implies that the column numbers of $\mathcal{O}(r^{\prime})$ nonzero elements per row in $A_t$ belong to at most $c$ distinct $I_{j_1,\ldots,j_M}$. For example, in Sections \ref{subsec_simulation_study} and \ref{subsec_real_data}, $A_t$ derived from the discretization of the advection-diffusion equation satisfies \eqref{eq_assump_A} with $c=5$ (see Section \ref{suppl_advection_diffusion} for details).

In Step 2 of Algorithm \ref{alg_mrf_lp}, we can obtain $\bm{\mu}_{t|t-1} = A_t \bm{\mu}_{t-1|t-1}$ in $\mathcal{O}(n r^{\prime})$ from Assumption \ref{assump_A}. Since $B_{t|t-1}^F = A_t B_{t-1|t-1}$ has $\mathcal{O}(N)$ nonzero elements per row from Proposition \ref{prop_property_sparse_mrf_lp} (c) and Assumption \ref{assump_A}, $B_{t|t-1}^F$ is $\mathcal{O}(n r^{\prime} N)$ for the operation count. The input matrices to Algorithm \ref{alg_approx_cov_mat} can be computed in $\mathcal{O}(nNN^{\prime \prime})$ time where $N^{\prime \prime} = (M+1) r$. 
In Algorithm \ref{alg_approx_cov_mat}, from \cite{Golub_2012}, the eigenvalues and eigenvectors for all of $V_{j_1,\ldots,j_m}^m$ ($m = 0,\ldots, M$, $1 \le j_i \le J$, $i = 1,\ldots, M$) can be obtained in $\mathcal{O}(n r^{2})$, immediately leading to $\Phi_{j_1,\ldots,j_m}$, $\widehat{V}_{j_1,\ldots,j_m}^{m}$, and $\widehat{V}_{j_1,\ldots,j_m}^{{m}^{-1/2}}$ as shown in Section \ref{subsec_select_phi}. Since $\widehat{V}_{j_1,\ldots,j_m}^{m}$ and $\widehat{V}_{j_1,\ldots,j_m}^{{m}^{-1/2}}$ are the diagonal matrices, the total computational costs of $W_{j_1,\ldots,j_m}^m$ ($m=1,\ldots,M$, $1 \le j_i \le J$, $i = 1,\ldots, M$) and $B_{j_1,\ldots,j_m}$ ($m=0,\ldots,M$, $1 \le j_i \le J$, $i = 1,\ldots, M$) in Step 3 of Algorithm \ref{alg_approx_cov_mat} are $\mathcal{O}(n N^{{\prime \prime}^2})$ and $\mathcal{O}(n r^{\prime} N^{\prime \prime})$, respectively. 
The computational complexity of each procedure in Step 3 of Algorithm \ref{alg_mrf_lp} is derived based on the proof of Proposition 6 of \cite{Jurek_2021}. From a detailed calculation, computing $\Lambda_t$ requires $\mathcal{O}(n N^{{\prime\prime}^2})$ time. 
In addition, the argument of Sections 3.1 and 3.2 of \cite{Toledo_2021} shows that the computational complexity of the Cholesky decomposition is on the order of the sum of squares of the number of the nonzero elements in each column of $L_t$. Thus, from Proposition \ref{prop_property_L_L_inv}, the computational complexity of calculating $L_t$ is $\mathcal{O}(n N^2)$. 
Let $\bm{e}_i$ denote the $i$th column of the identity matrix. The $i$th column $\bm{a}_i$ of $L_t^{-1}$ is obtained by solving $L_t \bm{a}_i = \bm{e}_i$. Since $\bm{a}_i$ and each column of $L_t$ have $\mathcal{O}(N)$ nonzero elements from Proposition \ref{prop_property_L_L_inv}, the computational complexity of calculating $\bm{a}_i$ is $\mathcal{O}(N^2)$. Therefore, we have $L_t^{-1}$ in $\mathcal{O}(n N^2)$ time. For $B_{t|t} = B_{t|t-1} L_t^{{-1}^{\top}}$, its computational complexity is $\mathcal{O}(n N^2)$ because $B_{t|t}$ has $\mathcal{O}(n N)$ nonzero elements from Propositions \ref{prop_mra-lp_B} and \ref{prop_property_sparse_mrf_lp} (c), and each row of $B_{t|t-1}$ includes $\mathcal{O}(N)$ nonzero elements from Proposition \ref{prop_mra-lp_B}. 
As a consequence, Algorithm \ref{alg_mrf_lp} avoids $\mathcal{O}(n^3)$ operations at each time step $t$, unlike the Kalman filter.

\section{Extension to nonlinear and non-Gaussian case\label{sec_nonlinear_non-gaussian}}

Based on \cite{Zilber_2021} and \cite{Jurek_2022}, this section will discuss an extension of the MRF-lp to a nonlinear and non-Gaussian state-space model.

\subsection{Linear and non-Gaussian case\label{subsec_linear_non-gaussian}}

For $t = 1,\ldots, T$, we consider the following linear and non-Gaussian state-space model
\begin{align}
  y_{ti} | \bm{x}_t &\overset{ind.}{\sim} g_{ti}(y_{ti} \mid x_{ti}), \quad i \in I_{\bm{y}_t},   \notag
  \\ 
  \bm{x}_t | \bm{x}_{t-1} &\sim \mathcal{N}_{n_\mathcal{G}} \left( A_t \bm{x}_{t-1}, Q_t \right), \label{eq_state_model}
\end{align}
where $I_{\bm{y}_t}\subseteq I_0 = \{ 1,\ldots,n_{\mathcal{G}} \}$ ($n_t = \left| I_{\bm{y}_t} \right|$) is a set of indices of grid points on which each element $y_{ti}$ of $\bm{y}_t$ is observed and arranged in ascending order by the index $i$ of $\bm{g}_i$, meaning that Assumption \ref{assump_y} is already satisfied. $g_{ti}$ is a known distribution from the exponential family.

Our objective is to obtain $\bm{\mu}_{t|t}$ of an approximation $\hat{p}_L (\bm{x}_t|\bm{y}_{1:t}) = \mathcal{N}_{n_\mathcal{G}} \left( \bm{\mu}_{t|t}, \Sigma_{t|t} \right)$ of the filtering distribution $p(\bm{x}_t|\bm{y}_{1:t})$ by utilizing the Laplace approximation. When this approximation for the filtering distribution holds, the forecast distribution $p(\bm{x}_t|\bm{y}_{1:t-1})$ is also approximated by $\hat{p} (\bm{x}_t|\bm{y}_{1:t-1}) = \mathcal{N}_{n_\mathcal{G}} \left( \bm{\mu}_{t|t-1}, \Sigma_{t|t-1} \right)$. Since \eqref{eq_state_model} is linear, $\bm{\mu}_{t|t-1}$ and $\Sigma_{t|t-1}$ are calculated by the same procedure as the forecast step of the Kalman filter. For the filtering distribution, it follows that $p(\bm{x}_t|\bm{y}_{1:t}) \propto p(\bm{y}_t | \bm{x}_t) p(\bm{x}_t|\bm{y}_{1:t-1})$. By substituting $\hat{p} (\bm{x}_t|\bm{y}_{1:t-1})$ into $p(\bm{x}_t|\bm{y}_{1:t-1})$, we have $\hat{p}(\bm{x}_t|\bm{y}_{1:t}) \propto p(\bm{y}_t | \bm{x}_t) \hat{p}(\bm{x}_t|\bm{y}_{1:t-1}) = \prod_{i \in I_{\bm{y}_t}} g_{ti}(y_{ti}|x_{ti}) \mathcal{N}_{n_\mathcal{G}} \left( \bm{\mu}_{t|t-1}, \Sigma_{t|t-1} \right)$. By applying the Laplace approximation to $\hat{p}(\bm{x}_t|\bm{y}_{1:t})$, we will derive the approximated filtering distribution $\hat{p}_L (\bm{x}_t|\bm{y}_{1:t}) = \mathcal{N}_{n_\mathcal{G}} \left( \bm{\mu}_{t|t}, \Sigma_{t|t} \right)$.

For $i,j \in I_0$, by defining
\begin{align*}
  \left( \bm{u}_{\bm{x}_t} \right)_i &= 
  \begin{cases}
    \frac{d}{d x_{ti}} \log g_{ti}(y_{ti}|x_{ti}), \quad i \in I_{\bm{y}_t},\\
    0, \quad i \notin I_{\bm{y}_t},
  \end{cases}
  \\
  \left( D_{\bm{x}_t} \right)_{ij} &= 
  \begin{cases}
    -\frac{d^2}{d x_{ti}^2} \log g_{ti}(y_{ti}|x_{ti}), \quad i,j \in I_{\bm{y}_t} \; \mbox{and} \; i=j,\\
    0, \quad \mbox{otherwise},
  \end{cases}
\end{align*}
it follows that
\begin{align*}
  \frac{\partial}{\partial \bm{x}_t} \log \hat{p}(\bm{x}_t|\bm{y}_{1:t}) =& - \Sigma_{t|t-1}^{-1} (\bm{x}_t - \bm{\mu}_{t|t-1}) + \bm{u}_{\bm{x}_t}, \\
  -\frac{\partial^2}{\partial \bm{x}_t \partial \bm{x}_t^{\top}} \log \hat{p}(\bm{x}_t|\bm{y}_{1:t}) =& \Sigma_{t|t-1}^{-1} + D_{\bm{x}_t}\\
  =& W_{\bm{x}_t}, \quad (\text{say}).
\end{align*}
The diagonal elements of $D_{\bm{x}_t}$ are assumed to be greater than or equal to 0. Then, the Laplace approximation of $\hat{p}(\bm{x}_t|\bm{y}_{1:t})$ is
\begin{align*}
  \hat{p}_L (\bm{x}_t|\bm{y}_{1:t}) = \mathcal{N}_{n_\mathcal{G}} \left( \bm{\mu}_{t|t}, \Sigma_{t|t} \right),
\end{align*}
where $\bm{\mu}_{t|t} = \bm{\alpha}_t$, $\Sigma_{t|t} = W_{\bm{\alpha}_t}^{-1}$, and $\bm{\alpha}_t = \underset{\bm{x}_t \in \mathbb{R}^{n_\mathcal{G}}} {\operatorname{argmax}} \log \hat{p}(\bm{x}_t|\bm{y}_{1:t})$. By using the Newton--Raphson method, we can obtain $\bm{\alpha}_t$. The Newton--Raphson update is expressed as
\begin{align}
  h(\bm{x}_t) =& \bm{x}_t - \left( \frac{\partial^2}{\partial \bm{x}_t \partial \bm{x}_t^{\top}} \log \hat{p}(\bm{x}_t|\bm{y}_{1:t}) \right)^{-1} \left( \frac{\partial}{\partial \bm{x}_t} \log \hat{p}(\bm{x}_t|\bm{y}_{1:t}) \right) \notag \\
  =& \bm{x}_t + W_{\bm{x}_t}^{-1} \left\{ \Sigma_{t|t-1}^{-1} \left( \bm{\mu}_{t|t-1} - \bm{x}_t \right) + \bm{u}_{\bm{x}_t} \right\} \notag \\  
  =& \bm{\mu}_{t|t-1} + W_{\bm{x}_t}^{-1} \left\{ D_{\bm{x}_t} \left( \bm{x}_t - \bm{\mu}_{t|t-1} \right) + \bm{u}_{\bm{x}_t} \right\}. \label{eq_newton-raphson_update}
\end{align}
To obtain $\bm{\alpha}_t$, we start with an initial value $\bm{x}_{t}^{(0)}$ and update the current guess $\bm{x}_{t}^{(l)}$ ($l=0,1,2,\ldots$) iteratively until 
$\left\| \bm{x}_{t}^{(l+1)} - \bm{x}_{t}^{(l)} \right\|$ $\left(\bm{x}_{t}^{(l+1)} = h \left(\bm{x}_{t}^{(l)} \right)\right)$ 
is very small. However, \eqref{eq_newton-raphson_update} requires the calculation of $W_{\bm{x}_{t}^{(l)}}^{-1}$ at each iteration of the Newton--Raphson update, which causes the computational burden for large spatio-temporal datasets because its computational complexity is $O(n_\mathcal{G}^3)$. 

To resolve this problem, similar to Algorithm \ref{alg_mrf_lp}, we propose the MRF-lp for linear and non-Gaussian state-space models by applying Algorithm \ref{alg_approx_cov_mat} to
 $\Sigma_{t|t-1}$. The proposed algorithm enables us to calculate efficiently \eqref{eq_newton-raphson_update} (see Appendix \ref{subappend_mrf_lp_linear_non-gaussian} for the derivation of the algorithm).

\begin{algorithm}[Multi-resolution filter via linear projection (MRF-lp) for linear and non-Gaussian state-space models]
  \label{alg_mrf_lp_linear_non-gaussian}
  
  Given $M \ge 0$, $D_{j_1,\ldots,j_m}$ ($m = 0,\ldots, M$, $1 \le j_i \le J_i$, $i = 1,\ldots, M$), $I_{j_1,\ldots,j_m}$ ($m = 0,\ldots, M$, $1 \le j_i \le J_i$, $i = 1,\ldots, M$), $K_{j_1,\ldots,j_m}$ ($m = 0,\ldots, M$, $1 \le j_i \le J_i$, $i = 1,\ldots, M$), and $\varepsilon > 0$, find $\bm{\mu}_{t|t}$ ($t=1,\ldots,T$).
  
  \bigskip
  \noindent
  \textit{Step} 1. Calculate $B_{0|0}$ by applying Algorithm \ref{alg_approx_cov_mat} to $\Sigma_{0|0}$. Set $t=1$ and $l=0$.
  
  \noindent
  \textit{Step} 2. Calculate $\bm{\mu}_{t|t-1} = A_t \bm{\mu}_{t-1|t-1}$ and $B_{t|t-1}^F = A_t B_{t-1|t-1}$. Next, by applying Algorithm \ref{alg_approx_cov_mat} to $B_{t|t-1}^F B_{t|t-1}^{{F}^{\top}} + Q_t$, we obtain $B_{t|t-1}$.
  
  \noindent
  \textit{Step} 3. We have
  \begin{align*}
  \Lambda_{\bm{x}_{t}^{(l)}} = \bm{\mathrm{I}}_{N^{\prime}} + B_{t|t-1}^{\top} D_{\bm{x}_{t}^{(l)}} B_{t|t-1}.
  \end{align*}
  By computing the Cholesky decomposition of $\Lambda_{\bm{x}_{t}^{(l)}}$, we obtain the lower triangular matrix $L_{\bm{x}_{t}^{(l)}}$ as the Cholesky factor and calculate $B_{\bm{x}_{t}^{(l)}} = B_{t|t-1} L_{\bm{x}_{t}^{(l)}}^{{-1}^{\top}}$. Finally, calculate 
  \begin{align}
    \bm{x}_{t}^{(l+1)} = \bm{\mu}_{t|t-1} + B_{\bm{x}_{t}^{(l)}} B_{\bm{x}_{t}^{(l)}}^{\top} \left\{ D_{\bm{x}_{t}^{(l)}} \left( \bm{x}_{t}^{(l)} - \bm{\mu}_{t|t-1} \right) + \bm{u}_{\bm{x}_{t}^{(l)}} \right\}. \label{eq_fast_update_equation}
  \end{align}
  
  \noindent
  \textit{Step} 4. If $\| \bm{x}_{t}^{(l+1)} - \bm{x}_{t}^{(l)} \| \ge \varepsilon$, set $l=l+1$ and go to Step 3. Otherwise, 
  set $B_{t|t} = B_{\bm{x}_{t}^{(l+1)}}$, $\bm{\mu}_{t|t} = \bm{x}_{t}^{(l+1)}$, and $t = t+1$. In addition, if $t \le T$, set $l=0$ and go to Step 2. Otherwise, go to Step 5.

  \noindent
  \textit{Step} 5. Output $\bm{\mu}_{t|t}$ ($t = 1,\ldots,T$).
  
  \end{algorithm}

Algorithm \ref{alg_mrf_lp_linear_non-gaussian} avoids $\mathcal{O} \left(n_\mathcal{G}^3 \right)$ operations at each iteration in Step 3 as discussed in Section \ref{subsec_computational_complexity_nonlinear_nongaussian}. If $g_{ti}(y_{ti}|x_{ti}) = \mathcal{N} \left( x_{ti}, \tau^2 \right)$, Algorithm \ref{alg_mrf_lp_linear_non-gaussian} is compatible with Algorithm \ref{alg_mrf_lp} where $R_t = \tau^2 \bm{\mathrm{I}}_{n_t}$. 

\subsection{Nonlinear and non-Gaussian case\label{subsec_nonlinear_non-gaussian}}

Next, we extend Algorithm \ref{alg_mrf_lp_linear_non-gaussian} to the nonlinear and non-Gaussian case. For $t = 1,\ldots, T$, we focus on the following nonlinear and non-Gaussian state-space model
\begin{align}
  y_{ti} | \bm{x}_t &\overset{ind.}{\sim} g_{ti}(y_{ti} \mid x_{ti}), \quad i \in I_{\bm{y}_t},\\
  \bm{x}_t | \bm{x}_{t-1} &\sim \mathcal{N}_{n_\mathcal{G}} \left( \mathcal{A}_t (\bm{x}_{t-1}), Q_t \right), 
  \label{eq_system_model_nonlinear}
  \noeqref{eq_system_model_nonlinear}
\end{align}
  where $\mathcal{A}_t$ is a nonlinear differentiable function from $\mathbb{R}^{n_ \mathcal{G}}$ to $\mathbb{R}^{n_ \mathcal{G}}$.

Our goal is the same as the one in Section \ref{subsec_linear_non-gaussian}. Following \cite{Jurek_2022}, we extend the extended Kalman filter via the first-order Taylor expansion \citep[e.g.,][]{Durbin_2012} to the non-Gaussian case in the same way as the argument of Section \ref{subsec_linear_non-gaussian}, and then use Algorithm \ref{alg_approx_cov_mat} for $\Sigma_{t|t-1}$. The forecast distribution is assumed to follow the Gaussian distribution in the extended Kalman filter. Finally, we obtain an extension of Algorithm \ref{alg_mrf_lp_linear_non-gaussian} to the nonlinear case. 

\begin{algorithm}[Multi-resolution filter via linear projection (MRF-lp) for nonlinear and non-Gaussian state-space models]
  \label{alg_mrf_lp_nonlinear_non-gaussian}
  
Given $M \ge 0$, $D_{j_1,\ldots,j_m}$ ($m = 0,\ldots, M$, $1 \le j_i \le J_i$, $i = 1,\ldots, M$), $I_{j_1,\ldots,j_m}$ ($m = 0,\ldots, M$, $1 \le j_i \le J_i$, $i = 1,\ldots, M$), $K_{j_1,\ldots,j_m}$ ($m = 0,\ldots, M$, $1 \le j_i \le J_i$, $i = 1,\ldots, M$), and $\varepsilon > 0$, find $\bm{\mu}_{t|t}$ ($t=1,\ldots,T$).
  
  \bigskip
  \noindent
  \textit{Step} 1. Calculate $B_{0|0}$ by applying Algorithm \ref{alg_approx_cov_mat} to $\Sigma_{0|0}$. Set $t=1$ and $l=0$.
  
  \noindent
  \textit{Step} 2. Calculate the Jacobian matrix $A_t^{\prime} = \frac{\partial \mathcal{A}_t (\bm{x}_{t-1})}{\partial \bm{x}_{t-1}^{\top}} \bigl\vert_{\bm{x}_{t-1} = \bm{\mu}_{t-1|t-1}}$. We have $\bm{\mu}_{t|t-1} = \mathcal{A}_t \left( \bm{\mu}_{t-1|t-1} \right)$ and $B_{t|t-1}^F = A_t^{\prime} B_{t-1|t-1}$. Next, by applying Algorithm \ref{alg_approx_cov_mat} to $B_{t|t-1}^F B_{t|t-1}^{{F}^{\top}} + Q_t$, we obtain $B_{t|t-1}$.

  \noindent
  \textit{Steps} 3 and 4. The procedure is the same as that in Steps 3 and 4 in Algorithm \ref{alg_mrf_lp_linear_non-gaussian}.

  \noindent
  \textit{Step} 5. Output $\bm{\mu}_{t|t}$ ($t = 1,\ldots,T$).
  
  \end{algorithm}

  If $\mathcal{A}_t (\bm{x}_{t-1}) = A_t \bm{x}_{t-1}$, then Algorithm \ref{alg_mrf_lp_nonlinear_non-gaussian} is identical with Algorithm \ref{alg_mrf_lp_linear_non-gaussian}. Furthermore, 
  Algorithm \ref{alg_mrf_lp_nonlinear_non-gaussian} with $\Phi_{j_1,\ldots,j_m} = \bm{\mathrm{I}}_{r_{j_1,\ldots, j_m}}$ can be regarded as an extension of the MRF of \cite{Jurek_2021} to the nonlinear and non-Gaussian case.

\subsection{Computational complexity in nonlinear and non-Gaussian case
\label{subsec_computational_complexity_nonlinear_nongaussian}}

As in Section \ref{subsec_computational_complexity}, it is assumed that $r_{j_1,\ldots, j_m} = r_m$, $r^{\prime}_{j_1,\ldots, j_m} = r^{\prime}_{m}$, $J_m = J$, $n_\mathcal{G} = n$, and $n_t = c_t n$ ($0<c_t \le 1$) only in Sections \ref{subsec_computational_complexity} and \ref{subsec_computational_complexity_nonlinear_nongaussian}. 

\begin{proposition}
  \label{prop_property_sparse_mrf_lp_non-Gaussian}
Suppose that Assumption \ref{assump_y_one_observation} is satisfied. Even if we replace $\Lambda_t$, $L_t$, $L_t^{-1}$, and $B_{t|t}$ in Propositions \ref{prop_property_sparse_mrf_lp} (a)--(c) and \ref{prop_property_L_L_inv} by $\Lambda_{\bm{x}_{t}^{(l)}}$, $L_{\bm{x}_{t}^{(l)}}$, $L_{\bm{x}_{t}^{(l)}}^{-1}$, and $B_{\bm{x}_{t}^{(l)}}$, respectively, their statements hold, aside from the assumptions in Propositions \ref{prop_property_sparse_mrf_lp} (a)--(c) and \ref{prop_property_L_L_inv}.

\end{proposition}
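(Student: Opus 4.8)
The plan is to reduce Proposition \ref{prop_property_sparse_mrf_lp_non-Gaussian} to the already-established Propositions \ref{prop_property_sparse_mrf_lp} and \ref{prop_property_L_L_inv} by exhibiting a structural correspondence between the Gaussian and non-Gaussian updates. The key observation is that in the Gaussian case the matrix whose sparsity drives everything is $\Lambda_t = \bm{\mathrm{I}}_{N^{\prime}} + B_{t|t-1}^{\top} H_t^{\top} R_t^{-1} H_t B_{t|t-1}$, whereas in the linear/non-Gaussian case it is $\Lambda_{\bm{x}_{t}^{(l)}} = \bm{\mathrm{I}}_{N^{\prime}} + B_{t|t-1}^{\top} D_{\bm{x}_{t}^{(l)}} B_{t|t-1}$. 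So the first step is to check that $D_{\bm{x}_{t}^{(l)}}$ plays exactly the role that $H_t^{\top} R_t^{-1} H_t$ plays in the Gaussian argument, \emph{insofar as the sparsity pattern is concerned}. In the Gaussian setting, Assumption \ref{assump_y} forces $H_t$ to be a row-selection matrix and Assumption \ref{assump_R} forces $R_t$ (hence $R_t^{-1}$, since $R_t$ is block diagonal across the finest subregions) to be block diagonal with respect to the partition $\{D_{j_1,\ldots,j_M}\}$; consequently $H_t^{\top} R_t^{-1} H_t$ is an $n_{\mathcal{G}} \times n_{\mathcal{G}}$ matrix which is block diagonal with respect to $\{D_{j_1,\ldots,j_M}\}$ and supported only on observed indices. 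By the definition of $D_{\bm{x}_t}$ given in Section \ref{subsec_linear_non-gaussian}, it is a \emph{diagonal} $n_{\mathcal{G}} \times n_{\mathcal{G}}$ matrix whose nonzero entries sit exactly at indices $i \in I_{\bm{y}_t}$; a diagonal matrix supported on observed indices is in particular block diagonal with respect to $\{D_{j_1,\ldots,j_M}\}$ and supported on observed indices. Hence $D_{\bm{x}_{t}^{(l)}}$ satisfies every structural property of $H_t^{\top} R_t^{-1} H_t$ that the proofs of Propositions \ref{prop_property_sparse_mrf_lp} and \ref{prop_property_L_L_inv} actually invoke. This is why the hypotheses of those propositions can be dropped in favor of just Assumption \ref{assump_y_one_observation}: Assumption \ref{assump_y} is automatic (the indices $I_{\bm{y}_t}$ are ascending by construction in \eqref{eq_state_model}), and Assumptions \ref{assump_y_one_observation} and \ref{assump_R} are subsumed by the fact that $D_{\bm{x}_t}$ is diagonal and that each $D_{j_1,\ldots,j_M}$ contains an observation.

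Next I would go through the three conclusions in turn. For part (a), the block structure of $\Lambda_{\bm{x}_{t}^{(l)}}$ is determined by $B_{t|t-1}^{\top} (\cdot) B_{t|t-1}$ where $(\cdot)$ is block diagonal with respect to $\{D_{j_1,\ldots,j_M}\}$; since $B_{t|t-1}$ inherits the column-partition and block-sparsity of the output matrix $B$ of Algorithm \ref{alg_approx_cov_mat} (by Proposition \ref{prop_property_sparse_mrf_lp} (c) with $\Lambda_t$ replaced — but note that that proposition's proof of (c) is itself part of what we are re-proving, so I would instead invoke the structural property of $B$ directly from Proposition \ref{prop_mra-lp_B}), the computation $B_{t|t-1}^{\top} D_{\bm{x}_{t}^{(l)}} B_{t|t-1}$ carries over verbatim from the Gaussian computation of $B_{t|t-1}^{\top} H_t^{\top} R_t^{-1} H_t B_{t|t-1}$, giving the same $(M+1)\times(M+1)$ block-of-block-diagonals shape with the same block sizes $J^p r_p^{\prime} \times J^q r_q^{\prime}$. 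For part (b), the graph-theoretic fill-in statement about $L_{\bm{x}_{t}^{(l)}}$ and $L_{\bm{x}_{t}^{(l)}}^{-1}$ depends only on the zero pattern of $\Lambda_{\bm{x}_{t}^{(l)}}$ — i.e., on its adjacency graph $G(V,E)$ — and standard results on Cholesky fill (no new fill outside the graph when the elimination respects the ordering, together with the analogous statement for the triangular solve) apply identically once part (a) has pinned down the zero pattern. For part (c), the claim that $B_{\bm{x}_{t}^{(l)}} = B_{t|t-1} L_{\bm{x}_{t}^{(l)}}^{{-1}^{\top}}$ has its nonzero entries confined to the block-diagonal positions $B_{j_1,\ldots,j_m}$ follows by the same multiplication-of-sparsity-patterns argument as in the Gaussian case, using part (b) for the sparsity of $L_{\bm{x}_{t}^{(l)}}^{-1}$ and Proposition \ref{prop_mra-lp_B} for the row-sparsity of $B_{t|t-1}$. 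Finally, for the Proposition \ref{prop_property_L_L_inv} analogue, the $\mathcal{O}(N)$ nonzeros-per-column bound for $L_{\bm{x}_{t}^{(l)}}$ and $L_{\bm{x}_{t}^{(l)}}^{-1}$ is a direct consequence of part (b) together with the block dimensions from part (a), exactly as before.

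The honest way to present this is therefore not to re-run the arguments of \cite{Jurek_2021} and the earlier propositions, but to isolate the single lemma-like fact that makes the substitution legitimate: \emph{$D_{\bm{x}_{t}^{(l)}}$ is a diagonal matrix whose support lies in $I_{\bm{y}_t}$, and under Assumption \ref{assump_y_one_observation} every finest subregion $D_{j_1,\ldots,j_M}$ meets $I_{\bm{y}_t}$}, so that $D_{\bm{x}_{t}^{(l)}}$ satisfies all the structural hypotheses that the Gaussian proofs place on $H_t^{\top} R_t^{-1} H_t$. Once that is stated, every occurrence of $H_t^{\top} R_t^{-1} H_t$ in the proofs of Propositions \ref{prop_property_sparse_mrf_lp} and \ref{prop_property_L_L_inv} may be replaced by $D_{\bm{x}_{t}^{(l)}}$ without any other change, and the conclusions follow. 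I expect the main obstacle to be purely expository rather than mathematical: one must check carefully that the \emph{specific} places where the original proofs used Assumptions \ref{assump_y} and \ref{assump_R} are precisely the places where diagonality (a strictly stronger property) suffices, and that nothing in those proofs secretly used positive-definiteness of $R_t$ in a way that fails for $D_{\bm{x}_{t}^{(l)}}$, whose diagonal entries are only assumed nonnegative. The Cholesky decomposition of $\Lambda_{\bm{x}_{t}^{(l)}} = \bm{\mathrm{I}}_{N^{\prime}} + B_{t|t-1}^{\top} D_{\bm{x}_{t}^{(l)}} B_{t|t-1}$ is still well-defined because this matrix is symmetric positive definite (the $\bm{\mathrm{I}}_{N^{\prime}}$ term guarantees it, since $B_{t|t-1}^{\top} D_{\bm{x}_{t}^{(l)}} B_{t|t-1}$ is positive semidefinite when $D_{\bm{x}_{t}^{(l)}} \succeq O$), so $L_{\bm{x}_{t}^{(l)}}$ and $L_{\bm{x}_{t}^{(l)}}^{-1}$ exist, and the fill-in and complexity arguments go through unchanged.
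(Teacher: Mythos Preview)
Your proposal is correct and takes essentially the same approach as the paper: both rest on the observation that the diagonal matrix $D_{\bm{x}_{t}^{(l)}}$ preserves the block-sparse structure of $B_{t|t-1}$, so that $\Lambda_{\bm{x}_{t}^{(l)}}$ inherits the same block pattern as $\Lambda_t$, after which parts (b), (c), and the $\mathcal{O}(N)$ column bound follow exactly as in Propositions \ref{prop_property_sparse_mrf_lp} and \ref{prop_property_L_L_inv}. The paper's proof differs only in presentation---it re-derives the block sizes of $B_{t|t-1}^{\top} D_{\bm{x}_{t}^{(l)}} B_{t|t-1}$ explicitly (via $D_{\bm{x}_{t}^{(l)}}^{1/2} B_{t|t-1}$ having the same sparsity as $B_{t|t-1}$) rather than framing it as the substitution $H_t^{\top} R_t^{-1} H_t \mapsto D_{\bm{x}_{t}^{(l)}}$ you use, but the mathematical content is identical.
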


Moreover, assume that $r_m = r$, $r^{\prime}_{m} = r^{\prime}$, and $\left| I_{j_1,\ldots,j_M} \right| = \mathcal{O}(r)$ only in the rest of this subsection. Proposition \ref{prop_property_sparse_mrf_lp_non-Gaussian} implies that $\Lambda_{\bm{x}_{t}^{(l)}}$, $L_{\bm{x}_{t}^{(l)}}$, $L_{\bm{x}_{t}^{(l)}}^{-1}$, $B_{\bm{x}_{t}^{(l)}}$, and $B_{t|t-1}$ in Algorithms \ref{alg_mrf_lp_linear_non-gaussian} and \ref{alg_mrf_lp_nonlinear_non-gaussian} have the same sparse structures as the corresponding matrices in Algorithm \ref{alg_mrf_lp} for all $t$ and $l$. Therefore, by an argument similar to Section \ref{subsec_computational_complexity}, the computational costs of $\Lambda_{\bm{x}_{t}^{(l)}}$, $L_{\bm{x}_{t}^{(l)}}$, $L_{\bm{x}_{t}^{(l)}}^{-1}$, and $B_{\bm{x}_{t}^{(l)}}$ for each $t$ and $l$ are $\mathcal{O}(n N N^{\prime\prime})$, $\mathcal{O}(n N^2)$, $\mathcal{O}(n N^2)$, and $\mathcal{O}(n N^2)$, respectively. In Step 2 of Algorithm \ref{alg_mrf_lp_linear_non-gaussian}, 
we can avoid $\mathcal{O}(n^3)$ operations under Assumption \ref{assump_A}. Consequently, Algorithm \ref{alg_mrf_lp_linear_non-gaussian} does not include $\mathcal{O}(n^3)$ operations such as $W_{\bm{x}_{t}^{(l)}}^{-1}$ at each $t$ and $l$. 

In the nonlinear and non-Gaussian case, Algorithm \ref{alg_mrf_lp_nonlinear_non-gaussian} is the same as Algorithm \ref{alg_mrf_lp_linear_non-gaussian} except for Step 2. In Step 2 of Algorithm \ref{alg_mrf_lp_nonlinear_non-gaussian}, we need to calculate the nonlinear transformation $\mathcal{A}_t \left( \bm{\mu}_{t-1|t-1} \right)$ and its Jacobian matrix $A_t^{\prime}$, and their computational costs depend on the problem setting. Additionally, $A_t^{\prime}$ is not necessarily a sparse matrix. However, in our simulation studies of Section \ref{subsec_simulation_study}, we obtain $\mathcal{A}_t$ and 
sparse $A_t^{\prime}$ by applying the fourth-order Runge-Kutta method to a system of partial differential equations in \cite{Lorenz_2005}, and 
Algorithm \ref{alg_mrf_lp_nonlinear_non-gaussian} decreases the computation time to some extent in comparison with 
the original method without Algorithm \ref{alg_approx_cov_mat}.

\section{Numerical comparisons and illustrations\label{sec_numerical_comparisons_illustrations}}

In this section, we compared our proposed MRF-lp with the MRF and the original methods, including the Kalman filter, by using the simulated and real data based on \cite{Jurek_2021, Jurek_2022, Jurek_2023}. All computations were done by using MATLAB on a single-core machine (4.20 GHz) with 64 GB RAM. 

\subsection{Simulation study\label{subsec_simulation_study}}

Through simulation studies, we evaluated the performances of Algorithms \ref{alg_mrf_lp}--\ref{alg_mrf_lp_nonlinear_non-gaussian}. We used Algorithms \ref{alg_mrf_lp}--\ref{alg_mrf_lp_nonlinear_non-gaussian} with $\Phi_{j_1,\ldots,j_m} = \bm{\mathrm{I}}_{r_{j_1,\ldots, j_m}}$ as the MRF, meaning $r_{j_1,\ldots, j_m} = r_{j_1,\ldots, j_m}^{\prime}$ in the MRF. The original methods in the non-Gaussian case correspond to Algorithms \ref{alg_mrf_lp_linear_non-gaussian} and \ref{alg_mrf_lp_nonlinear_non-gaussian} where Algorithm \ref{alg_approx_cov_mat} is not applied. 
Additionally, we conducted a comparison with the EnKF.

Let $D_0=[0, 1]^2$, $T=20$, and $\bm{\mu}_{0|0} = \bm{0}$. $\bm{x}_t$ was on the $n_\mathcal{G}=34 \times 34=1156$ grid points. The missing locations were selected randomly over $D_0$ at each time, and we set $n_t=0.3 n_\mathcal{G}$ and $R_t = 0.05 \bm{\mathrm{I}}_{n_t}$. $Q_t$ and $\Sigma_{0|0}$ were based on the exponential covariance function with the range parameter $1/0.15$, and their variances 
were 0.1 and 1, respectively. This situation means that most variation stems from the model, but the model error is also unignorable. For the MRF, we considered two cases: $M=2$, $r_{j_1,\ldots,j_m} = 10$, $J_m=2$ and $M=4$, $(r_0,r_{j_1},r_{j_1,j_2},r_{j_1,j_2,j_3},r_{j_1,j_2,j_3,j_4}) = (10,10,10,5,5)$, $J_m=2$. Similarly, for the MRF-lp, we adopted these two cases except for $r_{j_1,\ldots,j_m} = 50$, $r_{j_1,\ldots,j_m}^{\prime} = 10$ in $M=2$ and $(r_0,r_{j_1},r_{j_1,j_2},r_{j_1,j_2,j_3},r_{j_1,j_2,j_3,j_4}) = (50,50,50,10,10)$, $(r_0^{\prime},r_{j_1}^{\prime},r_{j_1,j_2}^{\prime},r_{j_1,j_2,j_3}^{\prime},r_{j_1,j_2,j_3,j_4}^{\prime}) = (10,10,10,5,5)$ in $M=4$. 
We employ the conditions in this paragraph as the baseline in Section \ref{subsec_simulation_study} and state partial changes, if any.

All comparisons were conducted based on the mean squared prediction error (MSPE) of $\bm{\mu}_{t|t}$ for $\bm{x}_t$. At each time point, we calculated the averaged MSPE of each method from the ten iterations and recorded the ratios of the averaged MSPEs of the MRF and MRF-lp relative to that of the original method. Furthermore, we also averaged their averaged MSPEs over time and scaled them relative to that of the original method. Likewise, the average value of the total computational time for the ten iterations was recorded and scaled relative to that of the original method.

First, we assessed the performance of Algorithm \ref{alg_mrf_lp} in the linear and Gaussian case. Only for this case, 
we considered additional scenarios where a part of the conditions in the baseline was changed: 
the small sample case where $n_t=0.1 n_\mathcal{G}$, the low noise case where $R_t = 0.02 \bm{\mathrm{I}}_{n_t}$, and the smooth case where $Q_t$ and $\Sigma_{0|0}$ were based on the Mat\'ern covariance function with the smoothness parameter 1.5. Following \cite{Xu_2007}, \cite{Stroud_2010}, and \cite{Jurek_2021}, $A_t$ was derived from the discretization of the advection-diffusion equation by using the first-order forward differences in time and the centered differences in space (see Section \ref{suppl_advection_diffusion} for the derivation and conditions of $A_t$). As a consequence, $A_t$ becomes a sparse matrix. 

\begin{figure}[p]
  \centering
  \begin{minipage}[t]{6cm}
 \subfigure[Baseline\\ (MRF: 2.513, MRF-lp: 1.927)]{
  \includegraphics[width = 5.2cm,pagebox=artbox,clip]{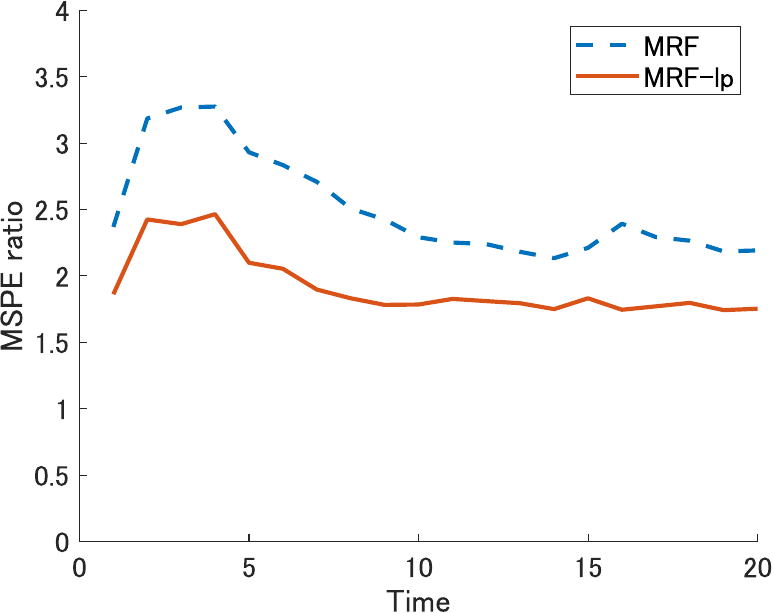}}\\
 \subfigure[Small sample case\\ (MRF: 1.602, MRF-lp: 1.356)]{ 
  \includegraphics[width = 5.2cm,pagebox=artbox,clip]{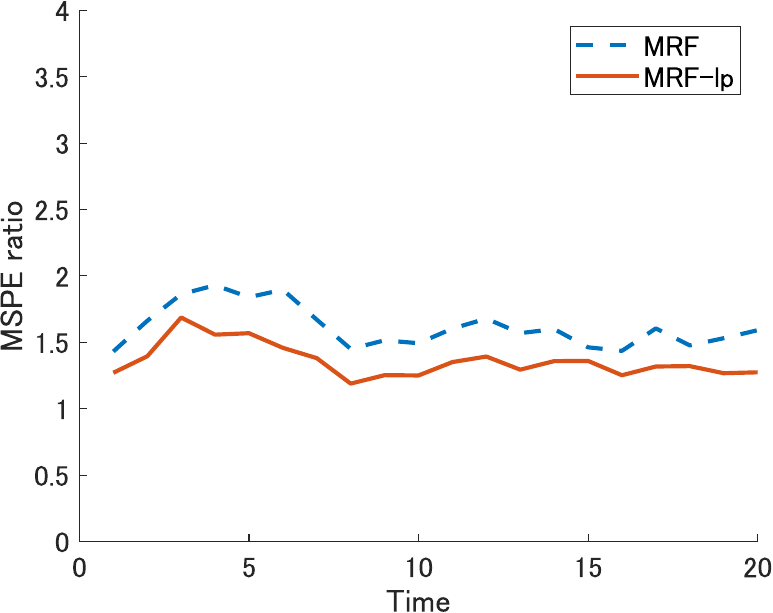}}\\
 \subfigure[Low noise case\\ (MRF: 2.893, MRF-lp: 2.278)]{
  \includegraphics[width = 5.2cm,pagebox=artbox,clip]{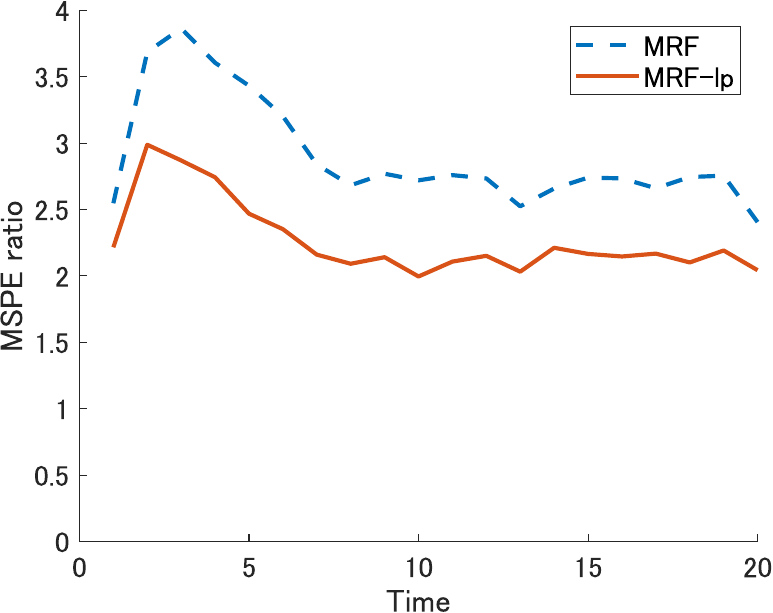}}\\
 \subfigure[Smooth case\\ (MRF: 1.682, MRF-lp: 1.356)]{
  \includegraphics[width = 5.2cm,pagebox=artbox,clip]{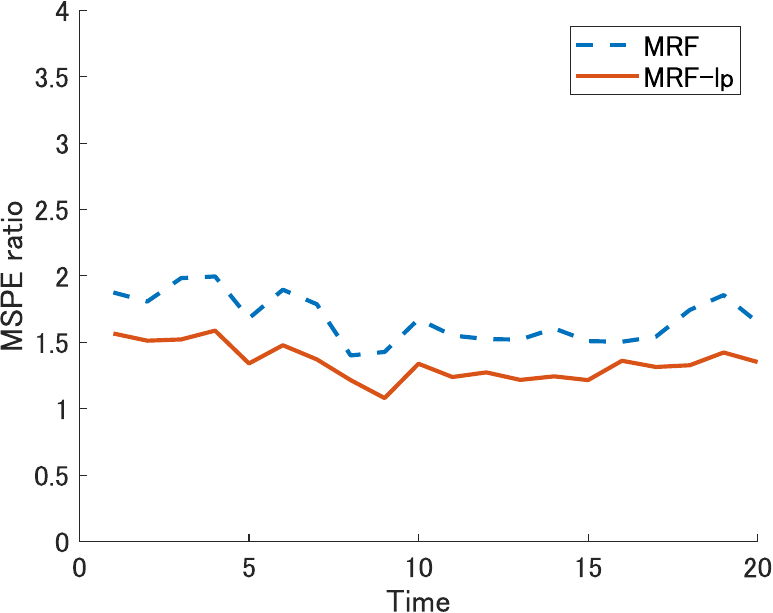}}
    \end{minipage}
    \hspace{1cm}
    \begin{minipage}[t]{6cm}
   \subfigure[Baseline\\ (MRF: 1.466, MRF-lp: 1.269)]{
  \includegraphics[width = 5.2cm,pagebox=artbox,clip]{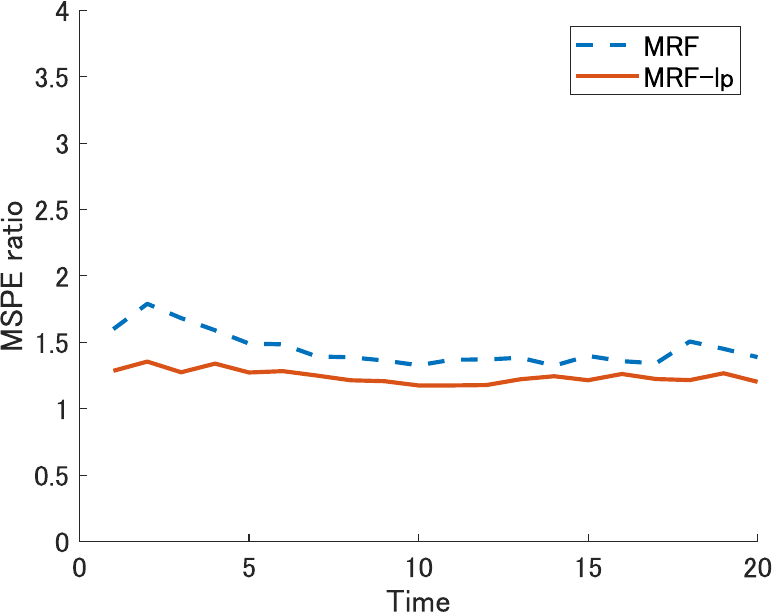}}\\
 \subfigure[Small sample case\\ (MRF: 1.225, MRF-lp: 1.114)]{
  \includegraphics[width = 5.2cm,pagebox=artbox,clip]{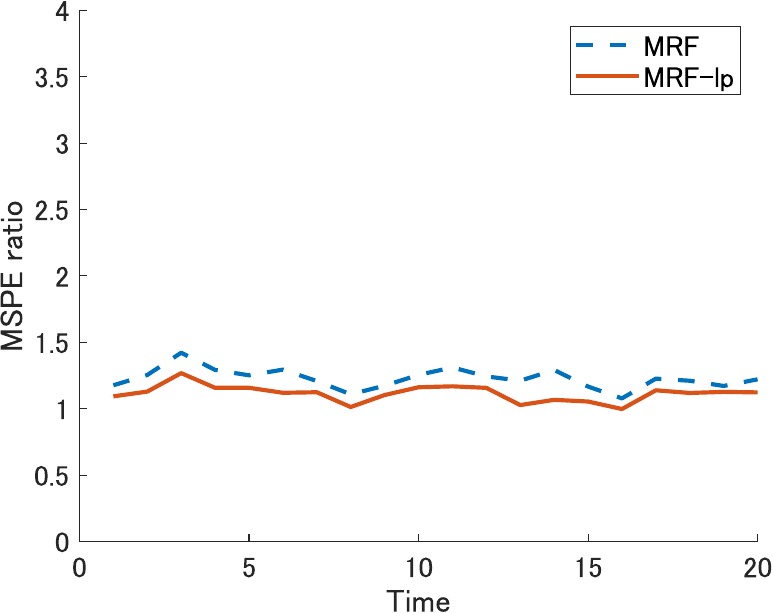}}\\
 \subfigure[Low noise case\\ (MRF: 1.625, MRF-lp: 1.372)]{
  \includegraphics[width = 5.2cm,pagebox=artbox,clip]{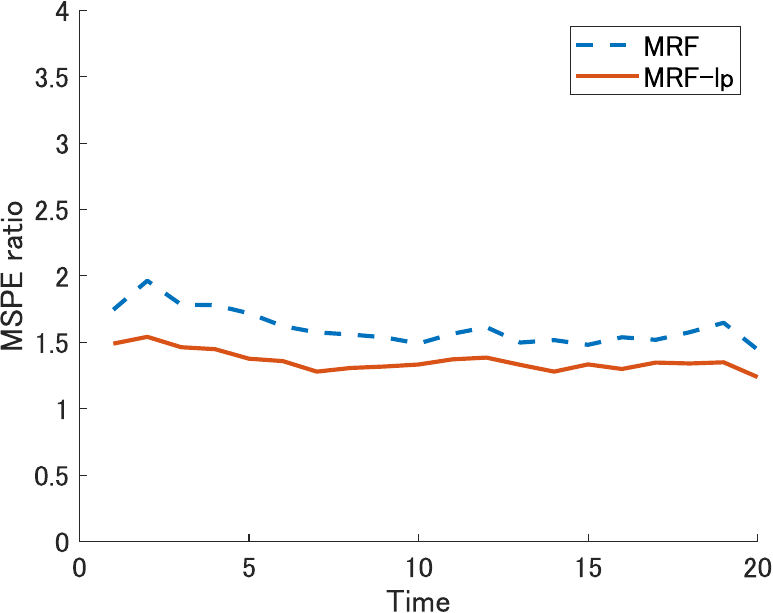}}\\
  \subfigure[Smooth case\\ (MRF: 1.178, MRF-lp: 1.125)]{
  \includegraphics[width = 5.2cm,pagebox=artbox,clip]{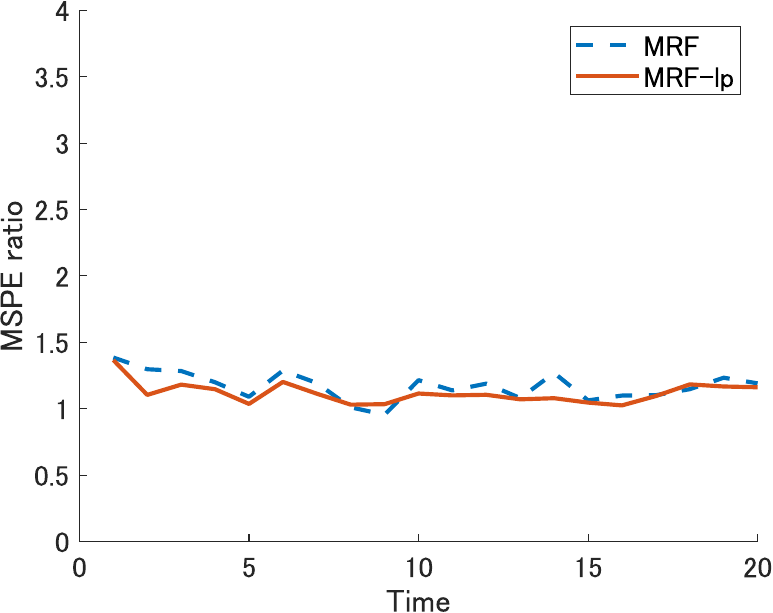}}
    \end{minipage}
  \caption{
  Comparison of the ratio of the MSPE in the linear and Gaussian case. (a)--(d) $M=2$ and (e)--(h) $M=4$. The ratio of the totally averaged MSPE at all of the spatio-temporal grid points is given within parentheses. The relative times of the MRF and MRF-lp were 0.308 and 0.516 in $M=2$ and 0.619 and 0.806 in $M=4$, respectively.  
  }
\label{fig_linear_gaussian_comparison} 
\end{figure}

The results are summarized in Figure \ref{fig_linear_gaussian_comparison}. 
As for the ratio of the MSPE at each time point, the difference between the MRF and the MRF-lp was smaller in $M=4$ than in $M=2$ because the forecast covariance matrix was sufficiently approximated in the MRF and MRF-lp with $M=4$. When $M=4$, the MRF showed a smaller averaged MSPE than the MRF-lp at a few time points, but the totally averaged MSPE of the MRF-lp was smaller than that of the MRF in all cases. The computational time of the MRF-lp was larger than that of the MRF but smaller than that of the original Kalman filter. Consequently, our proposed MRF-lp worked well.

Second, Algorithm \ref{alg_mrf_lp_linear_non-gaussian} was evaluated for linear and non-Gaussian state-space models. We employed the Gamma and Poisson distributions as $g_{ti}$. For the Gamma distribution,
\begin{align*}
  g_{ti}(y_{ti} \mid x_{ti}) &= \frac{b_{ti}^a}{\Gamma (a)} y_{ti}^{a-1} \exp(-b_{ti} y_{ti}) \quad (b_{ti} = a \exp(-x_{ti})),\\
  \left( \bm{u}_{\bm{x}_t} \right)_i &= a \{-1+y_{ti} \exp(-x_{ti})\}, \quad i \in I_{\bm{y}_t},
  \\
  \left( D_{\bm{x}_t} \right)_{ij} &= \frac{a y_{ti}}{\exp(x_{ti})}, \quad i,j \in I_{\bm{y}_t} \; \mbox{and} \; i=j.
\end{align*}
We took $a=3$ as the shape parameter. For the Poisson distribution,
\begin{align*}
  g_{ti}(y_{ti} \mid x_{ti}) &= \frac{\theta_{ti}^{y_{ti}} \exp(-\theta_{ti})}{y_{ti}!} \quad (\theta_{ti} = \exp(x_{ti})),\\
  \left( \bm{u}_{\bm{x}_t} \right)_i &= y_{ti} - \exp(x_{ti}), \quad i \in I_{\bm{y}_t},
  \\
  \left( D_{\bm{x}_t} \right)_{ij} &= \exp(x_{ti}), \quad i,j \in I_{\bm{y}_t} \; \mbox{and} \; i=j.
\end{align*}
In both cases, the elements of $D_{\bm{x}_t}$ are nonnegative. The count of iterations for calculating the evaluation measures was 30. $A_t$ was the same as in the first simulation.

\begin{figure}[t]
  \centering
  \begin{minipage}[t]{6cm}
 \subfigure[Gamma case\\ (MRF: 1.627, MRF-lp: 1.295)]{
  \includegraphics[width = 5.2cm,pagebox=artbox,clip]{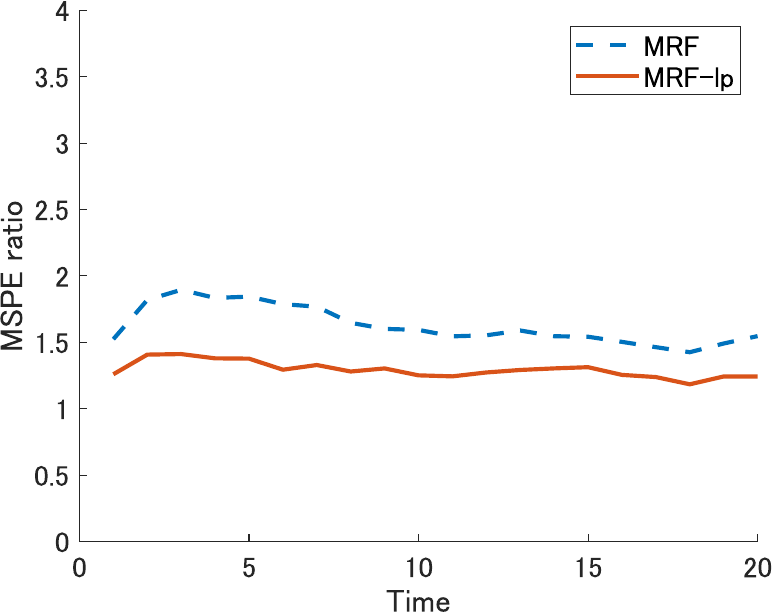}}\\
 \subfigure[Poisson case\\ (MRF: 1.388, MRF-lp: 1.160)]{ 
  \includegraphics[width = 5.2cm,pagebox=artbox,clip]{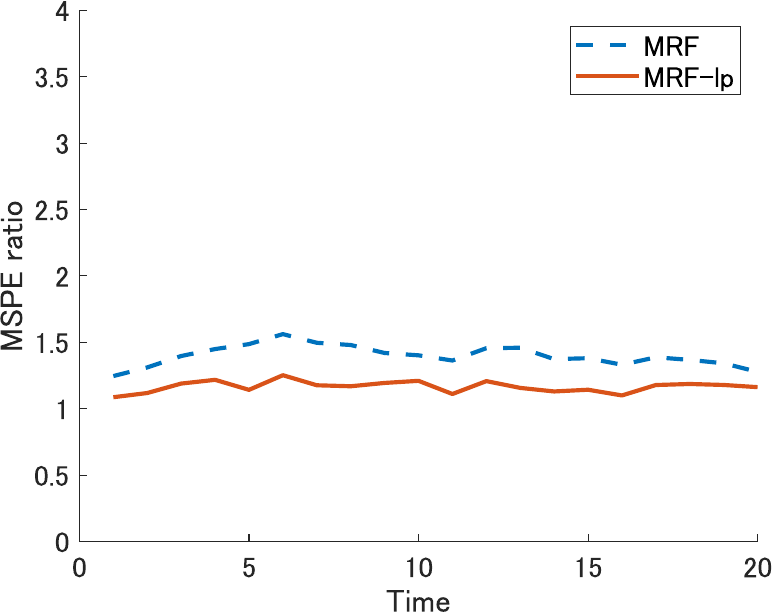}}
    \end{minipage}
    \hspace{1cm}
    \begin{minipage}[t]{6cm}
   \subfigure[Gamma case\\ (MRF: 1.181, MRF-lp: 1.121)]{
  \includegraphics[width = 5.2cm,pagebox=artbox,clip]{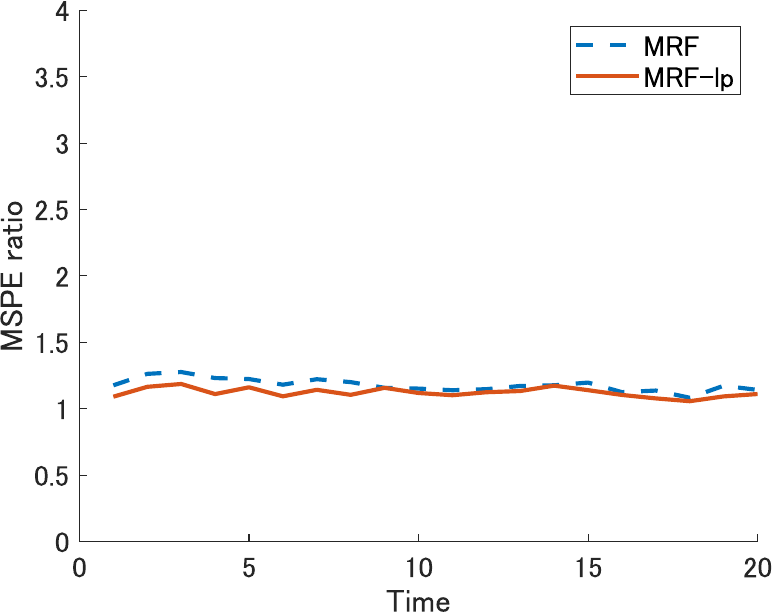}}\\
 \subfigure[Poisson case\\ (MRF: 1.153, MRF-lp: 1.086)]{
  \includegraphics[width = 5.2cm,pagebox=artbox,clip]{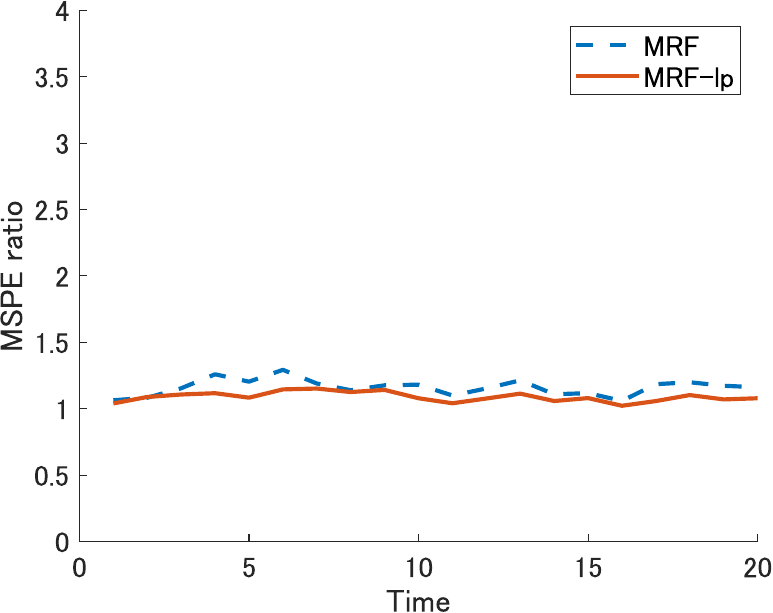}}
    \end{minipage}
  \caption{Comparison of the ratio of the MSPE in the linear and non-Gaussian case. (a) and (b)  $M=2$ and (c) and (d) $M=4$. The ratio of the totally averaged MSPE at all of the spatio-temporal grid points is given within parentheses. The relative times of the MRF and MRF-lp in the Poisson case were 0.339 and 0.394 in $M=2$ and 0.569 and 0.658 in $M=4$, respectively. 
  }
\label{fig_linear_non-gaussian_comparison} 
\end{figure}

The results of Figure \ref{fig_linear_non-gaussian_comparison} were similar to those of the first simulation except for the computational time. The difference in the computational time between the MRF and the MRF-lp was smaller than that of the first simulation. This is because the original method includes the computation of the large inverse matrix at each step of the Newton--Raphson update \eqref{eq_newton-raphson_update}, and
the MRF and the MRF-lp have a greater impact on reducing the computational time of the original method than the first simulation. Consequently, the computational time of the original method was relatively large in comparison with those of the MRF and MRF-lp, leading to a small difference in the computational time between the MRF and the MRF-lp.

Third, Algorithm \ref{alg_mrf_lp_nonlinear_non-gaussian} was evaluated for nonlinear and non-Gaussian state-space models. To derive $\mathcal{A}_t$, building upon \cite{Jurek_2022}, we used the system of partial differential equations in \cite{Lorenz_2005} which is defined on a regular grid on a unit circle. Therefore, in the third simulation, $\bm{x}_t$ was on the $n_\mathcal{G}=1156$ grid points on the unit circle $D_0$. By using the fourth-order Runge--Kutta method, we obtained $\mathcal{A}_t$ and $A_t^{\prime}$. In this case, $A_t^{\prime}$ was a sparse matrix. The time interval in the fourth-order Runge--Kutta method was 0.5, so that $T=20$ means the total number of time steps. A description of deriving $\mathcal{A}_t$ and $A_t^{\prime}$ is given in Section \ref{suppl_lorenz_2005}. The count of iterations and $g_{ti}$ were the same as in the second simulation.

\begin{figure}[t]
  \centering
  \begin{minipage}[t]{6cm}
 \subfigure[Gamma case\\ (MRF: 1.228, MRF-lp: 1.075)]{
  \includegraphics[width = 5.2cm,pagebox=artbox,clip]{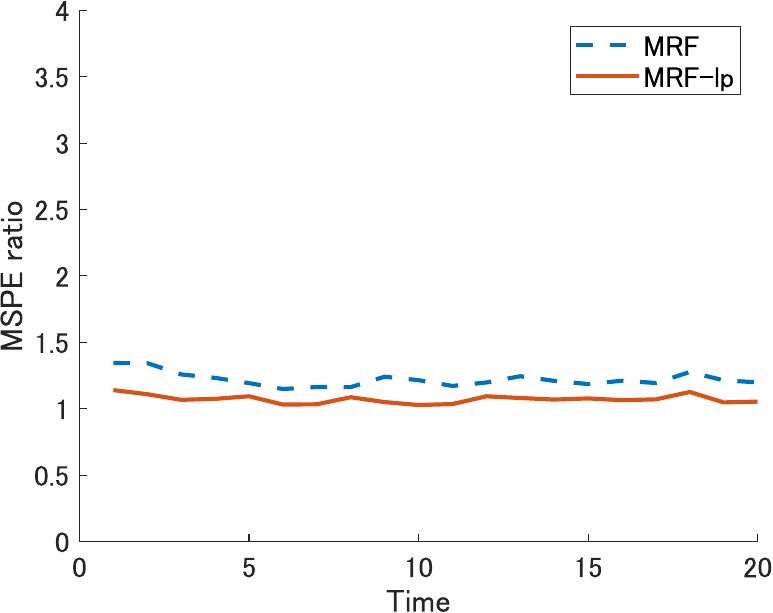}}\\
 \subfigure[Poisson case\\ (MRF: 1.183, MRF-lp: 1.044)]{ 
  \includegraphics[width = 5.2cm,pagebox=artbox,clip]{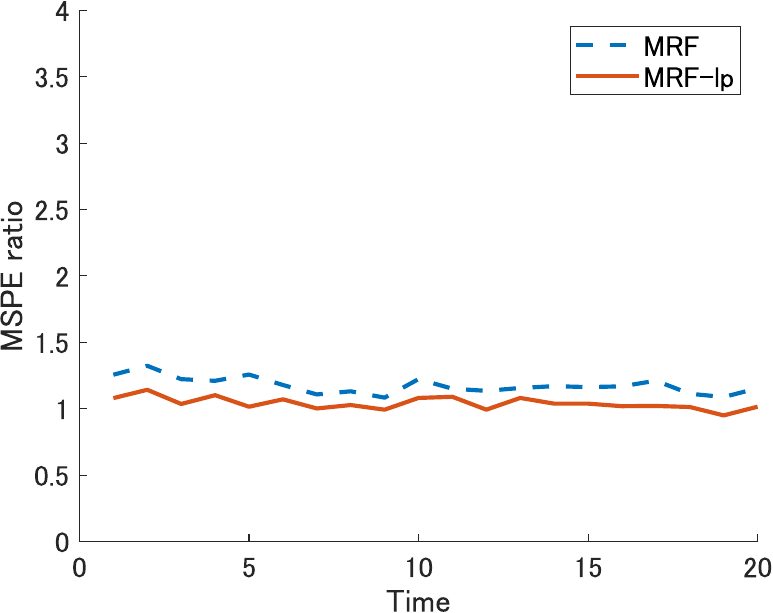}}
    \end{minipage}
    \hspace{1cm}
    \begin{minipage}[t]{6cm}
   \subfigure[Gamma case\\ (MRF: 1.050, MRF-lp: 1.035)]{
  \includegraphics[width = 5.2cm,pagebox=artbox,clip]{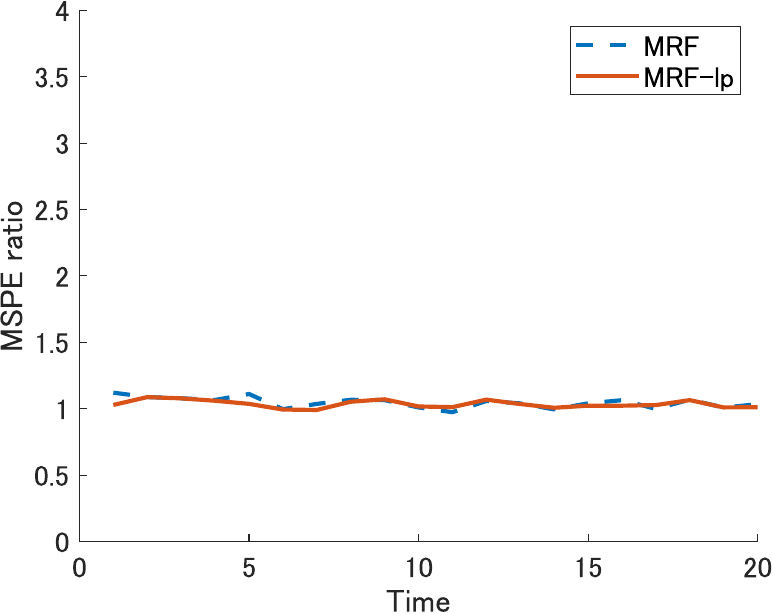}}\\
 \subfigure[Poisson case\\ (MRF: 1.044, MRF-lp: 1.021)]{
  \includegraphics[width = 5.2cm,pagebox=artbox,clip]{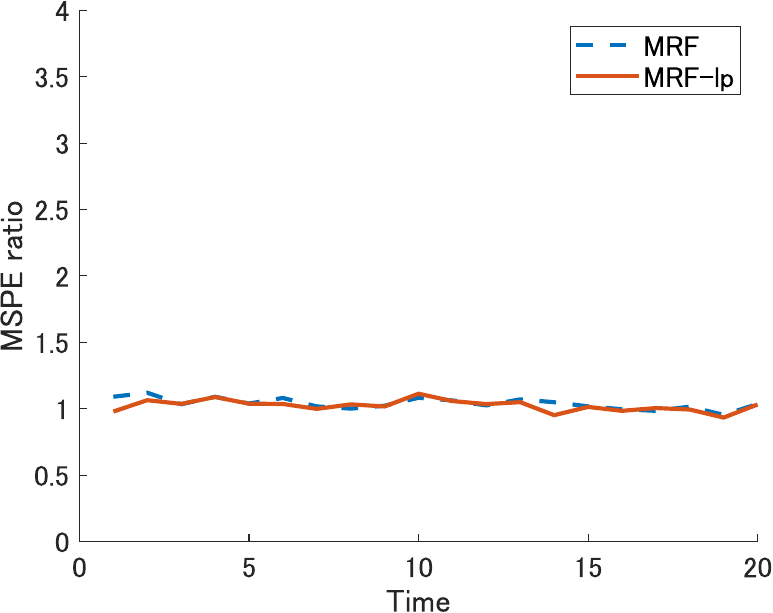}}
    \end{minipage}
  \caption{Comparison of the ratio of the MSPE in the nonlinear and non-Gaussian case. (a) and (b) $M=2$ and (c) and (d) $M=4$. The ratio of the totally averaged MSPE at all of the spatio-temporal grid points is given within parentheses. The relative times of the MRF and MRF-lp in the Poisson case were 0.811 and 0.818 in $M=2$ and 0.847 and 0.872 in $M=4$, respectively. 
  }
\label{fig_nonlinear_non-gaussian_comparison} 
\end{figure}

Figure \ref{fig_nonlinear_non-gaussian_comparison} displays the better performance of the MRF-lp than that of the MRF similar to the second simulations in light of the MSPE. Particularly, the MRF-lp had almost the same accuracy as the original method for $M=2,4$. However, for the MRF and MRF-lp, the reduction in the computational time relative to the original method was less than 20 \%. This is because the calculation of $\mathcal{A}_t$ and $A_t^{\prime}$ resulted in the computational burden in the forecast step of each method, which diminished the degree of the reduction of the computational time 
in the update step of the MRF and MRF-lp.

Fourth, we compared the MRF and the MRF-lp with the EnKF. In accordance with the setting of \cite{Jurek_2022}, we adopted the EnKF for nonlinear and Gaussian state-space models 
with $\mathcal{A}_t$, $n_\mathcal{G}$, and $D_0$, which were the same as those of the third simulation. 
The EnKF entails the number of ensemble members $M^{\prime}$ and an estimate of the forecast covariance matrix $C_t$. As the canonical setting, we selected $C_t = \tilde{S}_t \circ T_t$ where $\tilde{S}_t$ is the sample covariance matrix of the ensemble members, $T_t$ is a sparse correlation matrix, and the symbol ``$\circ$'' refers to the Hadamard product. $T_t$ was generated from Kanter's function \citep{Kanter_1997}. This selection of $C_t$ is called the localization, also known as the covariance tapering. To ensure a fair comparison, we set $M^{\prime} = N = \sum_{m=0}^{M} r_{j_1,\ldots,j_m}^{\prime}$, 
and the tapering radius in $T_t$ was adjusted such that $C_t$ has $N$ nonzero entries per row. Consequently, we considered $N=30, 40$ because of the baseline for the MRF and MRF-lp. In addition, the covariance functions in the baseline and the smooth case were used. The algorithm of the EnKF is described in \ref{append_comparison_ensemble_kf}. 

\begin{figure}[t]
  \centering
  \begin{minipage}[t]{6cm}
 \subfigure[Baseline\\ (MRF: 1.809, MRF-lp: 1.437,\\ EnKF: 1.376)]{
  \includegraphics[width = 5.2cm,pagebox=artbox,clip]{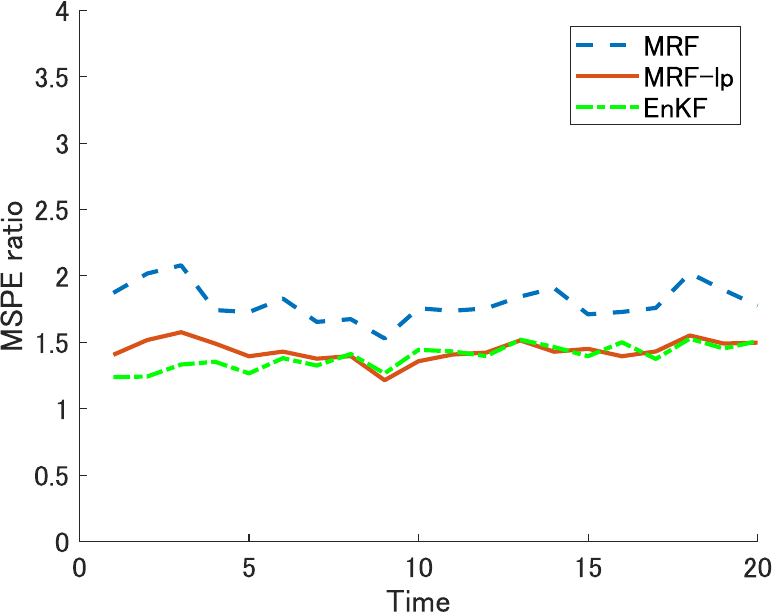}}\\
 \subfigure[Smooth case\\ (MRF: 1.325, MRF-lp: 1.018,\\ EnKF: 3.590)]{ 
  \includegraphics[width = 5.2cm,pagebox=artbox,clip]{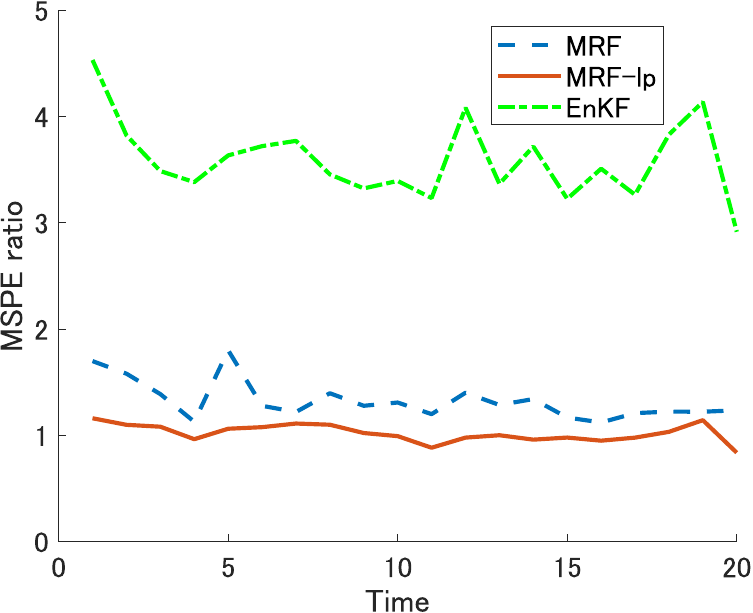}}
    \end{minipage}
    \hspace{1cm}
    \begin{minipage}[t]{6cm}
   \subfigure[Baseline\\ (MRF: 1.236, MRF-lp: 1.154,\\ EnKF: 1.242)]{
  \includegraphics[width = 5.2cm,pagebox=artbox,clip]{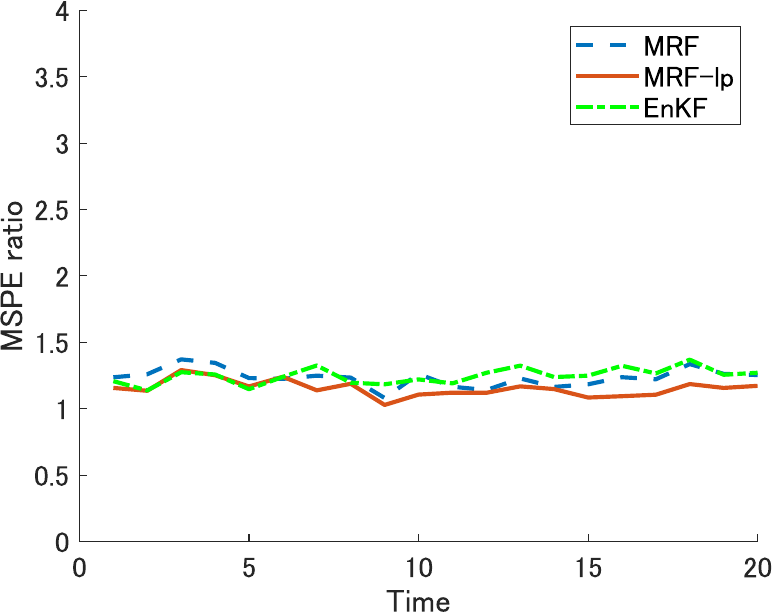}}\\
 \subfigure[Smooth case\\ (MRF: 1.049, MRF-lp: 1.012,\\ EnKF: 2.856)]{
  \includegraphics[width = 5.2cm,pagebox=artbox,clip]{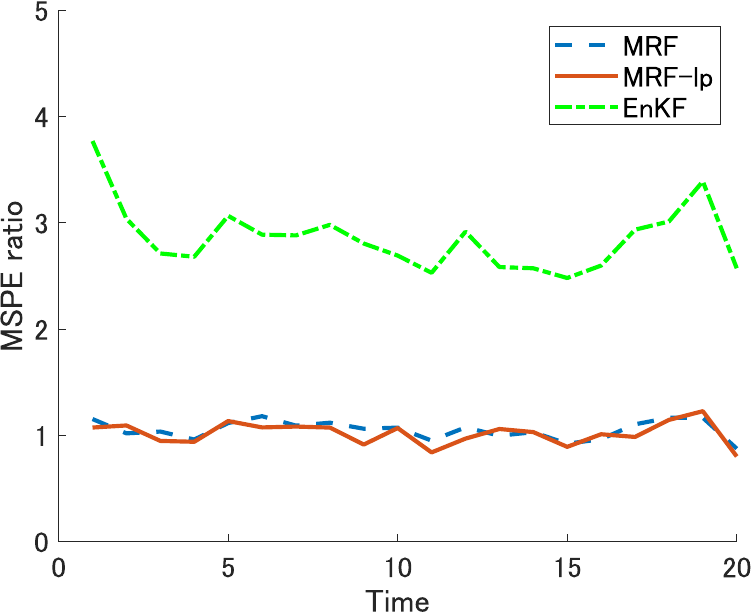}}
    \end{minipage}
  \caption{Comparison of the ratio of the MSPE in the nonlinear and Gaussian case. (a) and (b)  $M=2$, $N=30$ and (c) and (d) $M=4$, $N=40$. The ratio of the totally averaged MSPE at all of the spatio-temporal grid points is given within parentheses. The relative times of the MRF, MRF-lp, and EnKF in the baseline were 0.898, 0.920, and 0.709 in $M=2$, $N=30$ and 0.936, 0.959, and 0.906 in $M=4$, $N=40$, respectively. 
  }
\label{fig_nonlinear_gaussian_comparison_enkf} 
\end{figure}

As shown in Figure \ref{fig_nonlinear_gaussian_comparison_enkf}, the EnKF attained the desirable approximation accuracy in the shorter computation time compared with the MRF and MRF-lp in the baseline. However, in the smooth case, the approximation accuracy of the EnKF degenerated remarkably because the localization ignored the strong spatial correlation. The MRF and MRF-lp worked well in both cases, and the approximation accuracy of the MRF-lp was better than that of the MRF by the somewhat additional computation time. Moreover, each iteration of the EnKF requires the implementation of computation of $\mathcal{A}_t$ as many times as the number of ensemble members, but we need to compute $\mathcal{A}_t$ and $A_t^{\prime}$ 
only once in each iteration of the MRF and MRF-lp. Hence, if large $M^{\prime}$ is required and the computational time of $\mathcal{A}_t$ is large, the EnKF can give rise to the computational burden, unlike the MRF and the MRF-lp.

Fifth, we examined the scalability of the MRF and MRF-lp for large spatio-temporal datasets in the linear and Gaussian case and the nonlinear and non-Gaussian case. Poisson distribution in the third simulation was employed as the non-Gaussian case. $n_\mathcal{G}$ was selected from 900 to 10,000, and we set $T=50$ and used the baseline of the MRF and MRF-lp in $M=4$. Furthermore, unlike the previous simulations, we considered $A_t = 0.6 \bm{\mathrm{I}}_{n_\mathcal{G}}$ and $(\mathcal{A}_t(\bm{x}))_{i} = 0.1 x_{i}+0.1 x_{i}^2$ to focus as much as possible on the effect of the fast computation in the update step of Algorithms \ref{alg_mrf_lp} and \ref{alg_mrf_lp_nonlinear_non-gaussian}. The relative computational time and the ratio of the totally averaged MSPE at all of the spatio-temporal grid points relative to that of the original method were calculated from one iteration. 

\begin{table}[t]
\caption{Comparison of the computational time.}
\label{tab_comparison_time_mspe}
\vspace{-10pt}
\begin{center}
\begin{tabular}{llcccccc} \toprule
 &  & \multicolumn{2}{c}{$n_\mathcal{G}=900$} & \multicolumn{2}{c}{$n_\mathcal{G}=1,764$} & \multicolumn{2}{c}{$n_\mathcal{G}=2,704$} \\ \midrule
\multicolumn{1}{l}{Linear} & MRF & \multicolumn{2}{c}{0.860 (1.214)} & \multicolumn{2}{c}{0.373 (1.325)} & \multicolumn{2}{c}{0.201 (1.404)} \\
\multicolumn{1}{l}{and Gaussian} & MRF-lp & \multicolumn{2}{c}{1.049 (1.167)} & \multicolumn{2}{c}{0.471 (1.243)} & \multicolumn{2}{c}{0.236 (1.321)} \\
\multicolumn{1}{l}{Nonlinear} & MRF & \multicolumn{2}{c}{0.661 (1.005)} & \multicolumn{2}{c}{0.375 (1.016)} & \multicolumn{2}{c}{0.258 (1.027)} \\
\multicolumn{1}{l}{and non-Gaussian} & MRF-lp & \multicolumn{2}{c}{0.803 (1.004)} & \multicolumn{2}{c}{0.400 (1.011)} & \multicolumn{2}{c}{0.271 (1.010)} \\ \cmidrule[1pt]{1-8}
 &  & \multicolumn{2}{c}{$n_\mathcal{G}=3,600$} & \multicolumn{2}{c}{$n_\mathcal{G}=5,329$} & \multicolumn{2}{c}{$n_\mathcal{G}=10,000$} \\ \midrule
\multicolumn{1}{l}{Linear} & MRF & \multicolumn{2}{c}{0.169 (1.484)} & \multicolumn{2}{c}{0.115 (1.621)} & \multicolumn{2}{c}{0.060 (1.857)} \\
\multicolumn{1}{l}{and Gaussian} & MRF-lp & \multicolumn{2}{c}{0.187 (1.372)} & \multicolumn{2}{c}{0.124 (1.516)} & \multicolumn{2}{c}{0.063 (1.710)} \\
\multicolumn{1}{l}{Nonlinear} & MRF & \multicolumn{2}{c}{0.198 (1.026)} & \multicolumn{2}{c}{0.148 (1.030)} & \multicolumn{2}{c}{0.080 (1.078)} \\
\multicolumn{1}{l}{and non-Gaussian} & MRF-lp & \multicolumn{2}{c}{0.215 (1.017)} & \multicolumn{2}{c}{0.153 (1.011)} & \multicolumn{2}{c}{0.081 (1.031)} \\ \bottomrule
\end{tabular}
\end{center}

\vspace{-10pt}
The ratio of the totally averaged MSPE at all of the spatio-temporal grid points relative \newline to that of the original method is given within parentheses.
\end{table}

The results, reported in Table \ref{tab_comparison_time_mspe}, indicate better scalability of the MRF and MRF-lp than that of the original method. Especially in large $n_\mathcal{G}$, the computational time of the MRF-lp was somewhat larger than that of the MRF, but the MRF-lp improved the MSPE when compared with the MRF. As $n_\mathcal{G}$ was larger, the MSPE was larger, and the computational time was smaller. 
This is because the settings of the MRF and MRF-lp are fixed for all cases of $n_\mathcal{G}$. 
In the nonlinear and non-Gaussian case, the MRF and MRF-lp showed the MSPE ratio close to one in the very short computation time for these fixed settings.

Sixth, we checked the numerical stability of the MRF and MRF-lp and specified a particular scenario where the MRF-lp outperforms the MRF in terms of the MSPE. In this simulation study, we focused on the linear and Gaussian case with the Mat\'ern covariance function with the smoothness parameter 3.5 and the Gaussian (squared exponential) covariance function. The range
parameters were 1, 3, 5, and 7, and the count of iterations for calculating the ratios of the totally averaged MSPE at all of the spatiotemporal grid points was 30. When a pair of observations has a negligible ($<$ 0.05) correlation, the distance between the observations is called the effective range. For the range parameters 1, 3, 5, and 7, their effective ranges were 6.88, 2.30, 1.38, and 0.98, respectively, in the Mat\'ern covariance function with the smoothness parameter 3.5. Similarly, in the Gaussian covariance function, they were 1.74, 1.0, 0.78, and 0.66 for the range parameters 1, 3, 5, and 7, respectively. For the MRF, we adopted $(r_0,r_{j_1},r_{j_1,j_2},r_{j_1,j_2,j_3},r_{j_1,j_2,j_3,j_4})=(r_0^{\prime},r_{j_1}^{\prime},r_{j_1,j_2}^{\prime},r_{j_1,j_2,j_3}^{\prime},r_{j_1,j_2,j_3,j_4}^{\prime}) = (10,10,10,2,2)$. For the MRF-lp, we set the same knot case and the same rank case, that is, $(r_0,r_{j_1},r_{j_1,j_2},r_{j_1,j_2,j_3},r_{j_1,j_2,j_3,j_4}) = (10,10,10,2,2)$ and $(r_0^{\prime},r_{j_1}^{\prime},r_{j_1,j_2}^{\prime},r_{j_1,j_2,j_3}^{\prime},r_{j_1,j_2,j_3,j_4}^{\prime}) = (7,7,7,2,2)$ (Case 1) and $(r_0,r_{j_1},r_{j_1,j_2},r_{j_1,j_2,j_3},r_{j_1,j_2,j_3,j_4}) = (20,20,20,4,4)$ and $(r_0^{\prime},r_{j_1}^{\prime},r_{j_1,j_2}^{\prime},r_{j_1,j_2,j_3}^{\prime},r_{j_1,j_2,j_3,j_4}^{\prime}) = (10,10,10,2,2)$ (Case 2). The other conditions were the same as the baseline of the linear and Gaussian case 
in Section \ref{subsec_simulation_study}.

\begin{table}[t]
\caption{Comparative performance of the MRF and MRF-lp with respect to the MSPE under the strong spatial correlation.}
  \label{tab_numerical_stability}
\begin{center}
\begin{tabular}{llccc} \toprule
Covariance & Range & MRF & MRF-lp (Case 1) & MRF-lp (Case 2) \\ \midrule
Mat\'{e}rn (3.5) & 1 & - & 1.034 & 1.019 \\
 & 3 & 1.041 & 1.084 & 1.022 \\
 & 5 & 1.095 & 1.139 & 1.094 \\
 & 7 & 1.121 & 1.199 & 1.116 \\
Gaussian & 1 & - & 1.042 & 1.028 \\
 & 3 & 1.107 & 1.138 & 1.067 \\
 & 5 & 1.125 & 1.204 & 1.110 \\
 & 7 & 1.172 & 1.246 & 1.138 \\ \bottomrule
\end{tabular}
\end{center}
\end{table}

Table \ref{tab_numerical_stability} shows the MSPE in each case. 
In the same covariance case of Table \ref{tab_numerical_stability}, the larger the effective range was, the smaller the MSPE was. The large effective range means that the large-scale spatial variation is dominant. Therefore, this may be because the MRA and the MRA-lp, respectively, extend the predictive process \citep[][]{Banerjee_2008} and the linear projection \citep[][]{Banerjee_2013} which are originally effective for fitting the large-scale spatial variation. 
Moreover, we cannot conduct the MRF in the cases of the range parameter 1 because the output of the MRA was unstable and included outliers due to the numerical instability of the calculation related to the inverse matrix of $\widehat{V}_{j_1,\ldots,j_m}^{m}$ ($m=0,\ldots,M$) in the MRA implemented in the MRF, that is, Algorithm 2 in Algorithm 3 with $\Phi_{j_1,\ldots,j_m} = \bm{\mathrm{I}}_{r_{j_1,\ldots, j_m}}$. This numerical stability is evaluated by the condition number which means the ratio of the largest $\sigma_l$ and the smallest $\sigma_s$ eigenvalues of the positive definite matrix \citep[see][]{Dixon_1983} and evaluates how well the positive definite matrix is conditioned. If a positive definite matrix is ill-conditioned, the calculation of the inverse matrix may be unstable with the propagation of round-off errors due to the finite precision arithmetic. The condition number closer to 1 indicates better numerical stability. We compared the condition number of $\widehat{V}_{j_1,\ldots,j_m}^{m}$ ($m=0,\ldots,M$) between the MRF and the MRF-lp. This simulation study is similar to those in Section 3.2 of \cite{Banerjee_2013} and Section 4.3 of \cite{Hirano_2021}. 
We confirmed that as the number $r_{j_1,\ldots,j_m}$ of knots, $m$, the rank $r_{j_1,\ldots,j_m}^{\prime}$ of $\Phi_{j_1,\ldots, j_m}$, the smoothness parameter, and the effective range increased, the condition numbers of $\widehat{V}_{j_1,\ldots,j_m}^{m}$ tended to increase. However, our proposed MRF-lp empirically showed smaller conditional numbers than those of the MRF. This may be because the MRA-lp introduces $\Phi_{j_1,\ldots,j_m}$, the minor eigenvalues of $V_{j_1,\ldots,j_m}^{m}$ can be removed in the selection of $\Phi_{j_1,\ldots,j_m}$ (see Section \ref{subsec_select_phi} for details), and Proposition \ref{prop_mra-lp_inv} holds. These results are not reported here. In the cases of the range parameter 1, their effective ranges were large relative to the size of
the domain $[0,1]^2$. However, in the case of the exponential covariance function, we can conduct the MRF for the appropriate knots even when the effective range is greater than 100. This suggests that both the large smoothness parameter and the large effective range degrade the numerical stability of the MRF. The MRF-lp can select more knots and improve the MSPE under the strong spatial correlation with the large smoothness parameter and effective range. On the other hand, unlike the MRF-lp, it is difficult for the MRF to improve the MSPE to some extent by adding more knots because increasing knots often causes an infeasible result due to the high condition number of $\widehat{V}_{j_1,\ldots,j_m}^{m}$. Actually, in the case of the range parameter 1, we needed to reduce the number of knots of the MRF in order to implement the MRF normally. In this case, most of the feasible trials of the MRF produced larger MSPEs than those of the MRF-lp (Case 1), and all of their MSPEs by the MRF 
were larger than a feasible MSPE by the MRF-lp. It is suggested that it is difficult to set the appropriate knot size in the MRF under the strong spatial correlation. We also obtained similar results for the nonlinear and non-Gaussian case. 

\subsection{Real data analysis\label{subsec_real_data}}

In this subsection, we discuss the results when we applied our proposed MRF-lp to hourly measurements of total precipitable water (TPW)  previously analyzed in \cite{Katzfuss_2017b} and 
\cite{Jurek_2022, Jurek_2023}. 
This dataset means the amount of the column water vapor on an area and is made by the Microwave Integrated Retrieval System (MiRS), which was developed and supported by the National Oceanic and Atmospheric Administration (NOAA) \citep[e.g.,][]{Lee_2022}. Water vapor is one of the important elements in the earth's water cycle, and the analysis of TPW data is helpful for numerical weather prediction. We selected the TPW data inside the rectangular region $[-93.6937, -79.4621] \times [32.6534, 43.7983]$ at $T=8$ points in January 2011, leading to $100 \times 100$ grid points. This time interval differs from that used in \cite{Jurek_2022, Jurek_2023}. 

Building upon \cite{Jurek_2023}, 
we determined the settings of our real data analysis. We assumed \eqref{eq_observation_equation} and \eqref{eq_state_equation} for the TPW data. As a result, $\bm{\mu}_{t|t}$ was calculated efficiently by using Algorithm \ref{alg_mrf_lp}. In deriving $A_t$, we employed the advection-diffusion equation (see Section \ref{suppl_advection_diffusion} for details). $Q_t$ and $\Sigma_{0|0}$ were created by the Mat\'{e}rn covariance function with the smoothness parameter 1.5 and the range parameter $\lambda$, and their variances were $\sigma_{\bm{w}}^2$ and $\sigma_0^2$, respectively. $R_t$ was the diagonal matrix with the variance $\sigma_{\bm{v}}^2$. 
To determine $\lambda$, $\sigma_{\bm{w}}^2$, $\sigma_{\bm{v}}^2$, and $\sigma_0^2$, we subtracted the sample mean from $\bm{y}_t$ for each $t$, conducted the maximum likelihood estimation at each time point by treating its adjusted $\bm{y}_t$ as the sum of purely Gaussian spatial data with the Mat\'{e}rn covariance function and the measurement error, and adopted the averaged values of the range and the variance of the measurement error over time for $\lambda$ and $\sigma_{\bm{v}}^2$. Moreover, by assuming that the averaged value of the variance of purely Gaussian spatial data over time is equal to $\sigma_0^2 + \sigma_{\bm{w}}^2$ satisfying $\sigma_0^2 = 9 \sigma_{\bm{w}}^2$, $\sigma_0^2$ and $\sigma_{\bm{w}}^2$ were determined. For $\bm{x}_0$, we took $\bm{\mu}_{0|0}$, each element of which is the averaged value of elements of $\bm{y}_1$.

For the MRF-lp, we set $M=4$, $J_m = 4$, $(r_{0}, r_{j_1},r_{j_1,j_2},r_{j_1, j_2, j_3},r_{j_1, j_2, j_3,j_4}) = (100,50,50,20,20)$, and $(r^{\prime}_{0}, r^{\prime}_{j_1},r^{\prime}_{j_1,j_2},r^{\prime}_{j_1, j_2, j_3},r^{\prime}_{j_1, j_2, j_3,j_4}) = (30,20,10,10,7)$. The MRF had $M=4$, $J_m = 4$, \\
$(r_{0}, r_{j_1},r_{j_1,j_2},r_{j_1, j_2, j_3},r_{j_1, j_2, j_3,j_4}) = (r^{\prime}_{0}, r^{\prime}_{j_1},r^{\prime}_{j_1,j_2},r^{\prime}_{j_1, j_2, j_3},r^{\prime}_{j_1, j_2, j_3,j_4}) = (30,20,10,10,7)$. We compared the MRF and MRF-lp with the Kalman filter by assessing the mean squared sum of the difference between the filtering means of the Kalman filter and approximation method and the ratio of the averaged computational time of the forecast and update steps relative to that of the Kalman filter. Additionally, we displayed $\bm{\mu}_{t|t}$ for three methods at $t=1,4,8$.

\begin{figure}[t]
  \centering
    \includegraphics[width = 12.2cm,pagebox=artbox,clip]{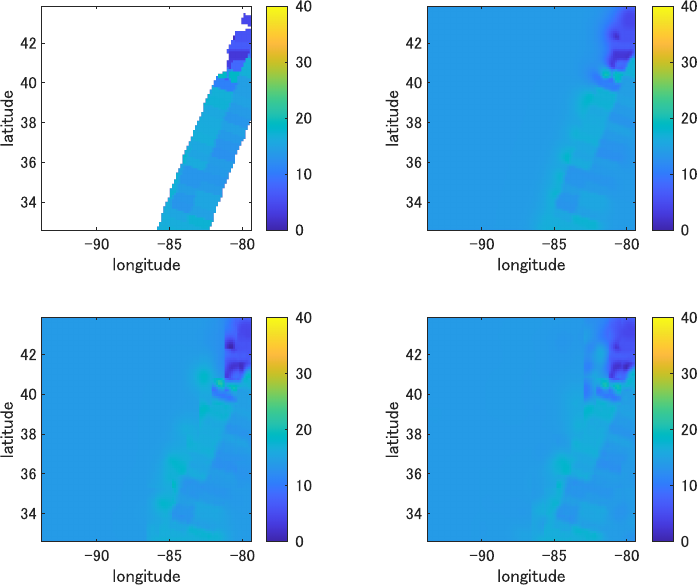}
    \caption{
    The filtering means at $t=1$. The real data is on the upper left. We adopt the Kalman filter 
    for the right column in the top row. The MRF and MRF-lp are used for the left and right columns in the bottom row, respectively.}
  \label{fig_prediction_surface_t1} 
\end{figure}
  
  \begin{figure}[t]
    \centering
      \includegraphics[width = 12.2cm,pagebox=artbox,clip]{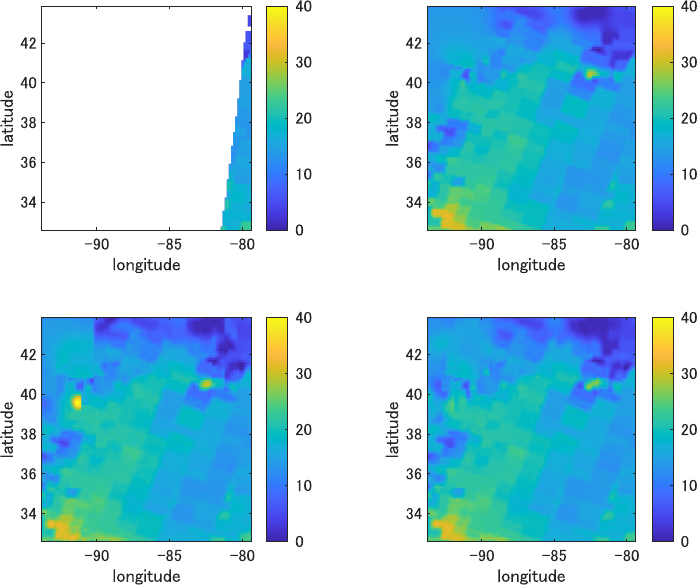}
      \caption{
        The filtering means at $t=4$. The real data is on the upper left. We adopt the Kalman filter 
      for the right column in the top row. The MRF and MRF-lp are used for the left and right columns in the bottom row, respectively.}
    \label{fig_prediction_surface_t4} 
\end{figure}
  
\begin{figure}[t]
      \centering
        \includegraphics[width = 12.2cm,pagebox=artbox,clip]{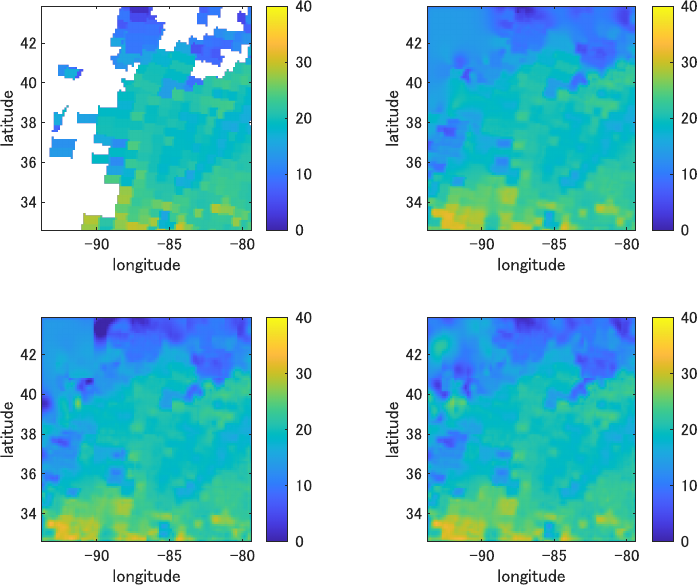}
        \caption{
          The filtering means at $t=8$. The real data is on the upper left. We adopt the Kalman filter 
        for the right column in the top row. The MRF and MRF-lp are used for the left and right columns in the bottom row, respectively.}
      \label{fig_prediction_surface_t8} 
\end{figure}

Figures \ref{fig_prediction_surface_t1}--\ref{fig_prediction_surface_t8} show the results of our real data analysis. For the MRF and MRF-lp, the pairs of the mean squared sum and relative computational time were $(1.6065, 0.4297)$ and $(1.0003, 0.4413)$, respectively. This result suggests that our proposed MRF-lp has competitive efficiency. As shown in Figure \ref{fig_prediction_surface_t4}, unlike the MRF-lp, the filtering mean of the MRF described the local discrepancy from that of the Kalman filter in the middle of the time interval, 
 which decreased the approximation accuracy of the MRF in comparison with the MRF-lp. However, in Figure \ref{fig_prediction_surface_t8}, this local discrepancy in the MRF eventually disappeared. To improve the local fitting, the MRF needs to increase $r_{j_1, j_2, j_3,j_4} = r^{\prime}_{j_1, j_2, j_3,j_4}$, but this can sharply increase the computational burden. On the other hand, since our proposed MRF-lp can control $r_{j_1, j_2, j_3,j_4}$ and $r^{\prime}_{j_1, j_2, j_3,j_4}$ separately, we can efficiently improve the local fitting by increasing only $r_{j_1, j_2, j_3,j_4}$. Furthermore, although there are some artifacts for both the MRF and the MRF-lp in Figures \ref{fig_prediction_surface_t1}--\ref{fig_prediction_surface_t8} 
due to the partition of the spatial domain in Algorithm \ref{alg_approx_cov_mat}, it seems that the MRF-lp mitigates the degree of the artifact in this real data analysis. A taper version of Algorithm \ref{alg_approx_cov_mat} based on \cite{Katzfuss_2020d} may lead to the fast computation algorithm of the Kalman filter, which bypasses this kind of artifact. Lastly, since we observed spatial data abundantly at $t=2,3$, the filtering means at $t=4$ were plausible, whereas those at $t=1$ were strongly affected by $\bm{\mu}_{0|0}$ due to the shortage of spatial data. Applying the MRA-lp to the Kalman smoother on the basis of \cite{Jurek_2023} will likely be a remedy.

\section{Conclusion and future studies\label{sec_conclusion_future_studies}}

This study proposed the MRF-lp, which can quickly compute the filtering mean for linear Gaussian state-space models. Furthermore, we have extended our proposed method to nonlinear and non-Gaussian state-space models on the basis of \cite{Zilber_2021} and \cite{Jurek_2022}. The block sparse structure resulting from the MRA-lp 
makes the MRF-lp the efficient filtering algorithm. The MRF-lp can be regarded as an extension of the MRA-lp to the spatio-temporal datasets and a generalization of the MRF. 
By increasing $r_{j_1,\ldots, j_m}$ and/or $r^{\prime}_{j_1,\ldots, j_m}$ adequately, the approximation accuracy of the filtering mean can be improved while maintaining the relatively small computational time. Our proposed MRF-lp has the advantage of adjusting $r_{j_1,\ldots, j_m}$ and $r^{\prime}_{j_1,\ldots, j_m}$ separately, unlike the MRF where $r_{j_1,\ldots, j_m} = r^{\prime}_{j_1,\ldots, j_m}$. 
In the simulation studies and real data analysis, the MRF-lp was generally more efficient than the MRF in terms of approximation accuracy and computational time. Moreover, the MRF-lp empirically improved the numerical stability of the MRF.

There are some issues to be tackled in the future. First, \cite{Jurek_2021, Jurek_2022} proposed the fast computation algorithm of Rao--Blackwellized particle filter \citep{Doucet_2011} to estimate parameters in state-space models (e.g., range parameter in the covariance function). Similarly, our proposed MRF-lp would be extended to the particle version. Second, it is difficult for the extended Kalman filter to capture the strongly nonlinear evolution. Based on \cite{Chakraborty_2022}, combining the MRF-lp with the unscented Kalman filter is interesting. Third, similar to \cite{Jurek_2023}, the fast smoothing algorithm by using the MRA-lp is considered. As shown in Figure \ref{fig_prediction_surface_t1} of Section \ref{subsec_real_data}, the filtering distribution degenerates when few observations are obtained at early time points. However, the smoothing distribution may provide a plausible prediction surface. Fourth, the MRF-lp has a lot of tuning parameters such as the highest resolution $M$, the partition number $J_m$ of each subregion, the number $r_{j_1,\ldots,j_m}$ of knots, and the rank $r_{j_1,\ldots,j_m}^{\prime}$ of $\Phi_{j_1,\ldots, j_m}$. A practical guideline for choosing them would be helpful. Finally, the general Vecchia approximation proposed by \cite{Katzfuss_2021a} includes some existing fast computation methods used in spatial statistics and has shown high usefulness in a variety of situations \citep[e.g.,][]{Katzfuss_2020b, Zilber_2021, Katzfuss_2022, Jurek_2022, Jurek_2023}. It is left for a future study to explore the relationship between the MRF-lp and the fast filtering method in \cite{Jurek_2022}, which adopts a hierarchical Vecchia approximation.

%
%

\appendix

\def\thesection{Appendix \Alph{section}}
\section{Derivation of Algorithms \ref{alg_approx_cov_mat} and \ref{alg_mrf_lp_linear_non-gaussian}}
\def\thesection{\Alph{section}}
\label{append_derivation_algorithms}

\subsection{Derivation of Algorithm \ref{alg_approx_cov_mat}}
\label{subappend_derivation_alg1}

To derive Algorithm \ref{alg_approx_cov_mat}, we will apply the MRA-lp, which omits the modification via the covariance tapering, to $\Sigma_0$. In accordance with \cite{Hirano_2021}, for $0 \le l \le m, \; m = 0,\ldots,M$, we define
\begin{align*}
  W_{j_1,\ldots,j_m}^l &= \Sigma_l[I_{j_1,\ldots,j_m}, K_{j_1,\ldots,j_l}],\\
  V_{j_1,\ldots,j_m}^l &= \Sigma_l[K_{j_1,\ldots,j_m}, K_{j_1,\ldots,j_l}],
\end{align*}
where
\begin{align}
  \label{eq_Sigma_l}
  \Sigma_l[a,b] &=
  \begin{cases}
  \Sigma_{l-1}[a,b] - \Sigma_{l-1}^{\prime}[a,b], \quad \bm{g}_a,\bm{g}_b \in D_{j_1,\ldots,j_l} \; (1 \le j_i \le J_i,\; i=1,\ldots,l),\\
  0, \quad \mbox{otherwise},
  \end{cases}
  \\
  \label{eq_Sigma_l_prime}
  \Sigma_{l}^{\prime}[a,b] &=
\begin{cases}
\Sigma_l[a,K_{j_1,\ldots,j_l}] \Phi_{j_1,\ldots,j_l}^{\top} \widehat{V}_{j_1,\ldots,j_l}^{{l}^{-1}} \Phi_{j_1,\ldots,j_l} \Sigma_l[b,K_{j_1,\ldots,j_l}]^{\top}, \quad \bm{g}_a,\bm{g}_b \in D_{j_1,\ldots,j_l} \\
\hspace{7cm} (1 \le j_i \le J_i,\; i=1,\ldots,l),\\
0, \quad \mbox{otherwise}.
\end{cases}
\end{align}
For $l \ge 1$, we have
\begin{align}
  W_{j_1,\ldots,j_m}^l &= \Sigma_{l-1}[I_{j_1,\ldots,j_m}, K_{j_1,\ldots,j_l}] - \Sigma_{l-1}^{\prime}[I_{j_1,\ldots,j_m}, K_{j_1,\ldots,j_l}]\\
  &=\Sigma_0[I_{j_1,\ldots,j_m}, K_{j_1,\ldots,j_l}] - \sum_{k=0}^{l-1} \Sigma_{k}^{\prime}[I_{j_1,\ldots,j_m}, K_{j_1,\ldots,j_l}]\\
  &=\Sigma_0[I_{j_1,\ldots,j_m}, K_{j_1,\ldots,j_l}] - \sum_{k=0}^{l-1} W_{j_1,\ldots,j_m}^k \Phi_{j_1,\ldots,j_k}^{\top} \widehat{V}_{j_1,\ldots,j_k}^{{k}^{-1}} \Phi_{j_1,\ldots,j_k} V_{j_1,\ldots,j_l}^{{k}^{\top}}, \label{eq_W} \\
  V_{j_1,\ldots,j_m}^l &= \Sigma_{l-1}[K_{j_1,\ldots,j_m}, K_{j_1,\ldots,j_l}] - \Sigma_{l-1}^{\prime}[K_{j_1,\ldots,j_m}, K_{j_1,\ldots,j_l}]\\
  &=\Sigma_0[K_{j_1,\ldots,j_m}, K_{j_1,\ldots,j_l}] - \sum_{k=0}^{l-1} \Sigma_{k}^{\prime}[K_{j_1,\ldots,j_m}, K_{j_1,\ldots,j_l}]\\
  &=\Sigma_0[K_{j_1,\ldots,j_m}, K_{j_1,\ldots,j_l}] - \sum_{k=0}^{l-1} V_{j_1,\ldots,j_m}^k \Phi_{j_1,\ldots,j_k}^{\top} \widehat{V}_{j_1,\ldots,j_k}^{{k}^{-1}} \Phi_{j_1,\ldots,j_k} V_{j_1,\ldots,j_l}^{{k}^{\top}}. \label{eq_V}
  \end{align}

From the definitions of $W_{j_1,\ldots,j_m}^l$ and 
$V_{j_1,\ldots,j_m}^l$, we can obtain $W_{j_1,\ldots,j_{m+i}}^m$ ($i = 1,\ldots,M-m$) and $V_{j_1,\ldots,j_m}^m$ by restricting the rows of $W_{j_1,\ldots,j_m}^m$ to $I_{j_1,\ldots,j_{m+i}}$ and $K_{j_1,\ldots,j_m}$, respectively. 
By a similar procedure for $W_{j_1,\ldots,j_{m+i}}^m$, we can obtain $V_{j_1,\ldots,j_{m+i}}^m$ ($i = 1,\ldots,M-m$). 
Hence, for $l < m \le M$, 
we can have $W_{j_1,\ldots,j_m}^l$ and $V_{j_1,\ldots,j_m}^l$ without using \eqref{eq_W} and \eqref{eq_V}.

Finally, the approximation of $\Sigma_0 = \Sigma_0[I_0,I_0]$ 
by the MRA-lp is given by
\begin{align*}
  \sum_{m=0}^{M} \Sigma_{m}^{\prime}[I_0,I_0] 
  =& W_0^0 \Phi_0^{\top} \widehat{V}_{0}^{0^{-1}} \Phi_0 W_0^{{0}^{\top}} 
  +  
  \begin{pmatrix}
    W_1^1 \Phi_1^{\top} \widehat{V}_{1}^{1^{-1}} \Phi_1 W_1^{{1}^{\top}}  & & O \\
     & \ddots & \\
     O & & W_{J_1}^1 \Phi_{J_1}^{\top} \widehat{V}_{J_1}^{1^{-1}} \Phi_{J_1} W_{J_1}^{{1}^{\top}}
  \end{pmatrix}
\\
  + \cdots &+
  \begin{pmatrix}
    W_{1,\ldots,1}^M \Phi_{1,\ldots,1}^{\top} \widehat{V}_{1,\ldots,1}^{{M}^{-1}} \Phi_{1,\ldots,1} W_{1,\ldots,1}^{{M}^{\top}}  & & O \\
     & \ddots & \\
     O & & W_{J_1,\ldots,J_M}^{M} \Phi_{J_1,\ldots,J_M}^{\top} \widehat{V}_{J_1,\ldots,J_M}^{{M}^{-1}} \Phi_{J_1,\ldots,J_M} W_{J_1,\ldots,J_M}^{{M}^{\top}}
  \end{pmatrix}
  \\
  =& B_0 B_0^{\top} + 
  \begin{pmatrix}
    B_1 B_1^{\top}  & & O \\
     & \ddots & \\
     O & & B_{J_1} B_{J_1}^{\top}
  \end{pmatrix}
  + \cdots +
  \begin{pmatrix}
    B_{1,\ldots,1} B_{1,\ldots,1}^{\top}  & & O \\
     & \ddots & \\
     O & & B_{J_1,\ldots,J_M} B_{J_1,\ldots,J_M}^{\top}
  \end{pmatrix}
  \\
  =& B^0 B^{0^{\top}} + B^1 B^{1^{\top}} + \cdots + B^M B^{M^{\top}} \\
  =& B B^{\top}.
\end{align*}

\subsection{Derivation of Algorithm \ref{alg_mrf_lp_linear_non-gaussian}}
\label{subappend_mrf_lp_linear_non-gaussian}

From Theorem 18.2.8 of \cite{Harville_1997}, also known as the Sherman--Morrison--Woodbury formula, we obtain
\begin{align}
  W_{\bm{x}_t}^{-1} &= \left( \Sigma_{t|t-1}^{-1} + D_{\bm{x}_t} \right)^{-1} \\
&=\Sigma_{t|t-1}-\Sigma_{t|t-1} D_{\bm{x}_t}^{\frac{1}{2}} \left( \bm{\mathrm{I}}_{n_\mathcal{G}} + D_{\bm{x}_t}^{\frac{1}{2}} \Sigma_{t|t-1} D_{\bm{x}_t}^{\frac{1}{2}} \right)^{-1} D_{\bm{x}_t}^{\frac{1}{2}} \Sigma_{t|t-1}. \label{eq_w_xt_inv}
\end{align}
Next, by applying Algorithm \ref{alg_approx_cov_mat} to $\Sigma_{t|t-1}$, $\Sigma_{t|t-1}$ is approximated by $B_{t|t-1} B_{t|t-1}^{\top}$. By substituting $B_{t|t-1} B_{t|t-1}^{\top}$ into $\Sigma_{t|t-1}$ in \eqref{eq_w_xt_inv}, 
\begin{align}
  W_{\bm{x}_t}^{-1} =& B_{t|t-1} \left\{ \bm{\mathrm{I}}_{N^{\prime}} - B_{t|t-1}^{\top} D_{\bm{x}_t}^{\frac{1}{2}} \left( \bm{\mathrm{I}}_{n_\mathcal{G}} + D_{\bm{x}_t}^{\frac{1}{2}} B_{t|t-1} B_{t|t-1}^{\top} D_{\bm{x}_t}^{\frac{1}{2}} \right)^{-1} D_{\bm{x}_t}^{\frac{1}{2}} B_{t|t-1} \right\} B_{t|t-1}^{\top} \\
=& B_{t|t-1} \left( \bm{\mathrm{I}}_{N^{\prime}} + B_{t|t-1}^{\top} D_{\bm{x}_t}^{\frac{1}{2}} \bm{\mathrm{I}}_{n_\mathcal{G}} D_{\bm{x}_t}^{\frac{1}{2}} B_{t|t-1}  \right)^{-1}  B_{t|t-1}^{\top} \\
=& B_{t|t-1} \Lambda_{\bm{x}_{t}}^{-1} B_{t|t-1}^{\top}\\
=& B_{t|t-1} \left( L_{\bm{x}_{t}} L_{\bm{x}_{t}}^{\top} \right)^{-1} B_{t|t-1}^{\top} \\
=& B_{\bm{x}_{t}} B_{\bm{x}_{t}}^{\top}, \label{eq_w_xt_inv_approx}
\end{align}
where the second equality holds by using Theorem 18.2.8 of \cite{Harville_1997}. 
Thus, it follows from \eqref{eq_newton-raphson_update} and \eqref{eq_w_xt_inv_approx} that
\begin{align}
  h(\bm{x}_t) = \bm{\mu}_{t|t-1} + B_{\bm{x}_{t}} B_{\bm{x}_{t}}^{\top} \left\{ D_{\bm{x}_t} \left( \bm{x}_t - \bm{\mu}_{t|t-1} \right) + \bm{u}_{\bm{x}_t} \right\}.
\end{align}

\def\thesection{Appendix \Alph{section}}
\section{Proofs of propositions}
\def\thesection{\Alph{section}}
\label{append_proofs}

\begin{proof}[Proof of Proposition \ref{prop_mra-lp_inv}]
  We will show the assertion by mathematical induction in the same way as Propositions 1 (a) and (b) of Hirano (2021). 
  We define the random vector $\bm{Z}_0(I_0) = (Z_0(\bm{g}_1),\ldots,Z_0(\bm{g}_{n_\mathcal{G}}))^{\top} \sim \mathcal{N} \left(\bm{0}, \Sigma_0[I_0,I_0]\right)$. $\Sigma_0[I_0,I_0]$ is positive definite, so that $V_0^0 = \Sigma_0[K_0,K_0]$ is also positive definite. Since $\Phi_0$ is the full row-rank matrix, $\widehat{V}_{0}^{0}=\Phi_0 V_0^0 \Phi_0^{\top}$ is positive definite from Proposition 4.1 (a) of \cite{Hirano_2017a}. Hereafter, for a generic random vector $\bm{X}(I_0)=(X(\bm{g}_1),\ldots,X(\bm{g}_{n_\mathcal{G}}))^{\top}$ and a set $A \subset I_0$, $\bm{X}(A)$ is a subvector of $\bm{X}(I_0)$ obtained by restricting $\bm{X}(I_0)$ to the elements on grid points corresponding to $A$. Now, define the random vector
  \begin{align*}
    \bm{\tau}_0(I_0) &= (\tau_0(\bm{g}_1),\ldots,\tau_0(\bm{g}_{n_\mathcal{G}}))^{\top}\\
    &=E[\bm{Z}_0(I_0) | \Phi_0 \bm{Z}_0(K_0)] \\
    &=\Sigma_0[I_0,K_0] \Phi_0^{\top} \widehat{V}_{0}^{{0}^{-1}} \Phi_0 \bm{Z}_0(K_0).
  \end{align*}
As a consequence, we have
\begin{align*}
\Sigma_1^{\prime \prime}[I_0,I_0] &= \Sigma_0[I_0,I_0] - \mbox{Var}(\bm{\tau}_0(I_0))\\
&=\mbox{Var}(\bm{Z}_0(I_0)) - \mbox{Var}(E[\bm{Z}_0(I_0) | \Phi_0 \bm{Z}_0(K_0)])\\
&=\mbox{Var}(\bm{Z}_0(I_0) | \Phi_0 \bm{Z}_0(K_0)) \ge 0.
\end{align*}
Since $\Sigma_{0}^{\prime}[I_0,I_0]=\mbox{Var}(\bm{\tau}_0(I_0))$, from \eqref{eq_Sigma_l}, 
we can write
\begin{align*}
  \Sigma_1[I_0,I_0] =
  \begin{pmatrix}
    \Sigma_1^{\prime \prime}[I_1,I_1] & & O \\
     & \ddots & \\
     O & & \Sigma_1^{\prime\prime}[I_{J_1},I_{J_1}]
    \end{pmatrix}.  
\end{align*}

Thus, $\Sigma_1[I_0,I_0]$ is positive semidefinite, and $V_{j_1}^1 = \Sigma_1[K_{j_1},K_{j_1}]$ is also positive semidefinite. 
Combining 
a part of Proposition 5.4 of \cite{Puntanen_2011} 
with the assumption of Proposition \ref{prop_mra-lp_inv} 
yields positive definiteness of $\widehat{V}_{j_1}^{1}$. The result holds for $m=1$.

Next, assume that $\Sigma_l[I_0,I_0]$ is positive semidefinite for $m=l$ ($1 \le l \le M-1$). Consider $m=l+1$. If we define $\bm{Z}_l(I_0) = (Z_l(\bm{g}_1),\ldots,Z_l(\bm{g}_{n_\mathcal{G}}))^{\top} \sim \mathcal{N} \left(\bm{0},\Sigma_l[I_0,I_0]\right)$, then we have
\begin{align*}
  \begin{pmatrix}
    \bm{Z}_l(I_0) \\
     \Phi^{l} \bm{Z}_l (K^l)
    \end{pmatrix}
  \sim
  \mathcal{N} \left( \bm{0},
  \begin{pmatrix}
    \Sigma_l[I_0,I_0] & \Sigma_l[I_0,K^l] \Phi^{l^{\top}} \\
     \Phi^l \Sigma_l[K^l,I_0] & \Phi^l \Sigma_l[K^l,K^l] \Phi^{l^{\top}}
    \end{pmatrix}
  \right),
\end{align*}
where 
$K^l = \bigcup_{1 \le j_1 \le J_1, \ldots, 1 \le j_l \le J_l} K_{j_1,\ldots, j_l}$ 
and
\begin{align*}
  \Phi^l = 
\begin{pmatrix}
  \Phi_{1,\ldots, 1} & & O \\
   & \ddots & \\
   O & & \Phi_{J_1,\ldots, J_l}
  \end{pmatrix}.
\end{align*}
In the same way as the argument of $m=1$, $\widehat{V}_{j_1,\ldots,j_l}^{l} = \Phi_{j_1,\ldots,j_l} V_{j_1,\ldots,j_l}^l \Phi_{j_1,\ldots,j_l}^{\top}$ is positive definite. Consequently, $\Phi^l \Sigma_l[K^l, K^l] \Phi^{l^{\top}}$ is nonsingular. Then, we can define
\begin{align*}
  \bm{\tau}_l(I_0) &= (\tau_l(\bm{g}_1),\ldots,\tau_l(\bm{g}_{n_\mathcal{G}}))^{\top}\\
  &=E[\bm{Z}_l(I_0) | \Phi^l \bm{Z}_l(K^l)] \\
  &=\Sigma_l[I_0,K^l] \Phi^{l^{\top}} \left(\Phi^l \Sigma_l[K^l, K^l] \Phi^{l^{\top}} \right)^{-1} \Phi^l \bm{Z}_l(K^l).
\end{align*}
Moreover, 
\begin{align*}
  \Sigma_{l+1}^{\prime\prime}[I_0,I_0] &= \Sigma_l[I_0,I_0] - \mbox{Var}(\bm{\tau}_l(I_0))\\
  &=\mbox{Var}(\bm{Z}_l(I_0)) - \mbox{Var}(E[\bm{Z}_l(I_0) | \Phi^l \bm{Z}_l(K^l)])\\
  &=\mbox{Var}(\bm{Z}_l(I_0) | \Phi^l \bm{Z}_l(K^l)) \ge 0.
  \end{align*}
  Since $\Sigma_{l}^{\prime}[I_0,I_0]=\mbox{Var}(\bm{\tau}_l(I_0))$, from \eqref{eq_Sigma_l}, 
  it follows that
  \begin{align*}
    \Sigma_{l+1}[I_0,I_0] =
    \begin{pmatrix}
      \Sigma_{l+1}^{\prime\prime}[I_{1,\ldots,1},I_{1,\ldots,1}] & & O \\
       & \ddots & \\
       O & & \Sigma_{l+1}^{\prime\prime}[I_{J_1,\ldots,J_{l+1}},I_{J_1,\ldots,J_{l+1}}]
      \end{pmatrix}.  
  \end{align*}
  Therefore, $\Sigma_{l+1}[I_0,I_0]$ is positive semidefinite. The result holds for $m=l+1$. 
  
  Since $\Sigma_m[I_0, I_0]$ ($m=1,\ldots, M$) is positive semidefinite by mathematical induction, $\widehat{V}_{j_1,\ldots,j_m}^{m}=\Phi_{j_1,\ldots,j_m} V_{j_1,\ldots,j_m}^m \Phi_{j_1,\ldots,j_m}^{\top}$ ($m=1,\ldots, M$) is positive definite from a part of Proposition 5.4 of \cite{Puntanen_2011} and the assumption of Proposition \ref{prop_mra-lp_inv}. 
\end{proof}

\begin{proof}[Proof of Proposition \ref{prop_mra-lp_B}]
  The proof is the same as that of Proposition 1 in \cite{Jurek_2021}.
\end{proof}

\begin{proof}[Proof of Proposition \ref{prop_property_sparse_mrf_lp}]

  It is straightforward to prove Proposition \ref{prop_property_sparse_mrf_lp} (a) by using Propositions 2 and 3, Part 1 in \cite{Jurek_2021}. Similarly, the proofs of Propositions \ref{prop_property_sparse_mrf_lp} (b) and (c) are the same as those of Proposition 3, Parts 2 and 3 in \cite{Jurek_2021}, respectively.
\end{proof}

\begin{proof}[Proof of Proposition \ref{prop_property_L_L_inv}]

It follows from Proposition 4 in \cite{Jurek_2021} that the number of nonzero elements in each column of $L_t$ is $\mathcal{O}(N)$. Furthermore, by using Proposition \ref{prop_property_sparse_mrf_lp} (b), the same is true for $L_t^{-1}$.
\end{proof}

\begin{proof}[Proof of Proposition \ref{prop_property_sparse_mrf_lp_non-Gaussian}]

  The proof of the statement for the block sparse structure of $\Lambda_{\bm{x}_{t}^{(l)}}$ is based on that of Proposition 2 in \cite{Jurek_2021}. 
  $D_{\bm{x}_{t}^{(l)}}^{1/2} B_{t|t-1}$ can be viewed as a matrix with the same block sparse structure as that of $B_{t|t-1}$. Since $B_{t|t-1}$ is the output of Algorithm \ref{alg_approx_cov_mat}, we have $B_{t|t-1} = (B_{t|t-1}^M \; B_{t|t-1}^{M-1} \; \cdots \; B_{t|t-1}^0)$ where $B_{t|t-1}^m$ ($m=0,\ldots, M$) is the block diagonal matrix whose diagonal element is 
  a matrix of size $|I_{j_1,\ldots,j_m}| \times r_m^{\prime}$ ($1 \le j_i \le J, \; i = 1, \ldots, m$). Consequently, $B_{t|t-1}^{\top} D_{\bm{x}_{t}^{(l)}} B_{t|t-1}$ is a block matrix consisting of $(M+1) \times (M+1)$ blocks, and the block structure of the $(M+1-p, M+1-q)$th block is the same as that of $B_{t|t-1}^{p^{\top}} B_{t|t-1}^{q}$ ($p,q = 0,\ldots,M$). 
  
  In the case of $p \ge q$, we obtain $|I_{j_1,\ldots,j_q}| = \sum_{1 \le j_{q+1} \le J, \ldots , 1 \le j_{p} \le J} |I_{j_1,\ldots,j_p}|$ from $I_{j_1,\ldots,j_q} = \bigcup_{1 \le j_{q+1} \le J, \ldots , 1 \le j_{p} \le J} I_{j_1,\ldots,j_p}$. As a result, we can regard $B_{t|t-1}^{p}$ as the block diagonal matrix with blocks of size $|I_{j_1,\ldots,j_q}| \times J^{p-q} r_p^{\prime}$. Therefore, $B_{t|t-1}^{p^{\top}} B_{t|t-1}^{q}$ is also the block diagonal matrix of size $J^p r_p^{\prime} \times J^q r_q^{\prime}$ with blocks of size $J^{p-q} r_p^{\prime} \times r_q^{\prime}$. Since $\Lambda_{\bm{x}_{t}^{(l)}} = \bm{\mathrm{I}}_{N^{\prime}} + B_{t|t-1}^{\top} D_{\bm{x}_{t}^{(l)}} B_{t|t-1}$, 
  $\Lambda_{\bm{x}_{t}^{(l)}}$ has the same block sparse structure as that of $\Lambda_t$ in Proposition \ref{prop_property_sparse_mrf_lp} (a). 
  This leads to the statement corresponding to Proposition \ref{prop_property_sparse_mrf_lp} (b) 
  for $\Lambda_{\bm{x}_{t}^{(l)}}$, $L_{\bm{x}_{t}^{(l)}}$, and $L_{\bm{x}_{t}^{(l)}}^{-1}$. Thus, from 
  $B_{\bm{x}_{t}^{(l)}} = B_{t|t-1} L_{\bm{x}_{t}^{(l)}}^{{-1}^{\top}}$, we can prove the statement corresponding to Proposition \ref{prop_property_sparse_mrf_lp} (c) for $B_{\bm{x}_{t}^{(l)}}$ instead of $B_{t|t}$. Finally, from the statement corresponding to Proposition \ref{prop_property_sparse_mrf_lp} (b) for $\Lambda_{\bm{x}_{t}^{(l)}}$, $L_{\bm{x}_{t}^{(l)}}$, and $L_{\bm{x}_{t}^{(l)}}^{-1}$, the number of nonzero elements in each column of $L_{\bm{x}_{t}^{(l)}}$ and $L_{\bm{x}_{t}^{(l)}}^{-1}$ is $\mathcal{O}(N)$.
\end{proof}

\def\thesection{Appendix \Alph{section}}
\section{Algorithm of the EnKF for nonlinear and Gaussian state-space models}
\def\thesection{\Alph{section}}
\label{append_comparison_ensemble_kf}

Based on \cite{Katzfuss_2016, Katzfuss_2020}, we used the following algorithm as the EnKF in the nonlinear and Gaussian case.

\begin{algorithm}[Ensemble Kalman filter (EnKF) for nonlinear and Gaussian state-space models]
  \label{alg_enkf}
  
  Given $M^{\prime} > 0$ and $C_t$, find $\bm{\mu}_{t|t}$ ($t=1,\ldots,T$).
  
  \bigskip
  \noindent
  \textit{Step} 1. Generate $\bm{x}_{0|0}^{(i)}$ ($i = 1,\ldots,M^{\prime}$) from $\mathcal{N}_{n_\mathcal{G}} \left( \bm{\mu}_{0|0}, \Sigma_{0|0}  \right)$. Set $t=1$.
  
  \noindent
  \textit{Step} 2. Generate $\bm{w}_{t}^{(i)}$ ($i = 1,\ldots,M^{\prime}$) from $\mathcal{N}_{n_\mathcal{G}} \left( \bm{0}, Q_t \right)$. Calculate $\bm{x}_{t|t-1}^{(i)} = \mathcal{A}_t \left(\bm{x}_{t-1|t-1}^{(i)} \right) + \bm{w}_{t}^{(i)}$ ($i = 1,\ldots,M^{\prime}$).
  
  \noindent
  \textit{Step} 3. Generate $\bm{v}_{t}^{(i)}$ ($i = 1,\ldots,M^{\prime}$) from $\mathcal{N}_{n_t} \left( \bm{0}, R_t \right)$. Calculate $\widehat{K}_t = C_t H_t^{\top} \left( H_t C_t H_t^{\top} + R_t \right)^{-1}$. For $i=1,\ldots,M^{\prime}$, we obtain
  \begin{align*}
    \bm{x}_{t|t}^{(i)} = \bm{x}_{t|t-1}^{(i)} + \widehat{K}_t \left( \bm{y}_t - H_t \bm{x}_{t|t-1}^{(i)} + \bm{v}_{t}^{(i)} \right).
  \end{align*}
Finally, we calculate $\sum_{i=1}^{M^{\prime}} \bm{x}_{t|t}^{(i)}/M^{\prime}$ as $\bm{\mu}_{t|t}$. Set $t=t+1$. If $t \le T$, go to Step 2. Otherwise, go to Step 4.
  
  \noindent
  \textit{Step} 4. Output $\bm{\mu}_{t|t}$ ($t = 1,\ldots,T$).
  
  \end{algorithm}

%
%
\section*{Acknowledgments}
The authors would like to thank Professor Marcin Jurek and Professor Matthias Katzfuss for providing the TPW data. We also acknowledge the helpful comments and suggestions from the 
editor and two anonymous referees that refined the manuscript. This work was supported by JSPS KAKENHI Grant Number JP21K13273.

\section*{Disclosure statement}
The authors report there are no competing interests to declare.

%
%

\bibliographystyle{apalike}
\bibliography{20250224_ref}   


\newpage

\setcounter{page}{1}
\setcounter{equation}{0}
\renewcommand{\theequation}{S\arabic{equation}}
\setcounter{section}{0}
\renewcommand{\thesection}{S\arabic{section}}
\setcounter{lemma}{0}
\renewcommand{\thelemma}{S\arabic{lem}}
\setcounter{theorem}{0}
\renewcommand{\thetheorem}{S\arabic{thm}}
\setcounter{table}{0}
\renewcommand{\thetable}{S\arabic{table}}
\setcounter{figure}{0}
\renewcommand{\thefigure}{S\arabic{figure}}

\begin{center}
  {\LARGE {\bf Supplementary material to ``Multi-resolution filters via linear projection for large spatio-temporal datasets''}}
\end{center}

\section{Derivation of $A_t$ from the advection-diffusion equation}
\label{suppl_advection_diffusion}

For $\bm{s} = (s_1,s_2)^{\top} \in D_0
 \subset \mathbb{R}^2$ and $t \in [0,T]$, we consider the following advection-diffusion equation
\begin{align}
  \frac{\partial}{\partial t} x(\bm{s},t) = \alpha \left( \frac{\partial}{\partial s_1} x(\bm{s},t) + \frac{\partial}{\partial s_2} x(\bm{s},t) \right) + \beta \left( \frac{\partial^2}{\partial s_1^2} x(\bm{s},t)+\frac{\partial^2}{\partial s_2^2} x(\bm{s},t) \right). \label{eq_advection_diffusion_eq}
\end{align}
The first and second terms of \eqref{eq_advection_diffusion_eq} control the advection and the diffusion, respectively. Based on \cite{Xu_2007}, \cite{Stroud_2010}, and \cite{Jurek_2021}, we discretize \eqref{eq_advection_diffusion_eq} over a regular grid and approximate the time and spatial derivatives in \eqref{eq_advection_diffusion_eq} by the first-order forward differences in time and the centered differences in space, respectively. Thus, for positive $\Delta t$, $\Delta s_1$, and $\Delta s_2$, 
\begin{align}
  \frac{\partial}{\partial t} x(\bm{s},t) &\approx 
  \frac{x(\bm{s},t+\Delta t)-x(\bm{s},t)}{\Delta t}, \label{eq_forward_diff_time} \\
  \frac{\partial}{\partial s_1} x(\bm{s},t) &\approx 
  \frac{x(\bm{s}+(\Delta s_1,0)^{\top},t)-x(\bm{s}-(\Delta s_1,0)^{\top},t)}{2 \Delta s_1}, \label{eq_centered_diff_space_s1} \noeqref{eq_centered_diff_space_s1} \\
  \frac{\partial}{\partial s_2} x(\bm{s},t) &\approx 
  \frac{x(\bm{s}+(0, \Delta s_2)^{\top},t)-x(\bm{s}-(0, \Delta s_2)^{\top},t)}{2 \Delta s_2}, \label{eq_centered_diff_space_s2} \noeqref{eq_centered_diff_space_s2} \\
  \frac{\partial^2}{\partial s_1^2} x(\bm{s},t) &\approx 
  \frac{x(\bm{s}-(\Delta s_1,0)^{\top},t) -2 x(\bm{s},t) + x(\bm{s}+(\Delta s_1,0)^{\top},t)}{ (\Delta s_1)^2}, \label{eq_centered_twice_diff_space_s1} \noeqref{eq_centered_twice_diff_space_s1} \\
  \frac{\partial^2}{\partial s_2^2} x(\bm{s},t) &\approx 
  \frac{x(\bm{s}-(0, \Delta s_2)^{\top},t) -2 x(\bm{s},t) + x(\bm{s}+(0, \Delta s_2)^{\top},t)}{ (\Delta s_2)^2}. \label{eq_centered_twice_diff_space_s2}
\end{align}
We take $\Delta t = 1$. By substituting \eqref{eq_forward_diff_time}--\eqref{eq_centered_twice_diff_space_s2} into \eqref{eq_advection_diffusion_eq}, we obtain
\begin{align}
    x(\bm{s},t+1) = &c_0 x(\bm{s},t) + c_{-1} x(\bm{s}-(\Delta s_1,0)^{\top},t) + c_1 x(\bm{s}+(\Delta s_1,0)^{\top},t) \\
  &+ c_{-2} x(\bm{s}-(0, \Delta s_2)^{\top},t) + c_2 x(\bm{s}+(0, \Delta s_2)^{\top},t), \label{eq_discretize_advection_diffusion_eq}
\end{align}
where 
\begin{align}
c_0 &= 1-\frac{2\beta}{(\Delta s_1)^2}-\frac{2\beta}{(\Delta s_2)^2},\\
c_{-1} &= \frac{\beta}{(\Delta s_1)^2}-\frac{\alpha}{2 \Delta s_1},\\
c_1 &= \frac{\beta}{(\Delta s_1)^2}+\frac{\alpha}{2 \Delta s_1},\\
c_{-2} &= \frac{\beta}{(\Delta s_2)^2}-\frac{\alpha}{2 \Delta s_2},\\
c_2 &= \frac{\beta}{(\Delta s_2)^2}+\frac{\alpha}{2 \Delta s_2}.
\end{align}

If we have $( x(\bm{g}_1,t), \ldots, x(\bm{g}_{n_{\mathcal{G}}},t))^{\top}$ 
on a grid $\mathcal{G} = \{ \bm{g}_1, \ldots, \bm{g}_{n_\mathcal{G}} \}$, and $\Delta s_1$ and $\Delta s_2$ correspond to the interval between adjacent grid points along each dimension, we can obtain $A_t$ from \eqref{eq_discretize_advection_diffusion_eq}. 
If there are no adjacent variables for $x(\bm{s},t)$ at the edge of the grid points $\mathcal{G}$, we take the corresponding coefficients in \eqref{eq_discretize_advection_diffusion_eq} to be equal to zero. Finally, each row of $A_t$ includes at most five nonzero elements given by $c_0$, $c_{-1}$, $c_1$, $c_{-2}$, $c_2$. We substitute this $A_t$ into linear state-space models.

In the simulation studies, except for the fifth one in Section \ref{subsec_simulation_study}, we set $\alpha = 0.01$, $\beta=0.0002$, $\Delta s_1 = \Delta s_2 = 1/35$ for $A_t$. In this case, the coordinates of the grid points were $(1/35,1/35), (1/35,2/35),\ldots, (34/35,34/35)$. In Figure \ref{fig_simulation_examples_advection_diffusion}, we provide examples of sample realizations of \eqref{eq_state_equation} in the baseline of Section \ref{subsec_simulation_study}.
\begin{figure}[t]
  \centering
   \subfigure[$t=1$]{
    \includegraphics[width = 6.1cm,pagebox=artbox,clip]{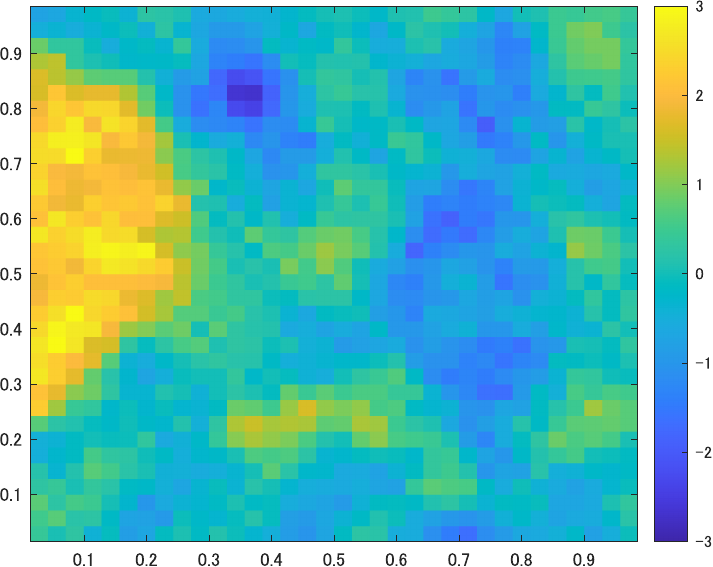}}
   \subfigure[$t=7$]{ 
    \includegraphics[width = 6.1cm,pagebox=artbox,clip]{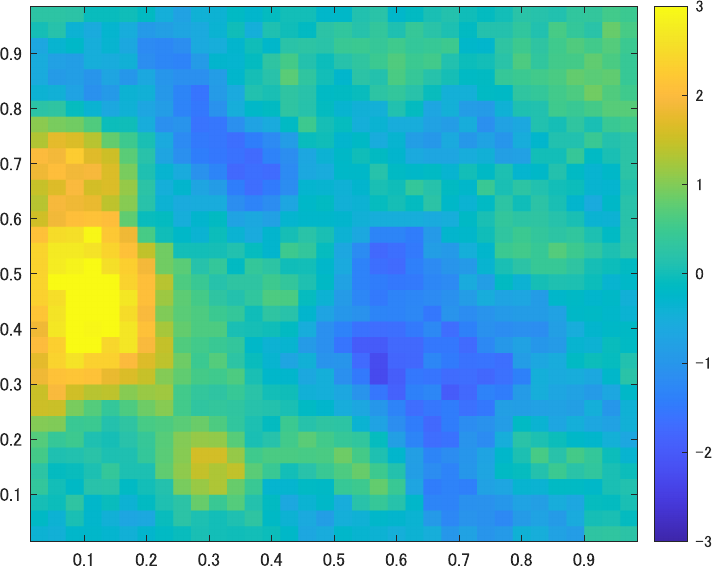}}
   \subfigure[$t=13$]{
    \includegraphics[width = 6.1cm,pagebox=artbox,clip]{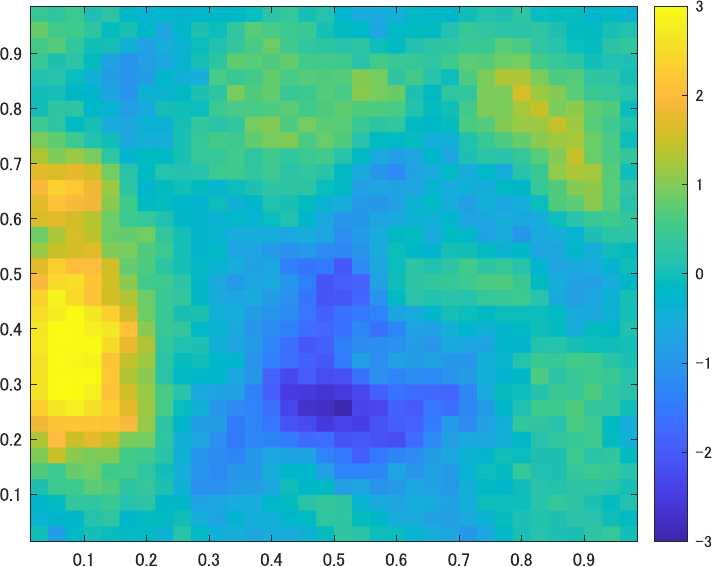}}
   \subfigure[$t=20$]{
    \includegraphics[width = 6.1cm,pagebox=artbox,clip]{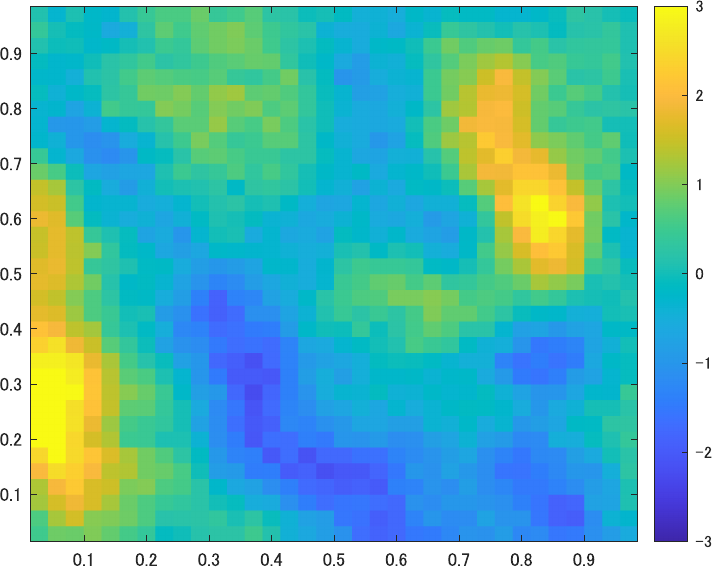}}
    \caption{Examples of sample realizations of \eqref{eq_state_equation} by using the discretization of the advection-diffusion equation.}
  \label{fig_simulation_examples_advection_diffusion} 
\end{figure}
In the real data analysis of Section \ref{subsec_real_data}, we employ $\alpha = 0$, $\beta=0.000003$, $\Delta s_1 = \Delta s_2 = 1/101$ based on \cite{Jurek_2023}.

\section{Derivation of $\mathcal{A}_t$ from  \cite{Lorenz_2005}}
\label{suppl_lorenz_2005}

To derive $\mathcal{A}_t$ in nonlinear state-space models, based on \cite{Jurek_2022}, we consider the system of partial differential equations in \cite{Lorenz_2005}, which is given by
\begin{align}
  \frac{\partial}{\partial t} x(s_i,t) = x(s_{i-1},t) \left( x(s_{i+1},t) - x(s_{i-2},t) \right) - x(s_i,t) +F,
 \label{eq_lorenz_model_I}
\end{align}
where $s_i$ ($i=1,\ldots,n_{\mathcal{G}}$) is a regular grid on a unit circle, $F$ means external forcing, and
\begin{align}
  x(s_{i},t) =
    \begin{cases}
      x(s_{i+n_{\mathcal{G}}},t), \quad  i<1,\\
      x(s_{i-n_{\mathcal{G}}},t), \quad  i > n_{\mathcal{G}}.
    \end{cases}
\end{align}
\eqref{eq_lorenz_model_I} models the scalar atmospheric quantity on a latitude circle.

We solve \eqref{eq_lorenz_model_I} numerically by using the fourth-order Runge--Kutta method and obtain the explicit expression of $\mathcal{A}_t$, which is used in nonlinear state-space models. Additionally, we calculate the Jacobian matrix $A_t^{\prime}$ from $\mathcal{A}_t$. It is straightforward to derive $A_t^{\prime}$, but its expression is somewhat complicated. In the simulation studies, except for the fifth one in Section \ref{subsec_simulation_study}, we set $F=0.5$. Figure \ref{fig_simulation_nonlinear_examples_Lorenz} illustrates examples of sample realizations of \eqref{eq_system_model_nonlinear} with $\mathcal{A}_t$ used in the third and fourth simulation studies in Section \ref{subsec_simulation_study}.

\begin{figure}[t]
  \centering
   \subfigure[$t=1$]{
    \includegraphics[width = 6.1cm,pagebox=artbox,clip]{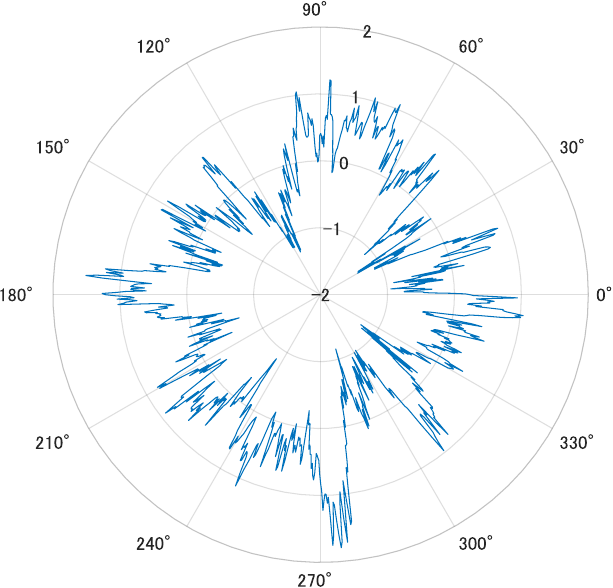}}
   \subfigure[$t=7$]{ 
    \includegraphics[width = 6.1cm,pagebox=artbox,clip]{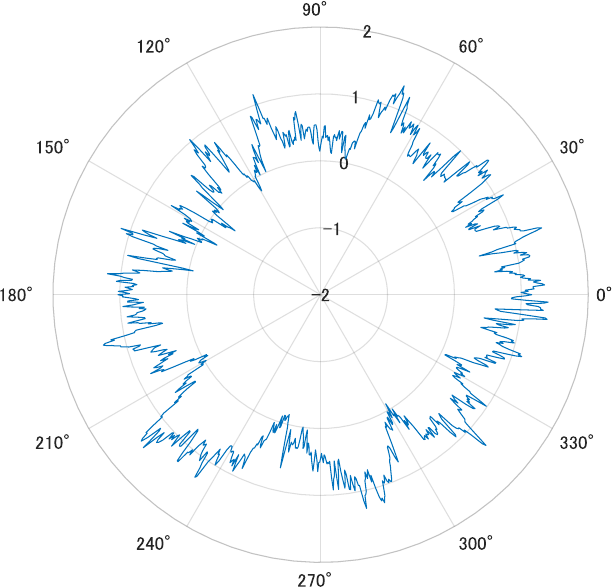}}
   \subfigure[$t=13$]{
    \includegraphics[width = 6.1cm,pagebox=artbox,clip]{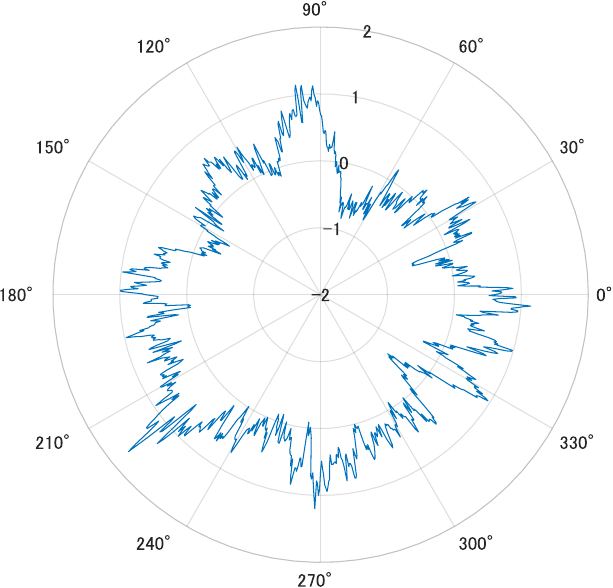}}
   \subfigure[$t=20$]{
    \includegraphics[width = 6.1cm,pagebox=artbox,clip]{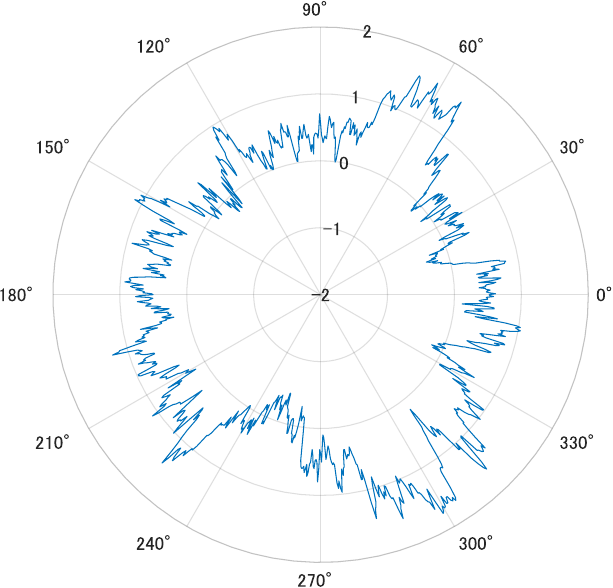}}
    \caption{
    Examples of sample realizations of \eqref{eq_system_model_nonlinear} 
    with $\mathcal{A}_t$ obtained by applying the fourth-order Runge--Kutta method to \eqref{eq_lorenz_model_I}.
    }
  \label{fig_simulation_nonlinear_examples_Lorenz} 
\end{figure}

\end{document}